\DeclareMathOperator{\tr}{tr}
\DeclareMathOperator{\supp}{supp}
\newcommand{\dd}{\mathrm{d}}
\newcommand{\R}{\mathbb{R}}
\newcommand{\C}{\mathbb{C}} 
\newcommand{\N}{\mathbb{N}}
\newcommand{\p}{\partial}
\renewcommand{\b}[1]{\mathbf{#1}}
\renewcommand{\u}[1]{\underline{#1}}
\newcommand{\s}[1]{\displaystyle{\not}#1}
\newcommand{\secs}[1]{$\widehat{\text{\bf \small (S#1)}}$}
\newcommand{\iv}{q}
\newcommand{\smallSHS}{\b{c}_2}
\newcommand{\smallSHSprime}{\b{c}_2'}
\newcommand{\smallRho}{\b{c}_1}
\newcommand{\smallEE}{\b{c}_3}
\newcommand{\bigEE}{\b{c}_4}
\newcommand{\smallEEProof}{\b{c}_{\ast}}
\newcommand{\smallRxx}{\b{c}_6}
\newcommand{\smallYxx}{\b{c}_{7}}
\newcommand{\bigSobolev}{\b{c}_{8}}
\newcommand{\smallFinal}{\b{b}}
\newcommand{\smallFinalPrime}{\b{c}}
\newcommand{\anisotropic}{{\mathfrak A}}
\newcommand{\asympt}[1]{\s{#1}}
\newcommand{\shs}[1]{{\mathfrak #1 }}
\newcommand{\Minkowski}{\mathcal{M}}
\newcommand{\PhiOne}{e}
\newcommand{\PhiTwo}{\gamma}
\newcommand{\PhiThree}{w}
\newcommand{\PsiOne}{f}
\newcommand{\PsiTwo}{\omega}
\newcommand{\PsiThree}{z}
\newcommand{\PsiSharpOne}{s}
\newcommand{\PsiSharpTwo}{p}
\newcommand{\PsiSharpThree}{y}
\newcommand{\SpecPsiOne}{h}
\newcommand{\SpecPsiTwo}{\sigma}
\newcommand{\SpecPsiThree}{\ell}
\newcommand{\datafunc}{\eta}
\renewcommand{\labelitemi}{$\bullet$}
\newcommand{\range}{\mathcal{R}}
\newcommand{\ranges}{\widehat{\mathcal{R}}}
\spnewtheorem{convention}[theorem]{Convention}{\it}{\rm}
\newcommand{\FLIP}{\text{\bf \small Flip}}
\newcommand{\STRIP}{\text{\bf \small Strip}}
\newcommand{\source}{\text{\bf \small Src}}
\newcommand{\SUP}{\text{\bf \small Sup}}
\begin{document}

\rule{0pt}{0pt}
\vskip 40mm 
{\bf \Large
\noindent Strongly Focused Gravitational Waves
}
\vskip 4mm
\noindent {\bf Michael Reiterer, Eugene Trubowitz}
\vskip 2mm
\noindent {Department of Mathematics, ETH Zurich, Switzerland}
\vskip 4mm
\noindent {\bf Abstract:}
This paper contains a new proof of the formation of trapped spheres, in vacuum spacetimes, by the focusing of gravitational waves, from generic data. The first such result was obtained by Christodoulou \cite{Chr}. We exploit the same physical mechanism, but give a logically independent construction of these spacetimes.

%
%
%
%
%
\section{Introduction}
The vacuum spacetimes $(M,g)$ constructed in this paper have the following properties:
\vskip 1mm
\begin{center}
\input{vaidya.pstex_t}
\end{center}
\begin{enumerate}
\item $M = \big\{ (x^{\mu})_{\mu=1,2,3,4} = (\xi^1,\xi^2,\u{u},u) \in \R^2 \times (0,1) \times (-\infty,u_0)\big\}$ with $u_0 < 0$.
The plane $\R^2$ is identified with a punctured $S^2$ by stereographic projection.
The metric $g$ extends smoothly to $S^2 \times (0,1) \times (-\infty,u_0)$.
\item $g^{-1} = g^{ab}F_a\otimes F_b$, where $F_a = {F_a}^{\mu} \tfrac{\p}{\p x^{\mu}}$ are complex vector fields and
$$
({F_a}^{\mu}) = \begin{pmatrix}
\ast & \ast & 0 & 0 \\
\ast & \ast & 0 & 0 \\
\ast & \ast & 0 & 1 \\
0 & 0 & \ast & 0 
\end{pmatrix}
\qquad (g^{ab}) = \begin{pmatrix}
0 & 1 & 0 & 0 \\
1 & 0 & 0 & 0 \\
0 & 0 & 0 & -1\\
0 & 0 & -1 & 0
\end{pmatrix}
$$
The four tuple $F = (F_a)$ is a complex frame. The indices $a,b = 1,2,3,4$. Asterisk entries $\ast$ can be nonzero. Some are nonzero, because $F$ is a frame.
We require: $F_3+F_4$ is future-directed, ${F_4}^3 > 0$ and $$\overline{(F_1,F_2,F_3,F_4)} = (F_2,F_1,F_3,F_4)$$ Thus, $g^{-1} = g^{ab} F_a \otimes F_b$ is real and has signature $(-,+,+,+)$.
\item
For all $0 < \u{u} <1/2$,
\begin{equation*}
({F_a}^{\mu}) = \begin{pmatrix}
\rho^{-1} \mathbf{e} &\;&+i\rho^{-1} \mathbf{e} &\;& 0 && 0 \\
\rho^{-1} \mathbf{e} && -i\rho^{-1} \mathbf{e} && 0 && 0 \\
0 && 0 && 0 && 1 \\
0 && 0 && 1 && 0 
\end{pmatrix}\hskip 14mm \begin{aligned}\rho & = \u{u} - u \\ \mathbf{e} & = \tfrac{1}{2}\big(1+(\xi^1)^2 + (\xi^2)^2\big)\end{aligned}
\end{equation*}
The subset $0 < \u{u} < 1/2$ of $M$ is isometric to a subset of Minkowski spacetime.
The isometry is given by $\u{u} = 2^{-1/2}(t+r)$, $u = 2^{-1/2}(t-r)$, with standard Minkowski time $t$, radius $r$, and standard stereographic coordinates $(\xi^1,\xi^2)$. Thus, the past light cone $t + r \leq 0$ in Minkowski spacetime can be smoothly attached to $M$ along $\u{u}=0$.
\item There is a map $\text{\small \bf DATA}: \R^2 \times (0,1) \to \C$ with
$$
\lim_{u \to -\infty} u^2({F_1}^1 - \rho^{-1}\mathbf{e})
= 
\lim_{u \to -\infty} i u^2({F_1}^2 - i\rho^{-1}\mathbf{e})
= 
\mathbf{e} \textstyle\int_0^{\u{u}} \dd \u{u}'\, \text{\small \bf DATA}(\xi^1,\xi^2,\u{u}')
$$
The limits are taken at constant $\xi^1$, $\xi^2$, $\u{u}$. Here, $u \to -\infty$ is interpreted as \emph{past null infinity}, and $\text{\small \bf DATA}$ as initial data at past null infinity. It vanishes when $\u{u} < \tfrac{1}{2}$. Informally, $\text{\small \bf DATA}$ describes the incoming radiation. $\text{\small \bf DATA}$  uniquely determines the metric on $M$,
if $\lim_{u\to-\infty} {F_4}^3=1$ and we make technical assumptions about the decay  as $u\to -\infty$. These assumptions are not discussed in this introduction.
\item 
The uniqueness statement in the last item is possible because the gauge has already been completely fixed\footnote{Actually, there is still the local $U(1)$ gauge degree of freedom $(F_1,F_2) \to (e^{+i\theta} F_1, e^{-i\theta} F_2)$. It is fixed by a transport equation, but this is not discussed in this introduction.}, by the asterisk pattern of $({F_a}^{\mu})$. The three zeros in the third (resp. fourth) column imply that $\u{u}$ (resp. $u$) is a solution to the eikonal equation
$g^{ab}F_a(\u{u})F_b(\u{u}) = 0$ (resp. $g^{ab}F_a(u)F_b(u) = 0$)
and that its gradient\footnote{The gradient of $f: M \to \R$ is the vector field $g^{ab}F_a(f) F_b$.} is $-{F_4}^3F_3$ (resp. $-F_4$). The two zeros in the lower-left corner imply that $\xi^1,\xi^2$ are transported along $F_4$. \emph{Thus, two eikonal equations, two transport equations}, together with $\lim_{u\to-\infty} {F_4}^3=1$ and the Minkowskian data for $0 < \u{u} < \tfrac{1}{2}$, fix the coordinates.
\item There is an explicit expansion in powers of $\tfrac{1}{u}$ for the frame,
for $u$ large negative, with rigorous error estimates.
\item The intersections of level sets of $u$ and $\u{u}$ are spacelike 2-dimensional spheres. For a generic class of initial data, some of them are trapped, in the sense that ``the two systems of null geodesics which meet [the surface] orthogonally converge locally in future directions at [the surface]'' \cite{Pen}. The existence of trapped spheres is read off from the lowest order term of the $\tfrac{1}{u}$ expansion.
\item Let $\|\text{\small \bf DATA}\|$ be a $C^{10}$ norm\footnote{This norm takes the two patches on $S^2$ into account.} of $\text{\small \bf DATA}$. The analysis is tailored to large $\|\text{\small \bf DATA}\|$.
However, the bigger $\|\text{\small \bf DATA}\|$, the smaller the interval $u \in (-\infty,u_0)$, with $u_0<0$, that we control. More precisely,
 $|u_0| \sim \|\text{\small \bf DATA}\|$ as 
$\|\text{\small \bf DATA}\| \to \infty$.
\end{enumerate}
The discussion above is framed in what we call the \emph{High Amplitude Picture}. Another picture, the \emph{Regularized Picture}, is better suited for giving proofs, and is used in almost the entire paper, including the statement of the main Theorem \ref{mainenergy}. The two pictures are related by a symmetry transformation\footnote{A symmetry of the vacuum Einstein equations in the frame formalism of Newman, Penrose, Friedrich.}, see Section \ref{subsec:tpovXXXXX}. Modulo this transformation, Theorem \ref{mainenergy} is precise about the technical issues that were glossed over in the above discussion. A third picture is introduced to compare the present paper with \cite{Chr}, here referred to as the \emph{Finite Mass Picture}.

Christodoulou \cite{Chr} implements the focusing of gravitational waves by a dedicated geometric optics argument, that he calls the `short pulse method'. It consists of an appropriate geometric setup, not unlike Vaidya's \cite{Vai}, and a self-consistent scheme of bounds for all the unknown quantities in terms of a small parameter $\delta > 0$. Many bounds are of the singular form $\mathcal{O}(\delta^{-\alpha})$, with $\alpha > 0$.

Our results are stronger than \cite{Chr} in the sense that we construct semi-global solutions (all the way to past null infinity)\footnote{Christodoulou underscores the importance of the semi-global limit \cite[p. 4]{Chr}: ``[...] the physically interesting problem is the problem where the initial conditions are of arbitrarily \emph{low compactness}, that is, arbitrarily far from already containing closed trapped surfaces [...]''.}, obtain an explicit asymptotic expansion, and control the solution for a longer time. Our results are slightly weaker from the point of view of regularity\footnote{The amplitude $\|\text{\small \bf DATA}\|$ can be taken to be $C^7$ in \cite{Chr}, as opposed to $C^{10}$ in this paper.}. However, the main point of this paper is its methodology. Here, it yields a new and very short construction of semi-global solutions to the vacuum Einstein equations with trapped spheres that form in evolution.

We use the orthonormal frame formalism of Newman and Penrose \cite{NP}, and the key idea of hyperbolic reduction to  symmetric hyperbolic systems due to Friedrich \cite{Fr}.
Our hyperbolic reduction is new, directly in a double null gauge (Section \ref{sec:geomety}).
The equations are brought into a relevant\,/\,irrelevant form that exhibits the essential constituents that have to be treated carefully, and sweeps everything else into `generic terms' that one doesn't need to know much about (Section \ref{sec:ansatz} and the program in Appendix \ref{app:appprog}). Formal power series solutions in $\tfrac{1}{u}$ (Section \ref{sec:formalsolution}) and
energy estimates (Section \ref{sec:energyestimates}) are combined to construct classical solutions (Section \ref{sec:classicalvacuumfields}). Under certain generic conditions, they contain trapped spheres (Section \ref{sec:conclusionsXX}). 

The initial value problem with data along the asymptotic characteristic surfaces $u = -\infty$, see above, is solved by taking the limit of a sequence of solutions to initial value problems with data on the spacelike level sets of $u + \u{u}$. The elements of the sequence have no direct physical meaning, because they do not necessarily satisfy the constraint equations. However, the semiglobal limit does.

A longer version of this paper, in which even standard proofs are written out in all detail, can be found on the \emph{arxiv} preprint server\footnote{http://arxiv.org/abs/0906.3812 (version v1, June 2009)}. The \emph{arxiv} version in addition includes a proof of a `Minkowski to Schwarzschild transition', and introduces an expansion, more powerful than the $\tfrac{1}{u}$ expansion, that can be used to continue the solutions further.

We appreciate the great effort that Demetrios Christodoulou invested over many years to nurture the mathematical study of general relativity at ETH Zurich.

We thank Lydia Bieri, Joel Feldman, Horst Kn\"orrer and Martin Lohmann for encouragement and helpful conversations.

\section{Gauge fixing and hyperbolic reduction} \label{sec:geomety}
This section uses the Newman-Penrose-Friedrich formalism, see \cite{NP} and \cite{Fr}.
\begin{convention} Small Latin frame indices and small Greek coordinate indices run from 1 to 4. Pairs such as $(ab)$ run over the ordered sequence
$(12),\hskip-1pt(31),\hskip-1pt(32),\hskip-1pt(41),\hskip-1pt(42),\hskip-1pt(34)$.
\end{convention}
\begin{definition}[Parametrization] \label{def:paramparam} Introduce the vector space of real dimension 31:
\begin{equation*} 
\range = \big\{(\PhiOne,\PhiTwo,\PhiThree) \in \C^5 \oplus \C^8 \oplus \C^5\; \big| \; \PhiOne_3,\, \PhiOne_4,\, \PhiOne_5,\, \PhiTwo_2,\,\PhiTwo_6 \in \R \,\big\}
\end{equation*}
Let $\mathrm{Param}: \range \ni (e,\gamma,w) \mapsto (F,\Gamma,W)$ be given by
\begin{align*}
\big( {F_a}^\mu\big)
 &= 
\begin{pmatrix} e_1 & e_2 & 0 & 0 \\ 
\overline{e}_1 & \overline{e}_2 & 0 & 0 \\
e_4 & e_5 & 0 & 1 \\
0 & 0 & e_3 & 0 
\end{pmatrix}\\ \displaybreak[0]
\big(\Gamma_{a(bc)}\big)
 &=   \begin{pmatrix}  \phantom{-} \, \PhiTwo_3 \, + \, \overline{\PhiTwo}_4 \,  \hskip1mm&\hskip1mm   \, \overline{\PhiTwo}_7 \,  \hskip1mm&\hskip1mm   \, \PhiTwo_6 \,  \hskip1mm&\hskip1mm   \, \PhiTwo_1 \,  \hskip1mm&\hskip1mm   \, \PhiTwo_2 \,  \hskip1mm&\hskip1mm  \phantom{-}\, \PhiTwo_3 \, - \, \overline{\PhiTwo}_4 \,  \\ - \, \PhiTwo_4 \, - \, \overline{\PhiTwo}_3 \,  \hskip1mm&\hskip1mm   \, \PhiTwo_6 \,  \hskip1mm&\hskip1mm   \, \PhiTwo_7 \,  \hskip1mm&\hskip1mm   \, \PhiTwo_2 \,  \hskip1mm&\hskip1mm   \, \overline{\PhiTwo}_1 \,  \hskip1mm&\hskip1mm - \, \PhiTwo_4 \, + \, \overline{\PhiTwo}_3 \,  \\   \phantom{-} \, \PhiTwo_8 \, - \, \overline{\PhiTwo}_8 \,  \hskip1mm&\hskip1mm 0 \hskip1mm&\hskip1mm 0 \hskip1mm&\hskip1mm - \, \PhiTwo_3 \, + \, \overline{\PhiTwo}_4 \,  \hskip1mm&\hskip1mm   \, \PhiTwo_4 \, - \, \overline{\PhiTwo}_3 \,  \hskip1mm&\hskip1mm   \phantom{-}\, \PhiTwo_8 \, + \, \overline{\PhiTwo}_8 \,  \\ \phantom{-}0 \hskip1mm&\hskip1mm   \, \overline{\PhiTwo}_5 \,  \hskip1mm&\hskip1mm   \, \PhiTwo_5 \,  \hskip1mm&\hskip1mm 0 \hskip1mm&\hskip1mm 0 \hskip1mm&\hskip1mm \phantom{-}0\end{pmatrix}
\\ \displaybreak[0]
\big( W_{(ab)(cd)}\big)
 &= \begin{pmatrix} 
 \, \PhiThree_3 \, + \, \overline{\PhiThree}_3 \,  \hskip1mm&\hskip1mm   \phantom{-}\, \overline{\PhiThree}_4 \,  \hskip1mm&\hskip1mm - \, \PhiThree_4 \,  \hskip1mm&\hskip1mm   \phantom{-}\, \PhiThree_2 \,  \hskip1mm&\hskip1mm - \, \overline{\PhiThree}_2 \,  \hskip1mm&\hskip1mm   \, \PhiThree_3 \, - \, \overline{\PhiThree}_3 \,  \\ 
  \phantom{-}\, \overline{\PhiThree}_4 \,  \hskip1mm&\hskip1mm   \phantom{-}\, \overline{\PhiThree}_5 \,  \hskip1mm&\hskip1mm \phantom{-} 0 \hskip1mm&\hskip1mm \phantom{-} 0 \hskip1mm&\hskip1mm - \, \overline{\PhiThree}_3 \,  \hskip1mm&\hskip1mm - \, \overline{\PhiThree}_4 \,  \\
 - \, \PhiThree_4 \,  \hskip1mm&\hskip1mm \phantom{-} 0 \hskip1mm&\hskip1mm   \phantom{-}\, \PhiThree_5 \,  \hskip1mm&\hskip1mm - \, \PhiThree_3 \,  \hskip1mm&\hskip1mm \phantom{-} 0 \hskip1mm&\hskip1mm - \, \PhiThree_4 \,  \\
  \phantom{-} \, \PhiThree_2 \,  \hskip1mm&\hskip1mm \phantom{-} 0 \hskip1mm&\hskip1mm - \, \PhiThree_3 \,  \hskip1mm&\hskip1mm   \phantom{-}\, \PhiThree_1 \,  \hskip1mm&\hskip1mm \phantom{-} 0 \hskip1mm&\hskip1mm   \phantom{-}\, \PhiThree_2 \,  \\
 - \, \overline{\PhiThree}_2 \,  \hskip1mm&\hskip1mm - \, \overline{\PhiThree}_3 \,  \hskip1mm&\hskip1mm \phantom{-} 0 \hskip1mm&\hskip1mm \phantom{-} 0 \hskip1mm&\hskip1mm   \phantom{-}\, \overline{\PhiThree}_1 \,  \hskip1mm&\hskip1mm   \phantom{-}\, \overline{\PhiThree}_2 \,  \\  
 \, \PhiThree_3 \, - \, \overline{\PhiThree}_3 \,  \hskip1mm&\hskip1mm - \, \overline{\PhiThree}_4 \,  \hskip1mm&\hskip1mm - \, \PhiThree_4 \,  \hskip1mm&\hskip1mm  \phantom{-} \, \PhiThree_2 \,  \hskip1mm&\hskip1mm  \phantom{-} \, \overline{\PhiThree}_2 \,  \hskip1mm&\hskip1mm   \, \PhiThree_3 \, + \, \overline{\PhiThree}_3 \, \end{pmatrix}

\end{align*}
and $\Gamma_{abc} = - \Gamma_{acb}$ and $W_{abcd} = - W_{bacd} = - W_{abdc}$.
\end{definition}
\begin{definition}
\begin{equation}\label{gabgab} 
\big(\b{g}_{ab}\big)  = 
\begin{pmatrix}
0 & 1 & 0 & 0\\
1 & 0 & 0 & 0\\
0 & 0 & 0 & - 1\\
0 & 0 & - 1 & 0
\end{pmatrix}
\qquad 
\big(\b{g}^{ab}\big)  =
\begin{pmatrix}
0 & 1 & 0 & 0\\
1 & 0 & 0 & 0\\
0 & 0 & 0 & - 1\\
0 & 0 & - 1 & 0
\end{pmatrix}
\end{equation}
\end{definition}
\begin{convention} Indices are raised and lowered with $(\b{g}_{ab})$ and its matrix inverse $(\b{g}^{ab})$.
\end{convention}
\begin{remark} \label{remehekhee}
$W_{abcd} = W_{cdab}$ and $W_{ijka} + W_{jkia} + W_{kija} = 0$ and
${W^a}_{iaj} = 0$.
Equivalently, $W$ has the algebraic symmetries of a Weyl tensor with respect to $(\b{g}_{ab})$.
And
${F_{a'}}^{\mu} = \overline{{F_a}^{\mu}}$,
$\Gamma_{a'b'c'} = \overline{\Gamma_{abc}}$,
$W_{a'b'c'd'} = \overline{W_{abcd}}$ with $(1',2',3',4') = (2,1,3,4)$.
\end{remark}
\begin{definition}[Vacuum field] \label{def:vacvacvac}
Let $\mathcal{U}\subset \R^4$ be open and $(x^{\mu})$ standard Cartesian coordinates on $\R^4$. 
A sufficiently differentiable field $\Phi = (e,\gamma,w) : \mathcal{U}\to \mathcal{R}$ is a \emph{vacuum field} iff $(F,\Gamma,W) = \mathrm{Param}\circ (e,\gamma,w)$ satisfies (a), (b), (c):
\begin{enumerate}
\item[(a)] $(F_a) = ({F_a}^{\mu}\tfrac{\p}{\p x^{\mu}})$ is a complex frame on $\mathcal{U}$, that is $\det ({F_a}^{\mu}) \neq 0$. And ${F_4}^3 > 0$.
\end{enumerate}
Let $g$ be the real, Lorentzian metric with real\footnote{Reality follows from $\overline{(F_1,F_2,F_3,F_4)} = (F_2,F_1,F_3,F_4)$.} inverse $g^{-1} = \b{g}^{ab}F_a \otimes F_b$.
\begin{enumerate}
\item[(b)] The connection given by $\nabla_{F_a}F_b = {\Gamma_{ab}}^cF_c$ is the Levi-Civita connection of $g$. 
\item[(c)] $g([\nabla_{F_j},\nabla_{F_k}] F_b - \nabla_{[F_j,F_k]}F_b,\,F_a) = W_{abjk}$.
\end{enumerate}
Observe that (a), (b), (c)
and Remark \ref{remehekhee} imply that $g$ is Ricci-flat.
\end{definition}
\begin{proposition}[Geometric interpretation of the gauge fixed by a vacuum field] \label{prop:fdkhfdkhfdkh} Suppose $\Phi = (e,\gamma,w)$ is a vacuum field.
Throughout this paper, we use the notation
$$(x^{\mu}) = (\xi^1,\xi^2,\u{u},u)
\qquad (F_a) = (D,\overline{D},N,L)
$$
Then, with respect to the metric $g$ and its Levi-Civita connection $\nabla$:\\
\emph{Part 1.} $u$ and $\u{u}$ solve the eikonal equation $g^{-1}(\dd u,\dd u) = g^{-1}(\dd \u{u},\dd \u{u}) = 0$ with $g^{-1}(\dd u,\dd \u{u}) < 0$. The real vector field $L$ is minus the gradient of $u$, that is $L = -g^{-1}(\dd u,\,\cdot\,)$, and $e_3 = L(\u{u}) > 0$. The real vector field $e_3N$ is minus the gradient of $\u{u}$, that is
 $e_3N = -g^{-1}(\dd \u{u},\,\cdot\,)$. The coordinates $\xi^1,\xi^2$ satisfy the transport equations $L(\xi^1) = L(\xi^2) = 0$. The complex vector field $D$ is such that $2^{1/2}(\Re D,\Im D)$ is a real orthonormal frame for the spacelike $\ker \dd u \cap \ker \dd \u{u}$, and $D$ satisfies the transport equation $g(\nabla_L D,\overline{D}) = 0$.\\
\emph{Part 2.} Declare $N+L$ to be future-directed and let $S_{\u{u},u}$ be the intersection of the level sets of $\u{u}$ and $u$. The traces of the future-directed second fundamental forms of 
 $S_{\u{u},u}$ relative to $L$ and $N$ are  $g(\nabla_D L,\overline{D}) + g(\nabla_{\overline{D}}L,D) = 2\PhiTwo_2$ and
$g(\nabla_D N,\overline{D}) + g(\nabla_{\overline{D}}N,D) = 2\PhiTwo_6$. Recall that
$\PhiTwo_2$ and $\PhiTwo_6$ are real. The traceless parts of the second fundamental forms are determined by the complex $\PhiTwo_1$ and $\PhiTwo_7$.
\end{proposition}
\begin{proof} 
These are transcriptions of different properties of the matrices in Definition \ref{def:paramparam}, in the context of Definition \ref{def:vacvacvac}. For instance, $u$ solves the Eikonal equation, because
$g^{-1}(\dd u,\dd u) = \mathbf{g}^{ab}F_a(u)F_b(u) = 2{F_1}^4{F_2}^4 - 2{F_3}^4{F_4}^4 = 0.$ \qed
\end{proof}
\begin{remark}\label{trappedpenrose}
$S_{\u{u},u}$ is locally \emph{trapped}, in the sense of  \cite{Pen}, if and only if $\gamma_2,\gamma_6 < 0$.
\end{remark}
\begin{proposition}[Local realizability of the gauge fixed by a vacuum field]
Every Ricci-flat Lorentzian manifold is locally isometric to a pair $(\mathcal{U},g)$ 
as in Definition \ref{def:vacvacvac}.
\end{proposition}
\begin{proof} Given $g$ and its Levi-Civita connection $\nabla$ on a manifold, Proposition \ref{prop:fdkhfdkhfdkh} Part 1 is a step by step outline for the local construction of $u$, $\u{u}$, $L$, $e_3$, $N$, $\xi^1$, $\xi^2$, $D$. The construction is not unique, because one has to specify initial data for the eikonal and transport equations. The construction yields a field $\Phi$ with the desired properties. \qed
\end{proof}
\begin{definition}\label{def:fdhshskhd}
To any sufficiently differentiable $\Phi: \mathcal{U}\to \mathcal{R}$ associate $(T,U,V)$ by:
\begin{alignat*}{4}
{T_{a b}}^\mu  &=\,&&   F_b({F_a}^\mu)- F_a ({F_b}^\mu) + {\Gamma_{a b}}^c{F_c}^\mu  - {\Gamma_{b a}}^c {F_c}^\mu\\
U_{k \ell a b} &=\,&&  F_a(\Gamma_{b\ell k}) - F_b(\Gamma_{a\ell k}) + {\Gamma_{b \ell}}^m {\Gamma_{a m k}} - 
{\Gamma_{a\ell}}^m \Gamma_{b m k}\\
&\,&&\qquad - ({\Gamma_{a b}}^m - {\Gamma_{b a}}^m){\Gamma_{m \ell k} } - W_{k\ell ab}\\
V_{abijk}  &= && F_i(W_{abjk}) - {\Gamma_{ia}}^mW_{mbjk} - {\Gamma_{ib}}^m W_{amjk} - {\Gamma_{ij}}^m W_{abmk} - {\Gamma_{ik}}^m W_{abjm}  + \\  &&&  F_k(W_{abij}) - {\Gamma_{ka}}^mW_{mbij} - {\Gamma_{kb}}^m W_{amij} - {\Gamma_{ki}}^m W_{abmj} - {\Gamma_{kj}}^m W_{abim} + \\  &&&  F_j(W_{abki}) - {\Gamma_{ja}}^mW_{mbki} - {\Gamma_{jb}}^m W_{amki} - {\Gamma_{jk}}^m W_{abmi} - {\Gamma_{ji}}^m W_{abkm}
\end{alignat*}
Here $\Phi = (e,\gamma,w)$ and $(F,\Gamma,W) = \mathrm{Param} \circ (e,\gamma,w)$.
\end{definition}
\begin{proposition}[Equivalent characterization of a vacuum field] \label{prop,fdldfkf} $\Phi: \mathcal{U}\to \mathcal{R}$ is a vacuum field if and only if  $e_3 > 0$ and $\Im(e_1\overline{e}_2) \neq 0$ 
and $(T,U,V) = 0$.
\end{proposition}
\begin{proof}
First, $\det ({F_a}^{\mu}) = -2i e_3 \Im(e_1\overline{e}_2)$ and ${F_4}^3 = e_3$. Then
 $T=0$ iff Definition \ref{def:vacvacvac} (b) holds, $T$ is torsion;
 $U=0$ iff Definition \ref{def:vacvacvac} (c) holds, given (b);
 $V=0$ iff the Weyl field associated to $W$ satisfies the differential Bianchi identities.
 $V=0$ is not necessary in Proposition \ref{prop,fdldfkf}. However, it is used in the hyperbolic reduction below. Cf. \cite{Fr}. \qed
\end{proof}
\begin{remark} \label{rem:fdkhfdkhdkdhkdj}
${T_{ab}}^{\mu} = -{T_{ba}}^{\mu}$ and $U_{abk \ell} = - U_{ba k \ell} = - U_{ab \ell k}$ and $V_{kab} = -V_{kba}$ and ${V^k}_{kb}= 0$ and $V_{abc} + V_{bca} + V_{cab}= 0$, where $V_{bjk}=\b{g}^{ai} V_{abijk}$.
The components of $V_{abijk}$ all vanish if and only if the components of $V_{bjk}$ all vanish, pointwise on $\mathcal{U}$.
\end{remark}

\begin{definition}\label{def:kdhkueuuueze} Introduce the vector space of real dimension $32$:
$$\ranges = \big\{(t,u,v) \in \C^5 \oplus \C^9 \oplus \C^3\; \big| \; t_1,\, t_2 \in \R\,\big\}$$
\end{definition}
\begin{proposition} \label{consmeXX}
Let $\Phi: \mathcal{U}\to \mathcal{R}$ and $(T,U,V)$ be as in Definition \ref{def:fdhshskhd}.
Let $\lambda=(\lambda_1,\lambda_2,\lambda_3,\lambda_4)$ be strictly positive weight functions on $\mathcal{U}$. Then, there are unique fields
\begin{align*}
(\shs{t},\shs{u},\shs{v})=\big(\shs{t}(\Phi,\lambda),\shs{u}(\Phi,\lambda),\shs{v}(\Phi,\lambda)\big):
\quad  \mathcal{U} & \to \range \subset \C^5 \oplus \C^8 \oplus \C^5 \\
(t,u,v)=\big(t(\Phi,\lambda),u(\Phi,\lambda),v(\Phi,\lambda)\big):
\quad \mathcal{U} & \to \ranges \subset  \C^5 \oplus \C^9 \oplus \C^3
\end{align*}
whose components are quadratic polynomials in $\Phi,\,\tfrac{\p}{\p x^{\mu}} \Phi,\,\overline{\Phi},\,\tfrac{\p}{\p x^{\mu}} \overline{\Phi}$, such that
\setcounter{MaxMatrixCols}{15}
\begin{align*}
\big({T_{(ab)}}^{\mu}\big) &= \begin{pmatrix} i \, t_1 \,  &  i \, t_2 \,  & 0 & 0 \\   \, \overline{t}_4 \,  &   \, \overline{t}_5 \,  & 0 & 0 \\   \, t_4 \,  &   \, t_5 \,  & 0 & 0 \\ - \, \shs{t}_1 \,  & - \, \shs{t}_2 \,  &   \, t_3 \,  & 0 \\ - \, \overline{\shs{t}}_1 \,  & - \, \overline{\shs{t}}_2 \,  &   \, \overline{t}_3 \,  & 0 \\   \, \shs{t}_4 \,  &   \, \shs{t}_5 \,  & - \, \shs{t}_3 \,  & 0\end{pmatrix}
\\ \displaybreak[0]
\big(U_{(ab)(jk)}\big) & = \begin{pmatrix}   u_2  +  \overline{u}_2   &\hskip 2pt&    u_7  -  \overline{u}_8   &\hskip 2pt&    u_8  -  \overline{u}_7   &\hskip 2pt& -  \shs{u}_3  -  \overline{\shs{u}}_4   &\hskip 2pt&    \shs{u}_4  +  \overline{\shs{u}}_3   &\hskip 2pt&    \shs{u}_8  -  \overline{\shs{u}}_8   \\    \overline{u}_9   &\hskip 2pt& -  \overline{\shs{u}}_7   &\hskip 2pt& -  \shs{u}_6   &\hskip 2pt&    \overline{u}_5   &\hskip 2pt& -  \overline{u}_6   &\hskip 2pt& -  \overline{\shs{u}}_5   \\ -  u_9   &\hskip 2pt& -  \shs{u}_6   &\hskip 2pt& -  \shs{u}_7   &\hskip 2pt& -  u_6   &\hskip 2pt&    u_5   &\hskip 2pt& -  \shs{u}_5   \\ -  u_1   &\hskip 2pt&    u_4   &\hskip 2pt& -  u_3   &\hskip 2pt& -  \shs{u}_1   &\hskip 2pt& -  \shs{u}_2   &\hskip 2pt& -  \shs{u}_3  +  \overline{\shs{u}}_4   \\    \overline{u}_1   &\hskip 2pt& -  \overline{u}_3   &\hskip 2pt&    \overline{u}_4   &\hskip 2pt& -  \shs{u}_2   &\hskip 2pt& -  \overline{\shs{u}}_1   &\hskip 2pt&    \shs{u}_4  -  \overline{\shs{u}}_3   \\    u_2  -  \overline{u}_2   &\hskip 2pt&    u_7  +  \overline{u}_8   &\hskip 2pt&    u_8  +  \overline{u}_7   &\hskip 2pt& -  \shs{u}_3  +  \overline{\shs{u}}_4   &\hskip 2pt&    \shs{u}_4  -  \overline{\shs{u}}_3   &\hskip 2pt&    \shs{u}_8  +  \overline{\shs{u}}_8  \end{pmatrix}
\\ \displaybreak[0]
\big(V_{a(jk)}\big) &=\left( \begin{matrix}
\frac{1}{\lambda_2}(- {\mathfrak v}_2  +   v_1)  -  \frac{1}{\lambda_3}   \overline{v}_3   & \;\; &
 -  \frac{1}{\lambda_4}   \overline{{\mathfrak v}}_5   & \;\; &
\frac{1}{\lambda_3}(    {\mathfrak v}_3  - v_2)    \\
\frac{1}{\lambda_2}(      \overline{{\mathfrak v}}_2  -   \overline{v}_1)  +  \frac{1}{\lambda_3}   v_3   &&
\frac{1}{\lambda_3} (     \overline{{\mathfrak v}}_3  -    \overline{v}_2 )  &&
-  \frac{1}{\lambda_4}   {\mathfrak v}_5    \\
\frac{1}{\lambda_3}(      {\mathfrak v}_3  - \overline{{\mathfrak v}}_3  - v_2  + \overline{v}_2)   &&
\frac{1}{\lambda_4}( -    \overline{{\mathfrak v}}_4  +    \overline{v}_3)   && 
\frac{1}{\lambda_4} (-   {\mathfrak v}_4  +    v_3)   \\
\frac{1}{\lambda_2}( -     v_2  +  \overline{v}_2)   && 
   \frac{1}{\lambda_3}   \overline{v}_3   &&
    \frac{1}{\lambda_3}   v_3  
 \end{matrix} \right.
\\
\noalign{\vskip1mm}
& \hskip 13mm \left. \begin{matrix}
 & \;\; &  -  \frac{1}{\lambda_1}   {\mathfrak v}_1   &\;\;& 
   \frac{1}{\lambda_2}   \overline{v}_2   &\;\;& 
\frac{1}{\lambda_2}(-    {\mathfrak v}_2  +  v_1)  +  \frac{1}{\lambda_3}   \overline{v}_3   \\
 &&    \frac{1}{\lambda_2}   v_2   &&
 -  \frac{1}{\lambda_1}   \overline{{\mathfrak v}}_1   &&
\frac{1}{\lambda_2}( -    \overline{{\mathfrak v}}_2  +  \overline{v}_1)  +  \frac{1}{\lambda_3}   v_3   \\
 && \frac{1}{\lambda_2}(     {\mathfrak v}_2  -  v_1)   && 
\frac{1}{\lambda_2}(     \overline{{\mathfrak v}}_2  -  \overline{v}_1)   && 
\frac{1}{\lambda_3} (    {\mathfrak v}_3  +  \overline{{\mathfrak v}}_3  -   v_2  -   \overline{v}_2)   \\
 && -  \frac{1}{\lambda_1}   v_1   &&
 -  \frac{1}{\lambda_1}   \overline{v}_1   &&
-\frac{1}{\lambda_2}(   v_2  +   \overline{v}_2 )
 \end{matrix}
\right)

\end{align*}
In particular, $(T,U,V) = 0$ if and only if $(\shs{t},\shs{u},\shs{v}) = 0$ and $(t,u,v) = 0$.
\end{proposition}
\begin{proof}
$T$, $U$, $V$ lie pointwise in spaces of real dimension $24$, $36$, $16$, 
by Remark \ref{rem:fdkhfdkhdkdhkdj}. For all $\Phi$:
 ${T_{12}}^3 = 0$ (1 {r.e.}\footnote{$n$ {r.e.} is short for $n$ real equations.}); ${T_{31}}^3  = 0$ (2 {r.e.}); ${T_{ab}}^4 = 0$ (6 {r.e.}); $U_{3441} = U_{4134}$ (2 {r.e.}); $\Im U_{3132} = 0$ (1 {r.e.}); $\Im U_{4142} = 0$ (1 {r.e.}). Thus, $T$, $U$, $V$ lie in subspaces of dimension $24-9 = 15$, $36-4 = 32$, $16-0 = 16$. The matrices on the right hand sides lie in these subspaces. The linear map from $(\shs{t},\shs{u},\shs{v}) \oplus (t,u,v)$ to these matrices has maximal rank
$\dim_{\R} \range \oplus \ranges = 31+32 = 63$. Since this is equal to $15 + 32 + 16$, the fields $(\shs{t},\shs{u},\shs{v}) \oplus (t,u,v)$ exist and are unique. \qed
\end{proof}
\begin{remark} A hyperbolic reduction of the vacuum Einstein equations in the Newman Penrose \cite{NP} formalism was given by Friedrich \cite{Fr}, using quasilinear symmetric hyperbolic systems.
We introduce a new hyperbolic reduction, using the `vacuum field' gauge of Definition \ref{def:vacvacvac}.
As in \cite{Fr}, there is one quasilinear symmetric hyperbolic system (Proposition \ref{hrpart1}) and one linear homogeneous symmetric hyperbolic system (Proposition \ref{hrpart2}). The latter is used to show that the constraints propagate.
\end{remark}
\begin{proposition}[Hyperbolic reduction, Part 1] \label{hrpart1}
$(\shs{t},\shs{u},\shs{v})=0$ in Proposition \ref{consmeXX} is component by component equivalent to a system of the form $\mathbf{A}(\Phi)\Phi = \mathbf{f}(\Phi)$, where 
\begin{align*}
\mathbf{A}(\Phi) =  \mathbf{A}(\Phi)^{\mu}\tfrac{\p}{\p x^{\mu}} =  \,&\mathrm{diag}\big(L, L,  N,  L,  L\big) \oplus
                                    \mathrm{diag}\big(L,L,L,L,N,N,N,L\big)\\
\,&\oplus \begin{pmatrix}
\lambda_1 N && \lambda_1 D & 0 & 0 && 0 \\
\lambda_1 \overline{D} && \lambda_1 L  +\lambda_2 N & \lambda_2 D & 0 && 0 \\
0 && \lambda_2 \overline{D} & \lambda_2 L + \lambda_3 N & \lambda_3 D && 0 \\
0 && 0 & \lambda_3 \overline{D} & \lambda_3 L + \lambda_4 N && \lambda_4 D\\
0 && 0 & 0 & \lambda_4 \overline{D} && \lambda_4 L
\end{pmatrix}
\end{align*}
and $\mathbf{A}(\Phi)^{\mu}$ is a matrix (resp. $\mathbf{f}(\Phi)$ is a vector) that is a linear (resp. quadratic) polynomial in $\Phi$, $\overline{\Phi}$. The coefficients depend only on $\lambda = (\lambda_1,\lambda_2,\lambda_3,\lambda_4)$. If $e_3 > 0$, then $\mathbf{A}(\Phi)^{\mu}$ are Hermitian and $\mathbf{A}(\Phi)^3+\mathbf{A}(\Phi)^4$ is positive definite.
\end{proposition}
\begin{proof} The proof is by direct hand or machine calculation. For example,
\begin{align*}
-\shs{t}_2 \stackrel{\text{(1)}}{=} {T_{41}}^2 & \stackrel{\text{(2)}}{=}
 F_1({F_4}^2) - F_4 ({F_1}^2) + \b{g}^{cd}\Gamma_{41c}{F_d}^2 - \b{g}^{cd}\Gamma_{14c} {F_d}^2\\
& \stackrel{\text{(3)}}{=} F_1({F_4}^2) - F_4 ({F_1}^2) + \Gamma_{412}{F_1}^2 + \Gamma_{411}{F_2}^2 - \Gamma_{414} {F_3}^2 - \Gamma_{413} {F_4}^2\\
& \qquad 
- \Gamma_{142} {F_1}^2 - \Gamma_{141} {F_2}^2 + \Gamma_{144} {F_3}^2 + \Gamma_{143} {F_4}^2\\
& \stackrel{\text{(4)}}{=} D(0) - L(e_2) + 0 \cdot e_2  + 0\cdot \overline{e}_2 + 0\cdot e_5 + \overline{\gamma}_5 \cdot 0\\
& \qquad 
- \gamma_2 e_2 - \gamma_1 \overline{e}_2 + 0\cdot e_5 - (\gamma_3 - \overline{\gamma}_4)\cdot 0
\end{align*}
(1) Proposition \ref{consmeXX}, (2)  Definition \ref{def:fdhshskhd},
(3) sum, (4) $(F,\Gamma,W) = \mathrm{Param}\circ (e,\gamma,w)$. Hence, $\shs{t}_2 = 0$ is equivalent to $L(e_2) = -\gamma_2 e_2 - \gamma_1 \overline{e}_2$, line two of $\mathbf{A}(\Phi)\Phi = \mathbf{f}(\Phi)$.
\qed
\end{proof}
\begin{proposition} \label{prop:fddfkjhfdkjfdhfdk}
Let $\Phi$ and the associated $(T,U,V)$ be as in Definition \ref{def:fdhshskhd}. Set
\begin{align*}
{{\mathfrak T}_a}^{\mu} \; &=\;  {\epsilon_a}^{ijk}\,\big(\,F_i({T_{jk}}^{\mu}) - {\Gamma_{ij}}^m {T_{mk}}^{\mu} - 
{\Gamma_{ik}}^m {T_{jm}}^{\mu} - {U^m}_{kij}\,{F_m}^{\mu} - {T_{jk}}^{\nu}\tfrac{\p}{\p x^{\nu}}\,{F_i}^{\mu}\,\big)\\
{\mathfrak U}_{cab} \; &= \;   {\epsilon_c}^{ijk}\,\big(\,
F_i(U_{abjk})-{\Gamma_{ia}}^mU_{mbjk}-{\Gamma_{ib}}^m U_{amjk}-{\Gamma_{ij}}^m U_{abmk}-{\Gamma_{ik}}^m U_{abjm}\\
 &\hskip35pt  -{U^m}_{ijk}\,\Gamma_{mab} + \tfrac{1}{3}V_{abijk} - {T_{ij}}^{\mu}\tfrac{\p}{\p x^{\mu}}\,\Gamma_{kab} \,\big)\\
{\mathfrak V}_{jk} \; &=\;  F_b({V^b}_{jk})+{\Gamma_{bm}}^b {V^m}_{jk}-{\Gamma_{bj}}^m{V^b}_{mk}-{\Gamma_{bk}}^m{V^b}_{jm} + 
{U^a}_{mab}\,{W^{mb}}_{jk} \\
 &\hskip15pt -\tfrac{1}{2}U_{mjab}\,{W^{abm}}_k 
+ \tfrac{1}{2}U_{mkab}\,{W^{abm}}_j - \tfrac{1}{2}{T_{ab}}^{\mu}\,\tfrac{\p}{\p x^{\mu}}{W^{ab}}_{jk}
\end{align*}
where $\epsilon_{abcd}$ is totally antisymmetric with $\epsilon_{1234} = -1$. 
Then $({\mathfrak T},{\mathfrak U},{\mathfrak V}) = 0$.
\end{proposition}
\begin{proof}
By direct hand or machine calculation, using $[\tfrac{\p}{\p x^{\mu}},\tfrac{\p}{\p x^{\nu}}] = 0$. \qed
\end{proof}
\begin{proposition}[Hyperbolic reduction, Part 2] \label{hrpart2}
Let $\Phi$ and the associated $(T,U,V)$ be as in Definition \ref{def:fdhshskhd}, and let
$(\shs{t},\shs{u},\shs{v})\oplus (t,u,v)$ be as in Proposition \ref{consmeXX}. Suppose
$$ (\shs{t},\shs{u},\shs{v})(\Phi,\lambda) = 0$$ Then, the vanishing of ${{\mathfrak T}_4}^1$, ${{\mathfrak T}_4}^2$, ${{\mathfrak T}_1}^3$, ${{\mathfrak T}_2}^1$, ${{\mathfrak T}_2}^2$, ${\mathfrak U}_{414}$, $({\mathfrak U}_{412}+{\mathfrak U}_{434})/2$,
${\mathfrak U}_{214}$,
${\mathfrak U}_{114}$,
${\mathfrak U}_{223}$,
${\mathfrak U}_{123}$,
$({\mathfrak U}_{112} + {\mathfrak U}_{134})/2$,
$({\mathfrak U}_{212} + {\mathfrak U}_{234})/2$,
${\mathfrak U}_{332}$,
${\mathfrak V}_{41}$, $({\mathfrak V}_{12} + {\mathfrak V}_{34})/2$, ${\mathfrak V}_{23}$,
asserted by Proposition \ref{prop:fddfkjhfdkjfdhfdk},
is a linear homogeneous system $\widehat{\b{A}}(\Phi)\Phi^{\sharp} = \widehat{\b{f}}(\Phi,\p_x \Phi) \Phi^{\sharp}$ for $\Phi^{\sharp}$, where, here and in the rest of this paper,
\begin{equation} \label{consconscons}
\Phi^{\sharp} = (t,u,v)(\Phi,\lambda)
\end{equation}
is called the \emph{constraint field} associated to $\Phi$, 
and
\begin{align*}
\widehat{\b{A}}(\Phi) = \widehat{\b{A}}(\Phi)^{\mu}\tfrac{\p}{\p x^{\mu}} = &\mathrm{diag}\big(L, L, N, L, L\big) \oplus 
\mathrm{diag}\big(L, L, L, L, N, N, L, L, N\big) \\
 &\oplus \begin{pmatrix}
\frac{1}{\lambda_1} N + \frac{1}{\lambda_2} L & \frac{1}{\lambda_2} D & 0 \\
\noalign{\vskip2.5pt}
\frac{1}{\lambda_2} \overline{D} & \frac{1}{\lambda_2} N + \frac{1}{\lambda_3} L & \frac{1}{\lambda_3} D\\
\noalign{\vskip2.5pt}
0 & \frac{1}{\lambda_3} \overline{ D} & \frac{1}{\lambda_3} N + \frac{1}{\lambda_4} L
\end{pmatrix}
\end{align*}
and $\widehat{\mathbf{f}}(\Phi,\p_x\Phi)$ is pointwise an $\R$-linear transformation.
If $e_3 > 0$, then the matrices $\widehat{\mathbf{A}}(\Phi)^{\mu}$ are Hermitian and $\widehat{\mathbf{A}}(\Phi)^3+\widehat{\mathbf{A}}(\Phi)^4$ is positive definite.
\end{proposition}
\begin{proof}
By direct hand or machine calculation. For example ${{\mathfrak T}_4}^2 = 0$
and $(\shs{t},\shs{u},\shs{v}) = 0$ imply $L(t_2) = -2\gamma_2 t_2 + 2 \Im(t_3 \tfrac{\p}{\p \u{u}} \overline{e}_2) + 
2 \Im(\overline{e}_2 u_1)$,  line two of $\widehat{\b{A}}(\Phi)\Phi^{\sharp} = \widehat{\b{f}}(\Phi,\p_x \Phi) \Phi^{\sharp}$. There are no derivatives of $(t,u,v)$ on the right hand side. \qed
\end{proof}

\section{Symmetries} \label{sec:symmetries}
Let $x$, $x'$ be Cartesian coordinates on the open subsets $\mathcal{U},\, \mathcal{U}'\, \subset\R^4\, $. 
\begin{proposition} \label{prop:sedlkdfdfhkdhgkjd}
Let $S$ be any one of the transformations
${\mathfrak C}$, ${\mathfrak Z}$, ${\mathfrak J}$, ${\mathfrak A}$ defined below. Then, $S$ is 
a symmetry transformation acting on triples $(x,\, \Phi,\, \lambda)$, in the sense that :
\begin{itemize}
\item $x \mapsto x'=S \cdot x$ is a diffeomorphism $\mathcal{U}\to \mathcal{U}'$.
\item $\Phi \mapsto \Phi' =S\cdot \Phi $ is a map from fields 
      $\Phi: \mathcal{U}\to \mathcal{R}$ to fields $\Phi': \mathcal{U}'\to \mathcal{R}$ such that
$e'_3 > 0$ and $\Im(e'_1\overline{e}'_2) \neq 0$ on $\mathcal{U}'$ if and only if
$e_3 > 0$ and $\Im(e_1\overline{e}_2) \neq 0$ on $\mathcal{U}$.
\item $\lambda \mapsto \lambda' =S\cdot \lambda$ is a map from weights on $\mathcal{U} $ to weights on $\mathcal{U}'$.
\item $\mathbf{A}(\Phi')\Phi'=\mathbf{f}(\Phi')$ on $\mathcal{U}'$ if and only if $\mathbf{A}(\Phi)\Phi=\mathbf{f}(\Phi)$ on $\mathcal{U}$.
\item $(\Phi')^{\sharp}=0$  on $\mathcal{U}'$ if and only if $\Phi^{\sharp}=0$ on $\mathcal{U}$. 
\end{itemize}
\end{proposition}
\begin{definition}\label{def:allsyms}  Let 
${\mathfrak C}^A(x^1,x^2)\in \R$,\; $\zeta(x^1,x^2) \in U(1)$,\;
${\mathfrak J} > 0$ and $\mathfrak A \neq 0$. Then,
\begin{itemize}
\item ${\mathfrak C}$\, acts by\footnote{${\mathfrak C}$ is not defined unless $x \mapsto x' $ is a diffeomorphism }:\ \, $x'  = ({\mathfrak C}^1 , {\mathfrak C}^2 ,x^3,x^4)$, 
\begin{flushleft}
$(\PhiOne_3',\PhiTwo',\PhiThree',\lambda')(x') = (\PhiOne_3,\PhiTwo,\PhiThree,\, \lambda)(x) $\\
\vskip4pt
$ e_{A+3B}'(x') = \textstyle{\sum_{C=1}^2 \frac{\p {\mathfrak C}^A}{\p x^C}(x) e_{C+3B}}(x)
\hskip15pt A = 1,2\;\; B = 0,1
$
\end{flushleft}
\item ${\mathfrak Z}$ acts by\footnote{
$D(\zeta) = (e_1 \tfrac{\p}{\p x^1} + e_2\tfrac{\p}{\p x^2})(\zeta)$
and
$N(\zeta) = (e_4 \tfrac{\p}{\p x^1} + e_5\tfrac{\p}{\p x^2})(\zeta)$
}:\ \, $x' = x  $,
\begin{flushleft}
$(\Phi',\, \lambda')(x')  = (\zeta^A \Phi 
+
\mathbf{0}_5 \oplus \tfrac{1}{2} \big(\mathbf{0}_2,D(\zeta),\overline{D(\zeta)},
\mathbf{0}_3,\zeta^{-1}N(\zeta)\big)\oplus \mathbf{0}_5,\; \lambda)(x)$ \\
\vskip3pt
$A  = \mathrm{diag}\big(1,1,0,0,0,\; 2,0,1,-1,-1,0,-2,0,\; 2,1,0,-1,-2\big)$
\end{flushleft}
\item ${\mathfrak J}$ acts by:\ \, $x' = \mathrm{diag}(1,1,{\mathfrak J},{\mathfrak J})x$, 
\begin{flushleft}
$ (\Phi',\, \lambda')(x')  = ({\mathfrak J}^A \Phi,\; \lambda)(x)$  \\
\vskip3pt
$ A = (-1)\, \mathrm{diag}\big(1,1,0,1,1,\;\; 1,1,1,1,1,1,1,1,\;\; 2,2,2,2,2\big)$
\end{flushleft}
\item ${\mathfrak A}$ acts by:\ \,
$x' = \mathrm{diag}({\mathfrak A}^{-1},{\mathfrak A}^{-1},1,{\mathfrak A}^2)x$,
\begin{flushleft} 
$(\Phi',\, \lambda')(x') =\big( {\mathfrak A}^A \Phi,\;
\mathrm{diag}(1,{\mathfrak A}^2,{\mathfrak A}^4,{\mathfrak A}^6)\, \lambda\big)(x)$\\
\vskip3pt
$ A = (-1)\, \mathrm{diag}\big(2,2,0,3,3,\;\; 0,0,1,1,1,2,2,2,\;\; 0,1,2,3,4\big)$
\end{flushleft}
\end{itemize}
\end{definition}
\begin{proof}[of Proposition \ref{prop:sedlkdfdfhkdhgkjd}] 
Set $(F,\Gamma,W) = \mathrm{Param}\circ \Phi$,\; $(F',\Gamma',W') = \mathrm{Param}\circ \Phi'$.
Set $F = ({F_a}^{\mu}\tfrac{\p}{\p x^{\mu}})$ and $F' = ({F_a'}^{\mu}\tfrac{\p}{\p (x')^{\mu}})$ with $a=1,2,3,4$. By the above definitions,
\begin{align*}
{\mathfrak C}:\;\; F' & = F &
{\mathfrak Z}:\;\; F' & = \mathrm{diag}(\zeta,\zeta^{-1},1,1)F\\
{\mathfrak J}:\;\; F' & = {\mathfrak J}^{-1} F &
{\mathfrak A}:\;\; F' & = \mathrm{diag}({\mathfrak A}^{-1},{\mathfrak A}^{-1},{\mathfrak A}^{-2},1) F
\end{align*}
In these equalities, we regard $(x,\mathcal{U})$ and $(x',\mathcal{U}')$ as global coordinates on the same manifold. In each case, $\b{g}^{ab}F_a\otimes F_b = C^2 \b{g}^{ab} F_a'\otimes F_b'$ for a constant $C > 0$. Given the transformation law for coordinates and frame, the proof is by direct verification. \qed
\end{proof}
\begin{definition} \label{def:poleflip}  Let $\alpha \neq 0$.
Set $\FLIP_{\alpha}  =  {\mathfrak Z} \circ {\mathfrak C}$ where
$\zeta(\xi) = - \xi / \overline{\xi}$ and ${\mathfrak C}(\xi) = \alpha^2 / \xi$.
Here $\xi = \xi^1 + i\xi^2$ and ${\mathfrak C}(\xi) = {\mathfrak C}^1(\xi) + i {\mathfrak C}^2(\xi)$
and
$x = (\xi^1,\xi^2,\u{u},u)$.
\end{definition}
\begin{remark}
$\FLIP_{\alpha} \circ \FLIP_{\alpha}$ is the identity. The symmetry transformation
 $\FLIP_{\alpha}$ will be used to match constructions between stereographic coordinate patches. 
\end{remark}

\section{The Minkowski Field} \label{sec:minkowski}
Set
$
\STRIP_{\infty} 
= \R^2 \times (0,\infty)\times (-\infty,0)
$
and use the coordinates $x = (\xi=\xi^1+i\xi^2,\u{u},u)$. 
\begin{definition} \label{def:minkowski} For all $a,\mathfrak A \neq 0$, let $\Minkowski_{a,\mathfrak A}: \STRIP_{\infty} \to \range$
be the field
\begin{gather}\label{eq:minkowskidefinition}
\Minkowski_{a,{\mathfrak A}} = 
\big(\rho^{-1}\;
\b{e},\, 
i\rho^{-1}\,
\b{e}
,\,  1,\,  {\bf 0}_2\big)
\oplus 
\  \rho^{-1} 
\big(0 ,\, {\mathfrak A}^2,\, \boldsymbol{\lambda},\; \overline{\boldsymbol{\lambda}},\,  0 ,\, - 1,\,  
{\bf 0}_2\big)
\oplus\,  {\bf 0}_5
\\
\label{eq:defelambda}
\rho_{a,\mathfrak A}(u,\u{u}) = {\mathfrak A}^2\u{u} - u\qquad \b{e}_{a,\mathfrak A}(\xi) = \tfrac{a}{2}\big( 1 + \tfrac{\mathfrak A^2}{a^2}|\xi|^2\big)
\qquad \boldsymbol{\lambda}_{a,\mathfrak A}(\xi) = -\tfrac{\mathfrak A^2}{2a}\,\xi
\end{gather}
\end{definition}
We often omit the subscripts on $\rho$, $\b{e}$, $\boldsymbol{\lambda}$. Let $S = S(u,\u{u})$ be given by
\begin{equation}\label{eq:SdefSdefSdef}
S = \rho^{-1}u^2 + u
\qquad \text{or, equivalently,}
\qquad
\rho^{-1} = -u^{-1} + u^{-2}S
\end{equation}
\begin{proposition} \label{minkprop}Recall, ${\mathfrak C}$, ${\mathfrak A}$, $\mathfrak J$, $\FLIP_{\alpha}$ from   
Section \ref{sec:symmetries}. For all $a,\mathfrak A \neq 0$:
\begin{itemize}
\item[(a)] $\Minkowski_{a,\mathfrak A} = ({\mathfrak C}\circ {\mathfrak A}) \cdot {\Minkowski}_{1,1}$ on $\STRIP_{\infty}$, where ${\mathfrak C}(\xi) = a\,\xi$.
\item[(b)] $\Minkowski_{a,\mathfrak A} = \FLIP_{\frac{a}{\mathfrak A}}\cdot \Minkowski_{a,\mathfrak A}$ on $\STRIP_{\infty}\cap \{\xi \neq 0\}$.
\item[(c)] $\Minkowski_{a,\mathfrak A} = {\mathfrak J}\cdot \Minkowski_{a,\mathfrak A}$ on $\STRIP_{\infty}$, for all $\mathfrak J > 0$.
\item[(d)] $\Minkowski_{a,\mathfrak A}$ is a vacuum field on $\STRIP_{\infty}$ (see Definition \ref{def:vacvacvac}).
\item[(e)] $\mathcal{M}_{a,\mathfrak A}$ is isometric, as a Lorentz manifold, to 
the open subset of Minkowski space
 \begin{equation*}
\big\{(X^0,\b{X}) \in \R\times \R^3\;\; \big|\;\; |X^0| < |\b{X}|,\; \b{X}\notin \{0\}\times \{0\}\times [0,\infty)\big\}
\end{equation*} where $(X^0,\b{X})$ are the standard Minkowski coordinates, and
$$
X^0 = \frac{1}{\sqrt{2}\, |{\mathfrak A}|}({\mathfrak A}^2 \u{u} + u) \qquad \begin{pmatrix} X^1 \\ X^2 \\ X^3 \end{pmatrix} = \frac{1}{\sqrt{2}\, |{\mathfrak A}|}\;\frac{{\mathfrak A}^2\u{u} - u}{1 + \frac{\mathfrak A^2}{a^2}|\xi|^2}\begin{pmatrix} \frac{2\mathfrak A}{a}\xi^1 \\  \rule{0pt}{11pt}\frac{2 \mathfrak A}{a}\xi^2\\ \rule{0pt}{11pt} -1 + \frac{\mathfrak A^2}{a^2}|\xi|^2
\end{pmatrix}
$$
\end{itemize}
\end{proposition}
\begin{proof} (a) Set $C_a = \mathrm{diag}(a,a,1,a,a)\oplus \mathbbm{1}_{13}$. Let $A$ be as in Definition \ref{def:allsyms}, $\anisotropic$. Then,
\begin{multline*}
\big((\mathfrak C\circ \mathfrak A)\cdot \mathcal{M}_{1,1}\big)(\xi,\u{u},u)
 = \big(\mathfrak C \cdot (\mathfrak A\cdot \mathcal{M}_{1,1})\big)(\xi,\u{u},u)\\
 = C_a (\mathfrak A\cdot \mathcal{M}_{1,1})(a^{-1}\xi,\u{u},u)
 = C_a {\mathfrak A}^A \mathcal{M}_{1,1}\big(\tfrac{\mathfrak A}{a} \xi,\u{u},{\mathfrak A}^{-2} u\big) 
 = \Minkowski_{a,\mathfrak A}(\xi,\u{u},u)
\end{multline*}
Parts (b),(c),(d),(e) are by direct calculation. We check (d) for $\Minkowski_{1,1}$, then use (a). \qed
\end{proof}
\begin{remark} \label{rem:interpr}  The level sets of $u = 2^{-\frac{1}{2}} |{\mathfrak A}| (X^0 - |\b{X}|)$ and $\u{u} = 2^{-\frac{1}{2}} |\tfrac{1}{\mathfrak A}|(X^0 + |\b{X}|)$ are null hypersurfaces. They intersect in a standard sphere of radius $|\b{X}| =  2^{-\frac{1}{2}} |\tfrac{1}{\mathfrak A}|\rho$, with the north pole removed, on which $\frac{\mathfrak A}{a}\xi$ is the standard stereographic coordinate system. The southern hemisphere corresponds to $|\xi| < |\frac{a}{\mathfrak A}|$.
\end{remark}

\section{The Relevant\,/\,Irrelevant Form} \label{sec:ansatz}
Recall $(x^{\mu}) = (\xi^1,\xi^2,\u{u},u)$. By definition the change of fields, for $u<0$, 
\begin{equation}\label{eq:farfieldansatz}
\Phi = \Minkowski_{a,\mathfrak A} + u^{-M}\Psi
\end{equation}
is called the \emph{far field ansatz}, see \eqref{eq:minkowskidefinition}.  Here,
\begin{align*}
M & = \mathrm{diag}(2,2,2,3,3) \oplus \mathrm{diag}(1,2,2,2,2,2,2,3) \oplus \mathrm{diag} (1,2,3,4,4)\\
\Psi & = (\Psi_1,\Psi_2,\Psi_3)  =  (\PsiOne, \PsiTwo, \PsiThree)
\; \text{is an $\range$-valued field}
\end{align*}
We will construct vacuum fields for which $\Psi$ has a nontrivial limit as $u\to -\infty$. 
\begin{proposition} In this proposition, ignore \eqref{consconscons}, regard $\Phi$ and $\Phi^{\sharp}$ as independent, sufficiently differentiable fields with values in $\range$ and $\ranges$, respectively.
\label{prop:equivform} Set
\begin{align*}
 M & = \mathrm{diag}(2,2,2,3,3) \oplus\mathrm{diag}(1,2,2,2,2,2,2,3) \oplus \mathrm{diag}(1,2,3,4,4) \\
E & = \mathrm{diag}(4,4,4,6,6) \oplus \mathrm{diag}(2,4,4,4,4,4,4,6) \oplus \mathrm{diag}(0,0,0,0,0)\\
M^\sharp & = \mathrm{diag}(2,2,2,3,3) \oplus \mathrm{diag}(2,2,2,2,2,2,3,3,3) \oplus \mathrm{diag}(0,-1,-2)\\
E^{\sharp} & = \mathrm{diag}(4,4,4,6,6) \oplus \mathrm{diag}(4,4,4,4,4,4,6,6,6) \oplus \mathrm{diag}(2,2,2)
\end{align*}
and
\begin{subequations}
\begin{align}
\label{eq:setset3} \Phi(x) & = \Minkowski_{a,\mathfrak A}(x) + u^{-M}\,\Psi(x) & \Psi(x) & \in \range \\
\label{eq:setset4} \Phi^\sharp(x) & = u^{-M^\sharp}\,\Psi^\sharp(x) & \Psi^{\sharp}(x) &\in \ranges\\
\label{eq:setset5} \lambda_j(x) & = u^{2j} & j =1,2,3,4 \quad  & \text{(see, Proposition \ref{consmeXX})}
\end{align}
\end{subequations}
The systems (see, Section \ref{sec:geomety}) $\b{A}(\Phi)\Phi = \b{f}(\Phi)$ and $\widehat{\b{A}}(\Phi)\,\Phi^{\sharp} = \widehat{\b{f}}(\Phi,\p_x \Phi)\, \Phi^{\sharp}$ for $\Phi$ and $\Phi^{\sharp}$ are equivalent to the following systems for $\Psi$ and $\Psi^{\sharp}$:
\begin{subequations}
\begin{align}
\label{shspsi1}
\b{A}_{a,{\mathfrak A}}(x,\Psi)\,\Psi^{\phantom{\sharp}} & \;=\; \b{f}_{a,{\mathfrak A}}(x,\Psi) \\
\label{shspsi2}
\widehat{\b{A}}_{a,{\mathfrak A}}(x,\Psi)\,\Psi^{\sharp} & \;=\; \widehat{\b{f}}_{a,{\mathfrak A}}(x,\Psi,\p_x \Psi)\,\Psi^{\sharp}
\end{align}
\end{subequations}
where $\b{A}_{a,{\mathfrak A}}(x,\Psi) = \b{A}_{a,{\mathfrak A}}^{\mu}(x,\Psi) \tfrac{\p}{\p x^{\mu}}$ and
$\widehat{\b{A}}_{a,{\mathfrak A}}(x,\Psi)  = \widehat{\b{A}}_{a,{\mathfrak A}}^{\mu}(x,\Psi) \tfrac{\p}{\p x^{\mu}}$ and
\begin{align*}
\b{A}^{\mu}_{a,{\mathfrak A}}(x,\Psi) & = u^{E}\big(u^{-M}\b{A}^{\mu}(\Phi)u^{-M}\big) \\
\b{f}_{a,{\mathfrak A}}(x,\Psi) & =  u^{E-M}\big(-\b{A}^{\mu}(\Phi) \big(\tfrac{\p}{\p x^{\mu}} u^{-M}\big)\,\Psi + \b{f}(\Phi) - \b{A}^{\mu}(\Phi) \tfrac{\p}{\p x^{\mu}} \Minkowski_{a,\mathfrak A}\big)\\
\widehat{\b{A}}^{\mu}_{a,{\mathfrak A}}(x,\Psi) & = u^{E^\sharp} \big(u^{-M^{\sharp}}\widehat{\b{A}}^{\mu}(\Phi)\,u^{-M^\sharp}\big)\\
\widehat{\b{f}}_{a,{\mathfrak A}}(x,\Psi,\p_x \Psi) & = u^{E^{\sharp}-M^{\sharp}}\big(- \widehat{\b{A}}^{\mu}(\Phi)\,\big(\tfrac{\p}{\p x^{\mu}}u^{-M^\sharp}\big) +
\widehat{\b{f}}(\Phi,\p_x \Phi)\, u^{-M^\sharp}\big)
\end{align*}
Here, $\Phi$, $\p_x \Phi$ have to be expressed in terms of $\Psi$, $\p_x \Psi$ using \eqref{eq:setset3}. We will sometimes drop the $a$, $\mathfrak A$ and write $\b{A}(x,\Psi)$, $\b{f}(x,\Psi)$, $\widehat{\b{A}}(x,\Psi)$, $\widehat{\b{f}}(x,\Psi,\p_x \Psi)$. They are notationally distinguished from $\b{A}(\Phi)$, $\b{f}(\Phi)$, $\widehat{\b{A}}(\Phi)$, $\widehat{\b{f}}(\Phi,\p_x \Phi)$ by the number of arguments.
\end{proposition}
\begin{remark} \label{rem:equivformprop} $\b{A}^{\mu}(x,\Psi)$,\; $\widehat{\b{A}}^{\mu}(x,\Psi)$ are Hermitian, so that \eqref{shspsi1}, \eqref{shspsi2} are  symmetric hyperbolic.
They are affine $\R$-linear functions of $\Psi$.
The $\R$-linear transformation $\widehat{\b{f}}(x,\Psi,\p_x \Psi)$
depends affine $\R$-linearly on $\Psi \oplus \p_x \Psi$. On the other hand, $\b{f}(x,\Psi)$ is a quadratic polynomial in the components of $\Psi$, $\overline{\Psi}$ without constant term. There is no constant term, because $\Minkowski_{a,\mathfrak A}$ is a vacuum field. Neither derivatives of $\b{e}_{a,\mathfrak A}$ nor derivatives of $\boldsymbol{\lambda}_{a,\mathfrak A}$ appear in the term $\b{A}^{\mu}(\Phi) \tfrac{\p}{\p x^{\mu}} \Minkowski_{a,\mathfrak A}$.
\end{remark}
\begin{definition}[Generic Symbols] \label{def:gspandps} Let $S$ be as in \eqref{eq:SdefSdefSdef}.
\begin{list}{\labelitemi}{\leftmargin=1em}
\item $\mathcal{P}$ is a generic symbol for a quadratic polynomial in the components of the fields $\Psi$ and $\overline{\Psi}$ without constant term, whose coefficients are (complex) polynomials in $\frac{1}{u}$, $\mathfrak A$, $S$, $\b{e}$, $\boldsymbol{\lambda}$, $\overline{\boldsymbol{\lambda}}$.
\item $\mathcal{P}^{\sharp}$ is a generic symbol for a polynomial in the components of the fields $\Psi$ and $\overline{\Psi}$ and all their first order coordinate derivatives, whose coefficients are (complex) polynomials in $\frac{1}{u}$, $\mathfrak A$, $S$, $\b{e}$, $\boldsymbol{\lambda}$, $\overline{\boldsymbol{\lambda}}$, and all their first order coordinate derivatives.
\end{list}
We use the same symbols $\mathcal{P}$ and $\mathcal{P}^{\sharp}$ for a vector or matrix all of whose entries are polynomials of this kind.
Here and below, $\b{e} = \b{e}_{a,\mathfrak A}$ and $\boldsymbol{\lambda} = \boldsymbol{\lambda}_{a,\mathfrak A}$.
\end{definition}
\begin{remark} \label{rem:DNL} The vector fields $D$, $N$ and $L$ corresponding to $\Phi = \Minkowski_{a,\mathfrak A} + u^{-M}\Psi$ are:
\begin{equation}\label{eq:DNL}
\begin{aligned}
D & =-\tfrac{2}{u}\,\b{e}\,\tfrac{\p}{\p \overline{\xi}} + \tfrac{2}{u^2}\,\b{e} \,S\, \tfrac{\p}{\p \overline{\xi}} + \tfrac{1}{u^2}\big(\PsiOne_1 \tfrac{\p}{\p \xi^1} + \PsiOne_2\tfrac{\p}{\p \xi^2}\big)
& \quad\tfrac{\p}{\p \overline{\xi}} & = \tfrac{1}{2}(\tfrac{\p}{\p \xi^1} + i \tfrac{\p}{\p \xi^2})
\\
N & =\tfrac{\p}{\p u} + \tfrac{1}{u^3}\big(\PsiOne_4 \tfrac{\p}{\p \xi^1} + \PsiOne_5 \tfrac{\p}{\p \xi^2}\big) &
L & = \tfrac{\p}{\p \u{u}} + \tfrac{1}{u^2}\PsiOne_3\tfrac{\p}{\p \u{u}}
\end{aligned}
\end{equation}
\end{remark}
\begin{proposition}[Relevant\,/\,Irrelevant Form] \label{prop:ri1}
The system \eqref{shspsi1} takes the form
\begin{subequations} \label{eq:ri1}
\begin{align}
\label{ri1:1} L \begin{pmatrix} \PsiOne_1\\ \PsiOne_2\\ \PsiOne_4\\ \PsiOne_5 \\ \PsiTwo_1\\ \PsiTwo_2\\ \PsiTwo_3\\ \PsiTwo_4 \\ \PsiTwo_8
\end{pmatrix} \; & = \; \begin{pmatrix}
\phantom{-i}\,\b{e}\,\PsiTwo_1 \\
-i\,\b{e}\, \PsiTwo_1 \\
\phantom{-}2\,\b{e}\, \Re (\PsiTwo_4 - \overline{\PsiTwo}_3 - \PsiTwo_5)\\
-2\,\b{e}\,\Im (\PsiTwo_4 - \overline{\PsiTwo}_3 - \PsiTwo_5)\\
-\PsiThree_1\\
 - |\PsiTwo_1|^2\\
-\PsiThree_2  - \overline{\boldsymbol{\lambda}}\,\PsiTwo_1\\
 - \boldsymbol{\lambda}\,\overline{\PsiTwo}_1\\
\PsiThree_3 + 2i\,\Im \,\big(\boldsymbol{\lambda}\,(\PsiTwo_4-\PsiTwo_5-\overline{\PsiTwo}_3)\big)
\end{pmatrix} + \frac{1}{u}\mathcal{P} \displaybreak[0] \\
\noalign{\vskip 1mm}
\label{ri1:2} N \begin{pmatrix} \PsiOne_3 \\ \PsiTwo_5 \\ \PsiTwo_6 \\ \PsiTwo_7
\end{pmatrix} \; & = \; \frac{1}{u}
\begin{pmatrix}2\,(\PsiOne_3 + \Re\,\PsiTwo_8) \\
\PsiTwo_4 + \PsiTwo_5 - \overline{\PsiTwo}_3 \\
0 \\ 0
\end{pmatrix} + \frac{1}{u^2}\mathcal{P}
\end{align}
\begin{align}
\notag \begin{pmatrix} N && \tfrac{1}{u}D & 0 & 0 && 0\\
\tfrac{1}{u}\overline{D} && N + \tfrac{1}{u^2}L & \tfrac{1}{u}D & 0 && 0 \\
0 && \tfrac{1}{u}\overline{D} & N + \tfrac{1}{u^2}L & \tfrac{1}{u}D && 0 \\
0 && 0 & \tfrac{1}{u}\overline{D} & N + \tfrac{1}{u^2}L && D \\
0 && 0 & 0 & \overline{D} && L\end{pmatrix} \begin{pmatrix} \PsiThree_1\\\PsiThree_2\\\PsiThree_3\\\PsiThree_4\\\PsiThree_5
\end{pmatrix}
& = \frac{1}{u} \begin{pmatrix} 0 \\ 0 \\ 0 \\ 4\boldsymbol{\lambda}\PsiThree_5 \\ \anisotropic^2 \PsiThree_5 - 2\overline{\boldsymbol{\lambda}}\PsiThree_4 - 3\PsiTwo_7 \PsiThree_3 \end{pmatrix} \\
\label{ri1:3} & \hskip 5mm + \frac{1}{u^2}\mathcal{P}
\end{align}
\end{subequations}
\end{proposition}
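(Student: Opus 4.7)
Starting from Proposition \ref{prop:equivform}, the task is to compute the ingredients $\b{A}_{a,\mathfrak A}^{\mu}$ and $\b{f}_{a,\mathfrak A}$ defined in \eqref{shspsi1a}--\eqref{shspsi1b} row-by-row, and to identify the ``relevant'' terms that survive at the natural leading order in $\tfrac{1}{u}$. The whole proof is a deterministic power-counting exercise; its substance is verifying that the weights $M$, $E$ of \eqref{eq:setset1}--\eqref{eq:setset2} and the choice $\lambda_j = u^{2j}$ conspire so that exactly the sources displayed on the right of \eqref{eq:ri1} land at order $u^{0}$ for \eqref{ri1:1}, at order $u^{-1}$ for \eqref{ri1:2}, and in the structured matrix form of \eqref{ri1:3}.

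\textbf{Left-hand side.} From Remark \ref{rem:DNL}, $L$ equals $\frac{\p}{\p\u{u}}$ up to $\tfrac{1}{u^{2}}\PsiOne_{3}\frac{\p}{\p\u{u}}$, $N$ equals $\frac{\p}{\p u}$ up to an $O(\tfrac{1}{u^{3}})$ tangential correction, and $D$ is of order $\tfrac{1}{u}$. In particular $L$ commutes with multiplication by any power of $u$, whereas $N(u^{-m}) = -m\, u^{-m-1}$ produces an algebraic term when pulled through $u^{-M}$. Conjugating the diagonal operators \eqref{ppa1}--\eqref{ppa2} by $u^{-M}$ and multiplying by $u^{E}$ therefore reproduces $L$ on each component of blocks 1 and 2 whose diagonal entry is $L$, and $N$ on those whose entry is $N$ plus a $-m\, u^{-1}\cdot(\text{component})$ piece that is transposed to the right-hand side and contributes to the $\tfrac{1}{u}$ sources of \eqref{ri1:2}. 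For the Weyl block, inserting $\lambda_{j}=u^{2j}$ into \eqref{ppa3} and conjugating by $u^{-M_{3}}=\mathrm{diag}(u^{-1},u^{-2},u^{-3},u^{-4},u^{-4})$ yields precisely the matrix on the left of \eqref{ri1:3}, up to commutator corrections of order $\tfrac{1}{u^{2}}\mathcal{P}$ coming from $N(u^{-m})$ and from the $O(\tfrac{1}{u^{2}})$ part of $D$.

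\textbf{Right-hand side.} The decisive input is that $\Minkowski_{a,\mathfrak A}$ is a vacuum field (Proposition \ref{minkprop}(d)), so the bracket $-\b{A}^{\mu}(\Phi)\big(\tfrac{\p}{\p x^{\mu}}u^{-M}\big)\Psi + \b{f}(\Phi) - \b{A}^{\mu}(\Phi)\tfrac{\p}{\p x^{\mu}}\Minkowski_{a,\mathfrak A}$ of \eqref{shspsi1b} vanishes identically at $\Psi=0$: there is no constant term. Expanding each $\b{f}_{j}$ of Definition \ref{def:shs} about $\Minkowski_{a,\mathfrak A}$, multiplying by $u^{E-M}$, and using the explicit Minkowski components of Definition \ref{def:minkowski} together with the decomposition $\tfrac{1}{\rho} = -\tfrac{1}{u}+\tfrac{S}{u^{2}}$ from \eqref{eq:sexpan}, one sorts contributions by total power of $u$. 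Those landing at the critical order are precisely the displayed relevant sources: both genuine linear-in-$\Psi$ pieces (e.g.\ $\b{e}\PsiTwo_{1}$, $-\overline{\boldsymbol\lambda}\PsiTwo_{1}$, $-\boldsymbol\lambda\overline{\PsiTwo}_{1}$ in \eqref{ri1:1}) and the occasional quadratic one that survives because both Minkowski factors vanish (e.g.\ $-3\PsiTwo_{7}\PsiThree_{3}$ in \eqref{ri1:3}). Everything else --- the $\tfrac{S}{u^{2}}$ remainder of each $\rho^{-1}$, subleading Minkowski$\,\times\,\Psi$ products, off-critical-order quadratics in $\Psi$, and all derivatives of $\b{e}, \boldsymbol\lambda$ --- fits Definition \ref{def:gspandps} and is absorbed into $\tfrac{1}{u}\mathcal{P}$ (block 1) or $\tfrac{1}{u^{2}}\mathcal{P}$ (block 2 and block 3).

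\textbf{Main obstacle.} The difficulty is bookkeeping, not ideas: across the eighteen scalar equations one must check that every coefficient in the right-hand sides of \eqref{eq:ri1} is produced with the correct sign and that no relevant-order term is overlooked. Since $M$, $E$ and $\lambda_{j}$ have been tailored precisely for this matching, the verification is mechanical. The most combinatorially involved identification is the last row of \eqref{ri1:3}, where the source $\anisotropic^{2}\PsiThree_{5} - 2\overline{\boldsymbol\lambda}\PsiThree_{4} - 3\PsiTwo_{7}\PsiThree_{3}$ is extracted from the expression $-\lambda_{4}(\PhiTwo_{2}\PhiThree_{5} - 2\PhiTwo_{4}\PhiThree_{4} + 4\PhiTwo_{5}\PhiThree_{4} + 3\PhiTwo_{7}\PhiThree_{3})$ of $\b{f}_{53}$ after inserting the Minkowski values $\PhiTwo_{2}=\rho^{-1}\anisotropic^{2}$, $\PhiTwo_{4}=\rho^{-1}\overline{\boldsymbol\lambda}$, $\PhiTwo_{5}=\PhiTwo_{7}=0$ and collecting the $-\tfrac{1}{u}$ contribution of $\rho^{-1}$.
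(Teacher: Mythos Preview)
Your proposal is correct and matches the paper's approach: the paper's own proof consists of the single sentence ``By direct (machine) calculation,'' and what you have written is precisely an outline of that calculation---conjugating $\b{A}(\Phi)$ by $u^{-M}$, multiplying by $u^{E}$, expanding $\b{f}(\Phi)$ about $\Minkowski_{a,\mathfrak A}$ using $\tfrac{1}{\rho}=-\tfrac{1}{u}+\tfrac{S}{u^{2}}$, and reading off the leading-order coefficients. Your sample verification of the fifth row of \eqref{ri1:3} is correct, and the paper contains no additional idea beyond the bookkeeping you describe.
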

\begin{proof} By direct machine or hand calculation. See Appendix \ref{app:appprog}. \qed
\end{proof}
\begin{remark} \label{rem:dkskjhkjhdskjds} We comment on the relevant\,/\,irrelevant equations \eqref{eq:ri1}:
\begin{itemize}
\item Terms containing
$\mathcal{P}$
are referred to as irrelevant, the others as relevant. Relevant terms are principal in the number of derivatives or leading order in powers of $\tfrac{1}{u}$. 
\item
The only relevant, nonlinear term on the right hand side of
\eqref{ri1:1} is $-|\PsiTwo_1|^2$.  This term generates trapped spheres. 
\item The linear terms in the relevant part of the right hand sides of \eqref{eq:ri1}
 are of two kinds. Either there is an explicit factor of $\b{e}$, $\boldsymbol{\lambda}$, $\overline{\boldsymbol{\lambda}}$ or there is a numerical factor (besides powers of $\tfrac{1}{u}$). In the first case, we can make the factor small by requiring $|\mathfrak A|  \leq |a|$ and making $|a|$ small. In the second case, we arrange the terms into a linear over $\R$ matrix applied to $\Psi$. We exploit the structure of this matrix, it motivates the assumptions {\bf \small (RE11)}, {\bf \small (RE13a)}, {\bf \small (RE13b)} 
 of Proposition \ref{prop:fdsdsfkjhdskhds}.
\end{itemize}
\end{remark}
\begin{proposition} \label{prop:ri2}
Recall \eqref{consconscons} and \eqref{eq:setset4}.
Let $\Psi^\sharp = (\PsiSharpOne,\PsiSharpTwo,\PsiSharpThree)$.
 Then,
\begin{subequations}\label{eq:ri2}
\begin{align}
\label{ri2:1} & \begin{pmatrix} \PsiSharpOne_1 \\ \PsiSharpOne_2 \\ \PsiSharpOne_3 \end{pmatrix} \; = \; \mathcal{P}^{\sharp} \hskip 21mm \begin{pmatrix} \PsiSharpOne_4 \\ \PsiSharpOne_5
\end{pmatrix} \; = \; -u\, N\, \begin{pmatrix} \overline{\PsiOne}_1 \\ \overline{\PsiOne}_2 \end{pmatrix} + \mathcal{P}^{\sharp} \displaybreak[0]\\
\noalign{\vskip 1mm}
\label{ri2:2} & \begin{pmatrix} \PsiSharpTwo_1 \\ \PsiSharpTwo_2 \\ \PsiSharpTwo_3 \end{pmatrix} \; = \; \mathcal{P}^{\sharp} \hskip 20mm \begin{pmatrix} \PsiSharpTwo_4 \\ \PsiSharpTwo_7 \\ \PsiSharpTwo_8
\end{pmatrix} \; = \; -u\, N\, \begin{pmatrix} 
\phantom{-} \PsiTwo_1 \\ \phantom{-} \PsiTwo_3 \\ - \PsiTwo_4 \end{pmatrix} + \mathcal{P}^{\sharp} \displaybreak[0]\\
\noalign{\vskip 1mm}
\label{ri2:3} & \begin{pmatrix} \PsiSharpTwo_5 \\ \PsiSharpTwo_6 \\ \PsiSharpTwo_9 \end{pmatrix} \; = \; \mathcal{P}^{\sharp}\; = \; 
\begin{pmatrix}
-L(\PsiTwo_7) - \overline{\PsiTwo}_1\\
L(\PsiTwo_6)\\
u\,D(\PsiTwo_7) - u\, \overline{D}(\PsiTwo_6) + \PsiTwo_4 - \overline{\PsiTwo}_3 - 4\boldsymbol{\lambda} \,\PsiTwo_7
\end{pmatrix} + \frac{1}{u}\mathcal{P}^{\sharp} \displaybreak[0]\\
\noalign{\vskip 1mm}
\label{ri2:4} & \begin{pmatrix} \PsiSharpThree_1 \\ \PsiSharpThree_2 \\ \PsiSharpThree_3
\end{pmatrix}\; =\; \mathcal{P}^{\sharp} \; = \; \begin{pmatrix} u\,\overline{D} - 4\overline{\boldsymbol{\lambda}} & L && 0 && 0 \\
\PsiTwo_7 & u\,\overline{D} - 2 \overline{\boldsymbol{\lambda}} && L && 0 \\
0 & 2\,\PsiTwo_7 && u\,\overline{D} && L \end{pmatrix}\begin{pmatrix}\PsiThree_1 \\ \PsiThree_2 \\ \PsiThree_3 \\ \PsiThree_4 \end{pmatrix} + \frac{1}{u}\mathcal{P}^{\sharp}
\end{align}
\end{subequations}
\end{proposition}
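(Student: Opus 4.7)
The statement reorganizes the constraint formulas of Definition \ref{def:constraintfield} under the far-field Ansatz \eqref{eq:setset3}--\eqref{eq:setset4}. The plan is a direct computation in several layers: substitute $\Phi = \Minkowski_{a,\mathfrak A} + u^{-M}\Psi$, $\Phi^\sharp = u^{-M^\sharp}\Psi^\sharp$, and $\lambda_j = u^{2j}$ into the explicit expressions for $t_j$, $u_j$, $v_j$ given in \eqref{eq:csfex}; use \eqref{eq:DNL} to expand the frame fields $D$, $N$, $L$; and then multiply each scalar component through by the appropriate $u^{M^\sharp_j}$. The organizing principle for the remainder symbol $\mathcal{P}^\sharp$ is that its coefficients are polynomial in $1/u$, $\mathfrak A$, $S$, $\b{e}_{a,\mathfrak A}$, $\boldsymbol{\lambda}_{a,\mathfrak A}$, $\overline{\boldsymbol{\lambda}_{a,\mathfrak A}}$, and hence carry only non-positive powers of $u$. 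The ``relevant'' terms on the right-hand sides of \eqref{ri2:1}--\eqref{ri2:4} are precisely those that, after the $u^{M^\sharp_j}$ rescaling, pick up a strictly positive power of $u$ in their coefficient and therefore cannot be absorbed into $\mathcal{P}^\sharp$.

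Because $\Minkowski_{a,\mathfrak A}$ is itself a vacuum field (Proposition \ref{minkprop}(d)), its associated constraint field vanishes identically, so every surviving contribution must be either linear or quadratic in $\Psi$. The dominant linear contributions arise from the Leibniz rule applied to $u^{-M_j}\Psi_j$: writing $N(u^{-M_j}\Psi_j) = -M_j u^{-M_j-1}\Psi_j + u^{-M_j} N(\Psi_j)$ and multiplying by $u^{M^\sharp_j}$, the derivative piece retains a prefactor $u^{M^\sharp_j - M_j}$ which is positive for the indices where the relevant operators appear. This explains the $-u\, N(\cdot)$ operators in \eqref{ri2:1}--\eqref{ri2:2} and the $L(\cdot)$, $u\,D(\cdot)$, $u\,\overline{D}(\cdot)$ operators in \eqref{ri2:3}--\eqref{ri2:4}. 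The remaining relevant pieces come from $D$ or $\overline{D}$ acting on Minkowski components (whose $\xi$-dependence enters only through $\b{e}_{a,\mathfrak A}$ and $\boldsymbol{\lambda}_{a,\mathfrak A}$), which, when paired with $u^{-M_j}\Psi_j$, generate the explicit $-4\overline{\boldsymbol{\lambda}}$, $-2\overline{\boldsymbol{\lambda}}$, $-4\boldsymbol{\lambda}$ factors displayed in \eqref{ri2:4}. All quadratic-in-$\Psi$ terms, together with linear-in-$\Psi$ terms whose coefficient has become polynomial in $1/u$ after the rescaling (e.g.\ any residue from the expansion $1/\rho = -1/u + S/u^2$), are collected into $\mathcal{P}^\sharp$.

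The matrix $M^\sharp$ displayed in \eqref{eq:setset2} is engineered so that this split is clean: for each component, $u^{M^\sharp_j}$ is exactly the weight that exposes the listed relevant terms while leaving everything else with non-positive $u$-powers. The main, and essentially only, obstacle to the proof is clerical: each of the seventeen scalar components of $\Psi^\sharp$ must be computed through a chain of substitutions, and the bookkeeping of signs and numerical coefficients (especially for $u_3, u_4, u_6, u_9$, where several Minkowski cross-terms enter) is lengthy. I would organize the verification as a table keyed by component, separating pure-Minkowski contributions (which must cancel), linear-in-$\Psi$ contributions split by their net power of $u$, and quadratic-in-$\Psi$ contributions. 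As with Proposition \ref{prop:ri1}, the mechanical nature of the task makes it most safely discharged by a computer algebra system, which is presumably why the author's proof there is stated as ``by direct (machine) calculation''; the same applies here.
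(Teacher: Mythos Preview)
Your proposal is correct and takes essentially the same approach as the paper: the paper's proof reads in its entirety ``By direct (machine) calculation,'' and you have accurately described what that calculation consists of and why the relevant/irrelevant split falls out as stated. Your added discussion of how the weights $M^\sharp$ in \eqref{eq:setset2} are chosen to expose the relevant terms, and of the mechanism by which the $-u\,N$, $u\,D$, and $\boldsymbol{\lambda}$ factors arise, is a helpful elaboration of the same route rather than a different one.
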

\begin{proof} By direct machine or hand calculation. \qed
\end{proof}
\begin{remark} Every $\mathcal{P}^\sharp$ in \eqref{eq:ri2} has no constant term as a polynomial in the components of $\Psi$, $\overline{\Psi}$ and their first coordinate derivatives, because $\Minkowski_{a,\mathfrak A}$ is a vacuum field.
\end{remark}
\begin{proposition}[Relevant\,/\,Irrelevant Form] \label{prop:ri3}
Let $\Psi^{\sharp} = (\PsiSharpOne,\PsiSharpTwo,\PsiSharpThree)$. The system \eqref{shspsi2} takes the form
\begin{subequations}\label{eq:ri3}
\begin{align}
\label{ri3:1} L \begin{pmatrix} \PsiSharpOne_1 \\ \PsiSharpOne_2 \\ \PsiSharpOne_4 \\ \PsiSharpOne_5 \\ \PsiSharpTwo_1 \\ \PsiSharpTwo_2 \\ \PsiSharpTwo_3 \\ \PsiSharpTwo_4 \\ \PsiSharpTwo_7 \\ \PsiSharpTwo_8 \end{pmatrix}
\; & = \; \begin{pmatrix} 0 \\ 0 \\ 
\phantom{i}\,\b{e}\,(\overline{\PsiSharpTwo}_4-\PsiSharpTwo_5) + \phantom{i}\b{e}\,(\overline{\PsiSharpTwo}_6 - \PsiSharpTwo_3)\\
i\,\b{e}\,(\overline{\PsiSharpTwo}_4-\PsiSharpTwo_5) -i\,\b{e}\,(\overline{\PsiSharpTwo}_6 - \PsiSharpTwo_3)\\
\PsiSharpThree_1\\ 0\\ 0\\ 0\\
\phantom{-}\boldsymbol{\lambda}(\PsiSharpTwo_6- \overline{\PsiSharpTwo}_3) - \overline{\boldsymbol{\lambda}} (\PsiSharpTwo_4 - \overline{\PsiSharpTwo}_5)\\
-\overline{\boldsymbol{\lambda}} (\overline{\PsiSharpTwo}_6-\PsiSharpTwo_3) + \boldsymbol{\lambda}(\overline{\PsiSharpTwo}_4 - \PsiSharpTwo_5)
\end{pmatrix} + \frac{1}{u}\mathcal{P}^{\sharp} \Psi^{\sharp} \displaybreak[0]\\
\noalign{\vskip1mm}
\label{ri3:2} N \begin{pmatrix}
\PsiSharpOne_3 \\ \PsiSharpTwo_5 \\ \PsiSharpTwo_6 \\ \PsiSharpTwo_9 
\end{pmatrix}
\; & = \;  \frac{1}{u}\begin{pmatrix}
\PsiSharpOne_3 + \PsiSharpTwo_7 + \overline{\PsiSharpTwo}_8 \\
\overline{\PsiSharpTwo}_4 \\
\overline{\PsiSharpTwo}_3 \\
\overline{\PsiSharpTwo}_7 + \PsiSharpTwo_8
\end{pmatrix} + \frac{1}{u^2}\mathcal{P}^{\sharp}\Psi^{\sharp}
\end{align}
\begin{equation} \label{ri3:3} 
\begin{pmatrix}
N + \tfrac{1}{u^2}L & \tfrac{1}{u}D & 0\\
\tfrac{1}{u}\overline{D} & N + \tfrac{1}{u^2}L & \tfrac{1}{u}D\\
0 &  \tfrac{1}{u}\overline{D} & N + \tfrac{1}{u^2}L
\end{pmatrix}\begin{pmatrix} \PsiSharpThree_1 \\ \PsiSharpThree_2 \\ \PsiSharpThree_3\end{pmatrix}  = \frac{1}{u^2}\mathcal{P}^{\sharp}\Psi^{\sharp}
\end{equation}
\end{subequations}
Above, the symbols $\mathcal{P}^{\sharp}$ are linear over $\R$ generic transformations, see Definition \ref{def:gspandps}.
\end{proposition}
\begin{proof} By direct machine or hand calculation. \qed
\end{proof}
\begin{remark} The overall factors $u^E$ and $u^{E^{\sharp}}$ appear in \eqref{shspsi1} and \eqref{shspsi2}, so that these systems are line by line (up to a permutation of the lines) equivalent to
\eqref{eq:ri1} and \eqref{eq:ri3}.
\end{remark}

\section{Formal Solutions} \label{sec:formalsolution}
We consider formal power series on $\STRIP_{\infty} \subset \R^4$ (see, Section \ref{sec:minkowski}),
\begin{equation}\label{fps}
[\,\Psi\,](x) \; = \; \textstyle\sum_{k=0}^{\infty}\;(\tfrac{1}{u})^k\; \Psi(k)(\xi,\u{u})\quad \text{where} \quad
\Psi(k) \in C^{\infty}(\R^2\times (0,\infty), \range)
\end{equation}
\begin{remark} \label{rem:grgrhz}
By Proposition \ref{prop:ri2}, the associated formal constraint field $[\,\Psi^{\sharp}\,]$ is 
itself a formal power series $[\,\Psi^{\sharp}\,](x) = \sum_{k=0}^{\infty}(\tfrac{1}{u})^k\, \Psi^{\sharp}(k)(\xi,\u{u})$, with $\Psi^{\sharp}(k)$ depending only on $\Psi(\ell)$, $0\leq \ell \leq k$.
\end{remark}
The characteristic initial problem in Proposition \ref{prop:fps} is motivated by \cite{Chr}.
\begin{proposition} \label{prop:fps}
For all $a,\mathfrak A \neq 0$, $\u{u}_0 > 0$, all smooth $\text{\bf \small DATA}(\xi,\u{u}): \R^2 \times (0,\infty) \to \C$ that vanish when $\u{u} < \u{u}_0$, there is a unique formal power series $[\,\Psi\,]$ on $\STRIP_{\infty}$, which satisfies \eqref{shspsi1} and $[\,\Psi^\sharp\,] = 0$ and (the formal characteristic initial data)
\begin{equation} \label{eq:fordhhkhkshkfhfdforfor}
[\,\Psi\,] = 0\;\; \text{when $\u{u} < \u{u}_0$}
\qquad \qquad \PsiTwo_1(0) = \text{\bf \small DATA}
\end{equation}
Moreover, for all $k \geq 0$, the value of $\Psi(k)$ at $(\xi,\u{u}) \in \R^2 \times (0,\infty)$ depends only on the restriction of $\text{\bf \small DATA}(\xi,\u{u})$ and its derivatives of all orders to the half-open line segment $\{\xi\}\times (0,\u{u}\,]$ (formal finite speed of propagation). Explicitly, $\Psi(0)$ is given by:
\begin{equation} \label{eq:thezeroorderequations}
\begin{aligned}
\PsiTwo_1(0) & = \text{\bf \small DATA} &
\PsiThree_5(0) & = 0 \\
\PsiTwo_7(0) & = -\p_{\u{u}}^{-1} \overline{\PsiTwo_1(0)} &
\PsiTwo_2(0) & = - \p_{\u{u}}^{-1}  |\PsiTwo_1(0)|^2\\
\PsiThree_1(0) & = -\tfrac{\p}{\p \u{u}} \PsiTwo_1(0) &
\PsiTwo_4(0) & = - \boldsymbol{\lambda}\, \p_{\u{u}}^{-1} \overline{\PsiTwo_1(0)}\\
\PsiThree_2(0) & = 2\big(\b{e} \tfrac{\p}{\p \xi} + 2\overline{\boldsymbol{\lambda}}\big)\, \p_{\u{u}}^{-1} \PsiThree_1(0) &
\PsiTwo_6(0) & = 0 \\
\PsiThree_3(0) & = 2\big(\b{e} \tfrac{\p}{\p \xi} + \overline{\boldsymbol{\lambda}}\big)\, \p_{\u{u}}^{-1} \PsiThree_2(0) - \p_{\u{u}}^{-1}\big( \PsiTwo_7(0)\PsiThree_1(0)\big) &
\PsiOne_1(0) & = \phantom{-i}\, \b{e}\, \p_{\u{u}}^{-1} \PsiTwo_1(0) \\
\PsiThree_4(0) & = 2\,\b{e}\tfrac{\p}{\p \xi}\, \p_{\u{u}}^{-1}\PsiThree_3(0) - 2\p_{\u{u}}^{-1}\big(\PsiTwo_7(0)\PsiThree_2(0)\big) &
\PsiOne_2(0) & = -i\,\b{e}\, \p_{\u{u}}^{-1} \PsiTwo_1(0) \\
\PsiTwo_3(0) & = -\p_{\u{u}}^{-1} \PsiThree_2(0) - \overline{\boldsymbol{\lambda}}\, \p_{\u{u}}^{-1} \PsiTwo_1(0) &
\PsiOne_3(0) & = - \Re \,\PsiTwo_8(0) \\
\PsiTwo_5(0) & = - \p_{\u{u}}^{-1} \overline{\PsiThree_2(0)} &
\PsiOne_4(0) & = -4\,\b{e}\,\p_{\u{u}}^{-1}\,\Re \,\PsiTwo_5(0)\\
\PsiTwo_8(0) & = \p_{\u{u}}^{-1}\PsiThree_3(0) -4i\, \p_{\u{u}}^{-1}\Im \big(\boldsymbol{\lambda} \,\PsiTwo_5(0)\big) &
\PsiOne_5(0) & = \phantom{-}4\,\b{e}\,\p_{\u{u}}^{-1}\,\Im\,\PsiTwo_5(0)
\end{aligned}
\end{equation}
\rule{0pt}{13pt} with $\b{e} = \b{e}_{a,\mathfrak A}$ and $\boldsymbol{\lambda} = \boldsymbol{\lambda}_{a,\mathfrak A}$ and $\frac{\p}{\p \xi} = \tfrac{1}{2}\big(\frac{\p}{\p \xi^1} -i \frac{\p}{\p \xi^2}\big)$ and
$\big(\p_{\u{u}}^{-1} g\big)(\u{u}) = \int_0^{\u{u}} \dd \u{u}' g\big(\u{u}'\big)$.
\end{proposition}
We now prepare for the proof of Proposition \ref{prop:fps}. 
\begin{definition}\label{minkfps}
Recall \eqref{eq:minkowskidefinition}.
Let
$[\,\Minkowski_{a,\mathfrak A}\,](x) = \textstyle\sum_{k=0}^{\infty}(\tfrac{1}{u})^k \Minkowski_{a,\mathfrak A}(k)(\xi,\u{u})$
be the formal expansion in $\frac{1}{u}$ for the Minkowski vacuum field $\mathcal{M}_{a,\mathfrak A}$ with
\begin{equation} \label{eq:sexpan}
[\,\tfrac{1}{\rho}\,] \, = \, -\tfrac{1}{u} + \tfrac{1}{u^2}\,[\,S\,] \qquad \qquad [\,S\,] \; = \; - \textstyle\sum_{k=0}^{\infty}(\tfrac{1}{u})^k\, {\mathfrak A}^{2(k+1)}\, \u{u}^{k+1}
\end{equation}
\end{definition}
\begin{definition} \label{def:gsn} Regard the components of $\Psi(k)$ and $\overline{\Psi(k)}$, $k\geq 0$, and their formal first coordinate derivatives, as an infinite family of independent abstract variables.\\
Set $\mathcal{P}_0 = 0$. The generic symbol $\mathcal{P}_k$, $k\geq 1$, is an arbitrary polynomial in the components of $\Psi(\ell)$ and $\overline{\Psi(\ell)}$, $0 \leq \ell \leq k-1$, and all their first coordinate derivatives ($\tfrac{\p}{\p x^{\mu}}\Psi(\ell)$ and $\tfrac{\p}{\p x^{\mu}}\overline{\Psi(\ell)}$, $\mu=1,2,3$), whose coefficients are (complex) polynomials in $\mathfrak A$, $\u{u}$, $\b{e}_{a,\mathfrak A}$, $\boldsymbol{\lambda}_{a,\mathfrak A}$, $\overline{\boldsymbol{\lambda}_{a,\mathfrak A}}$, and all their first coordinate derivatives. It is further required that the polynomial $\mathcal{P}_k$ have no constant term, that is, $\mathcal{P}_k$ vanishes when $\Psi(\ell)$ and $\frac{\p}{\p x^{\mu}}\Psi(\ell)$ vanish for all $0 \leq \ell \leq k-1$ and $\mu = 1,2,3$. We use the same symbol $\mathcal{P}_k$ for a vector or matrix all whose entries are polynomials of this kind.
\end{definition}
\begin{proposition} \label{prop:gff}
$[\,\Psi\,]$ is a formal power series solution to \eqref{shspsi1}, with
 $[\,\Minkowski_{a,\mathfrak A}\,]$ in the role of $\Minkowski_{a,\mathfrak A}$,
 if and only if its coefficients $\Psi(k)$, $k\geq 0$, satisfy a system of the form
\begin{subequations}\label{SHS_}
\begin{align}
\label{SHS_z1} \PsiThree_1(k) & = \mathcal{P}_k  & k & > 0 \displaybreak[0] \\
\label{SHS_z2} \PsiThree_2(k) & = \mathcal{P}_k  & k & > 0 \displaybreak[0]\\ 
\label{SHS_z3} \PsiThree_3(k) & = \mathcal{P}_k  & k & > 0 \displaybreak[0]\\  
\label{SHS_z5}  \tfrac{\p}{\p \u{u}}\, \PsiThree_5(k) & = \mathcal{P}_k & k & \geq 0 \displaybreak[0]\\
\label{SHS_z4} (1-\delta_{k0})\PsiThree_4(k) & = -\tfrac{2}{k-\delta_{k0}}\,\big(\b{e}\,\tfrac{\p}{\p \overline{\xi}} +2\boldsymbol{\lambda} \big)\PsiThree_5(k) + \mathcal{P}_k & k & \geq 0 \displaybreak[0]\\
\label{SHS_o1} \tfrac{\p}{\p \u{u}}\, \PsiTwo_1(k) & = -\PsiThree_1(k) + \mathcal{P}_k & k & \geq 0 \displaybreak[0]\\
\label{SHS_o2}  \tfrac{\p}{\p \u{u}}\, \PsiTwo_2(k) & =  - (2-\delta_{k0})\, \Re \big(\PsiTwo_1(0)\, \overline{\PsiTwo_1(k)}\,\big) + \mathcal{P}_k & k & \geq 0 \displaybreak[0]\\
\label{SHS_o3} \tfrac{\p}{\p \u{u}}\, \PsiTwo_3(k) & = -\PsiThree_2(k)  - \overline{\boldsymbol{\lambda}}\,\PsiTwo_1(k) + \mathcal{P}_k & k & \geq 0 \displaybreak[0]\\
\label{SHS_o4}  \tfrac{\p}{\p \u{u}}\, \PsiTwo_4(k) & =  -  \boldsymbol{\lambda}\,\overline{\PsiTwo_1(k)} + \mathcal{P}_k & k & \geq 0 \displaybreak[0]\\
\label{SHS_o5} \PsiTwo_5(k) & = -\tfrac{1}{k+1}\big(\PsiTwo_4(k) - \overline{\PsiTwo_3(k)}\,\big) + \mathcal{P}_k & k & \geq 0 \displaybreak[0]\\
\label{SHS_o6} \PsiTwo_6(k) & = \mathcal{P}_k & k & > 0 \displaybreak[0]\\
\label{SHS_o7} \PsiTwo_7(k) & = \mathcal{P}_k & k & > 0 \displaybreak[0]\\
\label{SHS_o8}  \tfrac{\p}{\p \u{u}}\, \PsiTwo_8(k) & = \PsiThree_3(k) + 2i \Im\,\big(\boldsymbol{\lambda} \big(\PsiTwo_4(k)  - \overline{\PsiTwo_3(k)} - \PsiTwo_5(k)\big)\big) + \mathcal{P}_k & k & \geq 0 \displaybreak[0]\\
\label{SHS_f1}  \tfrac{\p}{\p \u{u}}\, \PsiOne_1(k) & =  \phantom{-i} \,\b{e}\, \PsiTwo_1(k) + \mathcal{P}_k & k & \geq 0 \displaybreak[0]\\
\label{SHS_f2}  \tfrac{\p}{\p \u{u}}\, \PsiOne_2(k) & = -i\,\b{e}\, \PsiTwo_1(k) + \mathcal{P}_k & k & \geq 0 \displaybreak[0]\\
\label{SHS_f3}  \PsiOne_3(k) & = -\tfrac{2}{k+2}\, \Re\, \PsiTwo_8(k) + \mathcal{P}_k & k & \geq 0 \displaybreak[0]\\
\label{SHS_f4}  \tfrac{\p}{\p \u{u}}\, \PsiOne_4(k) & = \phantom{-}2\,\b{e}\,\Re\,\big(\PsiTwo_4(k) - \overline{\PsiTwo_3(k)} - \PsiTwo_5(k)\big) + \mathcal{P}_k & k & \geq 0 \displaybreak[0]\\
\label{SHS_f5} \tfrac{\p}{\p \u{u}}\, \PsiOne_5(k) & = -2\,\b{e}\, \Im \,\big(\PsiTwo_4(k) - \overline{\PsiTwo_3(k)} - \PsiTwo_5(k)\big) + \mathcal{P}_k & k & \geq 0
\end{align}
\end{subequations}
with $\b{e} = \b{e}_{a,\mathfrak A}$ and $\boldsymbol{\lambda} = \boldsymbol{\lambda}_{a,\mathfrak A}$.
\end{proposition}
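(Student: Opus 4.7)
The plan is to substitute the formal expansions $[\,\Minkowski_{a,\mathfrak A}\,]$ (Definition \ref{minkfps}) and $[\,\Psi\,]$ from \eqref{fps} into each of the three blocks of the relevant\,/\,irrelevant form of \SHSflat given by Proposition \ref{prop:ri1}, and to match coefficients of $\tfrac{1}{u}$ one order at a time. A useful preliminary is to read off from \eqref{eq:DNL} the expansions
\[
L = \tfrac{\p}{\p \u{u}} + u^{-2}\PsiOne_3\, \tfrac{\p}{\p \u{u}},\qquad N = \tfrac{\p}{\p u} + u^{-3}\big(\PsiOne_4\,\tfrac{\p}{\p \xi^1}+\PsiOne_5\,\tfrac{\p}{\p \xi^2}\big),\qquad D = -2 u^{-1}\b{e}\, \tfrac{\p}{\p \overline{\xi}} + O(u^{-2}),
\]
so that every nonlinear correction to the frame is pushed strictly deeper in $\tfrac{1}{u}$. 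Consequently, at each order $k$ every contribution except those displayed explicitly in the statement depends only on $\Psi(\ell)$ for $\ell<k$ and its first coordinate derivatives, and is therefore absorbed into the generic symbol $\mathcal{P}_k$ of Definition \ref{def:gsn}. The absence of a constant term in $\mathcal{P}_k$ is built in, since $\Minkowski_{a,\mathfrak A}$ is a vacuum field, so the source $\b{f}_{a,\mathfrak A}$ of Proposition \ref{prop:equivform} vanishes when $\Psi\equiv 0$.

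Applying this block by block: for the $L$-block \eqref{ri1:1}, $L$ applied to $u^{-k}\Psi(k)$ contributes $u^{-k}\tfrac{\p}{\p \u{u}}\Psi(k)$ at order $u^{-k}$, and matching produces the nine $\tfrac{\p}{\p \u{u}}$-evolution equations \eqref{SHS_o1}--\eqref{SHS_o4}, \eqref{SHS_o8}, \eqref{SHS_f1}, \eqref{SHS_f2}, \eqref{SHS_f4}, \eqref{SHS_f5}; the prefactor $(2-\delta_{k0})$ in \eqref{SHS_o2} is due to $|\PsiTwo_1|^2$ contributing once at $k=0$ but as the sum $\PsiTwo_1(0)\overline{\PsiTwo_1(k)}+\overline{\PsiTwo_1(0)}\PsiTwo_1(k)$ modulo $\mathcal{P}_k$ when $k\ge 1$. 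For the $N$-block \eqref{ri1:2}, $N$ applied to $u^{-k}\Psi(k)$ contributes $-k u^{-k-1}\Psi(k)$, and matching at order $u^{-k-1}$ against the $u^{-1}$-source yields the algebraic equations \eqref{SHS_o6}, \eqref{SHS_o7} (vanishing source, $k>0$), and \eqref{SHS_f3}, \eqref{SHS_o5} (with the divisors $-\tfrac{2}{k+2}$ and $-\tfrac{1}{k+1}$ arising from isolating $\PsiOne_3(k)$ in $-(k+2)\PsiOne_3(k)=2\,\Re\,\PsiTwo_8(k)$ and $\PsiTwo_5(k)$ in $-(k+1)\PsiTwo_5(k)=\PsiTwo_4(k)-\overline{\PsiTwo_3(k)}$). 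For the matrix block \eqref{ri1:3}, matching at order $u^{-k-1}$ in rows one through three gives \eqref{SHS_z1}--\eqref{SHS_z3} (the $\tfrac{1}{u}D$, $\tfrac{1}{u}\overline{D}$, and $\tfrac{1}{u^2}L$ entries produce derivatives of $\PsiThree(k-1)$ and hence enter $\mathcal{P}_k$); the fourth row combines $-(k-\delta_{k0})\PsiThree_4(k)$ from $N$ with $-2\b{e}\,\tfrac{\p}{\p \overline{\xi}}\PsiThree_5(k)$ from $D\PsiThree_5$ and the source $4\boldsymbol{\lambda}\,\PsiThree_5(k)$, giving \eqref{SHS_z4} (the $\delta_{k0}$-shift reflecting that at $k=0$ the equation degenerates into a constraint on $\PsiThree_5(0)$ rather than an equation for $\PsiThree_4(0)$); finally, row five is the only one with $L\PsiThree_5$ in place of a further $N$, and produces the $\tfrac{\p}{\p \u{u}}$-equation \eqref{SHS_z5}.

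The only real obstacle is pure bookkeeping: there are many components and many matching orders, and one must verify line by line that no unwanted contribution survives outside $\mathcal{P}_k$. The reality assertion of the last sentence --- that $\mathcal{P}_k$ is real on the five components $\PsiOne_3, \PsiOne_4, \PsiOne_5, \PsiTwo_2, \PsiTwo_6$ whenever $\Psi(\ell)\in \range$ for all $\ell<k$ --- follows directly from the reality of the corresponding entries of $\b{f}$ in Definition \ref{def:shs} (precisely those marked $\star$ there), which is preserved by formal expansion in $\tfrac{1}{u}$. Given the sheer volume of components, the explicit algebra is most efficiently verified by computer, consistent with the one-line ``direct (machine) calculation'' proofs of the relevant\,/\,irrelevant Propositions \ref{prop:ri1}, \ref{prop:ri2}, \ref{prop:ri3} on which we rely.
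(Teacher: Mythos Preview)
Your proposal is correct and follows exactly the paper's approach: the paper's own proof is the two-line instruction ``Substitute the formal series \eqref{fps} into the relevant\,/\,irrelevant form of system \eqref{shspsi1} given in Proposition \ref{prop:ri1}. Collect all coefficients of common powers of $\tfrac{1}{u}$.'' You have simply unpacked this in detail, correctly tracking how the $L$-block yields the $\tfrac{\p}{\p \u{u}}$-evolution equations, how $N\,u^{-k}=-k\,u^{-k-1}+\cdots$ produces the algebraic relations with the divisors $k+1$ and $k+2$, and how the fourth row of \eqref{ri1:3} couples $N\PsiThree_4$ with $D\PsiThree_5$ to give \eqref{SHS_z4}.
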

\begin{proof} Substitute the formal series \eqref{fps} into the relevant\,/\,irrelevant form of system \eqref{shspsi1} given in Proposition \ref{prop:ri1}. Collect all coefficients of common powers of $\frac{1}{u}$. \qed
\end{proof}
\begin{lemma} \label{lem:fpsXX}
For all $a,\mathfrak A \neq 0$, $\u{u}_0 > 0$, all smooth $\text{\bf \small DATA}(\xi,\u{u}): \R^2 \times (0,\infty) \to \C$ that vanish when $\u{u} < \u{u}_0$, there is a unique formal power series $[\,\Psi\,]$ on $\STRIP_{\infty}$, which satisfies \eqref{shspsi1} and
\begin{equation}\label{eq:skdhkdshkdshksdhskhfd}
[\,\Psi\,] = 0\;\; \text{when $\u{u} < \u{u}_0$}
\qquad \qquad \text{$\Psi(0)$ is given by $\eqref{eq:thezeroorderequations}$}
\end{equation}
\end{lemma}
\begin{proof}
 $\Psi(0)$, as given by $\eqref{eq:thezeroorderequations}$, satisfies the $k=0$ equations in \eqref{SHS_}. The coefficient functions $\Psi(k)$, $k\geq 1$, are constructed by induction. For each step $k$, equations \eqref{SHS_z1} to \eqref{SHS_f5} are solved exactly in this order to obtain $\Psi(k)$. The right hand side is explicitly known by induction and the ``upper triangular'' structure of \eqref{SHS_z1} to \eqref{SHS_f5}. Whenever $\tfrac{\p}{\p \u{u}}$ appears on the left hand side, it is inverted using $\p_{\u{u}}^{-1}$, because the constant of integration is zero by the first condition in \eqref{eq:skdhkdshkdshksdhskhfd}. By induction, one also verifies that $\Psi(k)$, $k\geq 0$, vanishes when $\u{u} < \u{u}_0$, so that the first condition in \eqref{eq:skdhkdshkdshksdhskhfd} is satisfied at all orders. It is essential at precisely this point that the generic polynomial $\mathcal{P}_k$ in Definition \ref{def:gsn} has no constant term. Finally, by Proposition \ref{prop:gff}, there exists a formal power series solutions satisfying the hypothesis of the lemma. The construction given here is forced at every step, and therefore generates a unique formal power series. \qed
\end{proof}
\begin{remark} Lemma \ref{lem:fpsXX} is simpler than Proposition \ref{prop:fps}, because it \emph{assumes} equations \eqref{eq:thezeroorderequations} and it makes no statement about the formal constraint field $[\,\Psi^\sharp\,]$.
\end{remark}
\begin{proof}[of Proposition \ref{prop:fps}] \label{page:dkhdskhfkhfdk} 
We first prove existence. It suffices to show that the formal power series $[\,\Psi\,]$ produced by Lemma \ref{lem:fpsXX} satisfies $[\,\Psi^{\sharp}\,] = 0$. (The formal finite speed of propagation
statement in Proposition \ref{prop:fps} follows from an examination of the construction of $[\,\Psi\,]$ in the proof of Lemma \ref{lem:fpsXX}.)
 Note that
\begin{list}{\labelitemi}{\leftmargin=1em}
\item $[\,\Psi^{\sharp}\,]$ is a formal power series solution to the linear homogeneous system \eqref{shspsi2}.
\item $[\,\Psi^{\sharp}\,] = 0$ when $\u{u} < \u{u}_0$. \rule{0pt}{10pt} 
\item $\Psi^{\sharp}(0) = 0$ on $\R^2 \times (0,\infty)$. \rule{0pt}{10pt} 
\end{list}
The first bullet follows from Proposition \ref{hrpart2}, because $[\,\Psi\,]$ is a formal power series solution to \eqref{shspsi1}. The second bullet follows from the first condition in  \eqref{eq:skdhkdshkdshksdhskhfd}, which implies $[\,\Phi\,] = [\,\Minkowski_{a,\mathfrak A}\,]$ when $\u{u} < \u{u}_0$, and $[\,\Phi^{\sharp}\,] = [\,\Minkowski_{a,\mathfrak A}^{\hskip2pt \sharp}\,] = 0$.
For the third bullet, note that $\PsiSharpTwo_5(0)$, $\PsiSharpThree_1(0)$, $\PsiSharpThree_2(0)$, $\PsiSharpThree_3(0)$, $\PsiSharpTwo_6(0)$ all vanish on $\R^2 \times (0,\infty)$ by
the second condition in \eqref{eq:skdhkdshkdshksdhskhfd}.
 By the first two bullets and by equation \eqref{ri3:1}, we conclude, step by step, that $\PsiSharpOne_1(0)$, $\PsiSharpOne_2(0)$, $\PsiSharpTwo_1(0)$, $\PsiSharpTwo_2(0)$, $\PsiSharpTwo_3(0)$, $\PsiSharpTwo_4(0)$, $\PsiSharpOne_4(0)$, $\PsiSharpOne_5(0)$, $\PsiSharpTwo_7(0)$, $\PsiSharpTwo_8(0)$ also vanish. The first equation in \eqref{ri3:2} gives $\PsiSharpOne_3(0) = 0$. It remains to show that $\PsiSharpTwo_9(0) = 0$ on $\R^2 \times (0,\infty)$. By \eqref{ri2:3},
$$\PsiSharpTwo_9(0) = -2\big(\b{e}\, \tfrac{\p}{\p \overline{\xi}}
+2\,\boldsymbol{\lambda}\big)\, \PsiTwo_7(0)
 + 2\,\b{e}\, \tfrac{\p}{\p \xi} \PsiTwo_6(0) + \PsiTwo_4(0) - \overline{\PsiTwo_3(0)}.$$
The second condition in \eqref{eq:skdhkdshkdshksdhskhfd} implies $(\tfrac{\p}{\p \u{u}})^2 \PsiSharpTwo_9(0) = 0$. By the second bullet, $p_9(0) \equiv 0$.

The three bullets imply, by induction on $k\geq 1$, that $\Psi^{\sharp}(k) = 0$ on $\R^2 \times (0,\infty)$. In fact, at each  step $k$, one verifies, in the given order, that $\PsiSharpThree_1(k)$, $\PsiSharpThree_2(k)$, $\PsiSharpThree_3(k)$ all vanish by \eqref{ri3:3}, $\PsiSharpTwo_1(k)$, $\PsiSharpTwo_2(k)$, $\PsiSharpTwo_3(k)$, $\PsiSharpTwo_4(k)$ all vanish by \eqref{ri3:1}, $\PsiSharpTwo_5(k)$, $\PsiSharpTwo_6(k)$ both vanish by \eqref{ri3:2}, $\PsiSharpTwo_7(k)$, $\PsiSharpTwo_8(k)$ both vanish by \eqref{ri3:1}, $\PsiSharpTwo_9(k)$, $\PsiSharpOne_3(k)$ both vanish by \eqref{ri3:2}, and $\PsiSharpOne_1(k)$, $\PsiSharpOne_2(k)$, $\PsiSharpOne_4(k)$, $\PsiSharpOne_5(k)$ all vanish by \eqref{ri3:1}.
This concludes the existence proof.

Uniqueness in Lemma \ref{lem:fpsXX} implies uniqueness in Proposition \ref{prop:fps}, because we now show that \eqref{shspsi1} and $[\,\Psi^{\sharp}\,] = 0$ and \eqref{eq:fordhhkhkshkfhfdforfor} together imply  \eqref{eq:thezeroorderequations}, which is the second condition in \eqref{eq:skdhkdshkdshksdhskhfd}. Condition \eqref{eq:fordhhkhkshkfhfdforfor} and the $k=0$ equations in \eqref{SHS_}
imply \eqref{eq:thezeroorderequations}, apart from the formulas for $\PsiTwo_7(0)$, $\PsiThree_2(0)$, $\PsiThree_3(0)$, $\PsiThree_4(0)$, $\PsiTwo_6(0)$.
The remaining five formulas follow from the vanishing of $\PsiSharpTwo_5(0)$, $\PsiSharpThree_1(0)$, $\PsiSharpThree_2(0)$, $\PsiSharpThree_3(0)$ and $\PsiSharpTwo_6(0)$, see \eqref{ri2:3} and \eqref{ri2:4}. Here, $\Psi^{\sharp}(0) = (\PsiSharpOne(0),\PsiSharpTwo(0),\PsiSharpThree(0))$. \qed
\end{proof}
\begin{proposition} \label{prop:coeffest} For all $k,R \geq 0$, all $0 < |\mathfrak A| \leq |a| \leq 1$, and all $\text{\bf \small DATA}$,
$$\|\Psi(k)\|_{C^R(\mathcal{Q})} \; \leq \; p_{k,R}\big( \| \text{\bf \small DATA}\|_{C^{R+2k+3}(\mathcal{Q})}\big)\qquad \qquad \mathcal{Q} = D_{4|\frac{a}{\mathfrak A}|}(0) \times (0,2)$$
 where $[\,\Psi\,]$ is the corresponding formal solution in Proposition \ref{prop:fps}, and $p_{k,R}: \R \to \R$, is an infinite family, indexed by $k,R \geq 0$, of universal polynomials without constant term. Here, $D_r(0)$ is the open disk of radius $r>0$ in the $(\xi^1,\xi^2)$-plane. Here
$\|f\|_{C^R(\mathcal{Q})} = \sup_{|\alpha|\leq R} \|\p^{\alpha}f\|_{C^0(\mathcal{Q})}$, where $\alpha \in \N_0^3$.
\end{proposition}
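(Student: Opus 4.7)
The plan is to proceed by induction on $k\geq 0$, using the recursive construction of $[\,\Psi\,]$ from Lemma \ref{lem:fps}. The key preliminary observation is that, under the hypothesis $0<|\mathfrak A|\leq |a|\leq 1$, the functions $\b{e}_{a,\mathfrak A}$, $\boldsymbol{\lambda}_{a,\mathfrak A}$ and all their $\xi$-derivatives are uniformly bounded on $\mathcal{Q}$ by absolute constants --- for instance $|\b{e}_{a,\mathfrak A}|\leq 17/2$ (since $\tfrac{\mathfrak A^2}{a^2}|\xi|^2\leq 16$ on $\mathcal{Q}$), $|\boldsymbol{\lambda}_{a,\mathfrak A}|\leq 2|\mathfrak A|\leq 2$, and $|\p_\xi\boldsymbol{\lambda}_{a,\mathfrak A}|\leq |\mathfrak A|/2$. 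Since $\u{u}\in(0,2)$ on $\mathcal{Q}$ and $\xi$-derivatives commute with $\p_{\u{u}}^{-1}$, the antiderivative $\p_{\u{u}}^{-1}$ maps $C^R(\mathcal{Q})$ into itself with operator norm at most $2$. Consequently each operation appearing in Definition \ref{def:psizero} and on the right-hand sides of \eqref{SHS_z1}--\eqref{SHS_f5} acts with operator norm bounded by universal constants, independent of $(a,\mathfrak A)$.

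For the base case $k=0$, I would simply unfold Definition \ref{def:psizero} in sequence. The longest derivative-costing chain is $\text{\bf \small DATA}\to\PsiThree_2(0)\to\PsiThree_3(0)\to\PsiThree_4(0)$, through which $\b{e}\,\p_\xi$ is applied three times; every other operation (multiplication by bounded coefficients, pointwise products, $\p_{\u{u}}^{-1}$) is $C^R$-preserving, and the nominal $\p_{\u{u}}$-loss in $\PsiThree_1(0)=-\p_{\u{u}}\text{\bf \small DATA}$ is only ever exploited downstream either directly inside $\p_{\u{u}}^{-1}$ or against $\p_{\u{u}}^{-1}$-smoothed factors like $\PsiTwo_7(0)$, so it contributes at most one derivative of $\text{\bf \small DATA}$. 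A careful count gives $\|\Psi(0)\|_{C^R(\mathcal{Q})}\leq p_{0,R}(\|\text{\bf \small DATA}\|_{C^{R+3}(\mathcal{Q})})$ for a universal polynomial $p_{0,R}$ without constant term --- vanishing constant term because every component of $\Psi(0)$ vanishes identically when $\text{\bf \small DATA}\equiv 0$.

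For the inductive step, assuming the claim for $\Psi(0),\ldots,\Psi(k-1)$: by Definition \ref{def:gsn}, every $\mathcal{P}_k$ appearing in \eqref{SHS_z1}--\eqref{SHS_f5} is a polynomial without constant term in $\Psi(\ell)$ and its first $\xi^1,\xi^2,\u{u}$-derivatives ($\ell<k$), with coefficients that are polynomials in $\mathfrak A,\u{u},\b{e},\boldsymbol{\lambda},\overline{\boldsymbol{\lambda}}$ and their first $\xi$-derivatives --- hence uniformly bounded on $\mathcal{Q}$ by the first paragraph. Solving \eqref{SHS_z1}--\eqref{SHS_f5} in the order written, each component of $\Psi(k)$ is produced from the relevant $\mathcal{P}_k$ by (i) at most one $\p_{\u{u}}^{-1}$, (ii) multiplication by bounded coefficients, and (iii) only in \eqref{SHS_z4}, one additional $\b{e}\,\p_{\bar\xi}$ applied to $\PsiThree_5(k)$. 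The worst case is $\PsiThree_4(k)$, which costs two derivatives of the previous $\Psi(\ell)$'s; all other components cost only one. Hence $\|\Psi(k)\|_{C^R(\mathcal{Q})}\leq q_{k,R}\bigl(\max_{\ell<k}\|\Psi(\ell)\|_{C^{R+2}(\mathcal{Q})}\bigr)$ for a universal polynomial $q_{k,R}$ without constant term; iterating back to $\Psi(0)$ accumulates $2k$ extra derivatives on top of the base-case $+3$, producing the claimed bound.

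The only genuine obstacle is the derivative bookkeeping --- verifying that exactly $+2$ derivatives are lost at each inductive step (one from the first-derivative structure of $\mathcal{P}_k$, one from the $\p_{\bar\xi}$ in \eqref{SHS_z4}) and that all coefficients remain bounded on $\mathcal{Q}$ by absolute constants throughout the admissible range of $(a,\mathfrak A)$. Once that is in place, the class of polynomials without constant term is closed under the construction, and the universal polynomials $p_{k,R}$ emerge automatically by composition.
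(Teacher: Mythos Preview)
Your proposal is correct and follows essentially the same approach as the paper: both argue by induction on $k$ using the recursion \eqref{SHS_}, first recording the uniform $C^R$-bounds $\|\b{e}_{a,\mathfrak A}\|_{C^R(\mathcal{Q})},\|\boldsymbol{\lambda}_{a,\mathfrak A}\|_{C^R(\mathcal{Q})}\leq 17/2$ and $\|\p_{\u{u}}^{-1}g\|_{C^R(\mathcal{Q})}\leq 2\|g\|_{C^R(\mathcal{Q})}$, then handling the base case by direct inspection of Definition \ref{def:psizero}, and in the inductive step observing that the generic $\mathcal{P}_k$ costs one derivative while the single explicit $\p_{\overline{\xi}}$ in \eqref{SHS_z4} costs one more. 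Your derivative bookkeeping is slightly more explicit than the paper's, but the argument is the same.
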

\begin{proof} Observe that:
\begin{list}{\labelitemi}{\leftmargin=1em}
\item $\|\b{e}_{a,\mathfrak A}\|_{C^R(\mathcal{Q})} \leq \frac{17}{2}$ and $\|\boldsymbol{\lambda}_{a,\mathfrak A}\|_{C^R(\mathcal{Q})} \leq \frac{17}{2}$ for all $R\geq 0$.
\item $\|\p_{\u{u}}^{-1} g\|_{C^R(\mathcal{Q})} \leq 2\,\|g\|_{C^R(\mathcal{Q})}$ for all $R\geq 0$ and all functions $g = g(\xi,\u{u})$ on $\mathcal{Q}$.
\end{list}
The existence of polynomials $p_{0,R}$, $R\geq 0$, follow by direct inspection of \eqref{eq:thezeroorderequations}. The existence of polynomials $p_{k,R}$, $R \geq 0$, is shown by induction over $k\geq 0$. At each step $k \geq 1$, we use \eqref{SHS_}.
By the inductive hypothesis and Definition \ref{def:gsn} there is a polynomial $p'_{k,R}$ (depending only on $k$ and $R$) so that each generic term $\mathcal{P}_k$ on the right hand sides of \eqref{SHS_} satisfies $\|\mathcal{P}_k\|_{C^R(\mathcal{Q})} \leq p'_{k,R}\big(\|\text{\bf \small DATA}\|_{C^{R+2k+2}(\mathcal{Q})}\big)$. We can assume that $p'_{k,R}$ has no constant term, because $\mathcal{P}_k$ does not have one (see, Definition \ref{def:gsn}). Now, the existence of $p_{k,R}$, $R\geq 0$ follows directly from estimating the non generic terms on the right hand sides of \eqref{SHS_z1} to \eqref{SHS_f5}, exploiting the upper triangular structure. Only in one equation, \eqref{SHS_z4}, a coordinate derivative appears. \qed
\end{proof}
\begin{remark}
In Proposition \ref{prop:coeffest}, the uniformity of the estimate in $a$, $\mathfrak A$, when $0 < |\mathfrak A| \leq |a| \leq 1$, will be exploited later. It is compatible with taking the limit $a = \mathfrak A \downarrow 0$.
\end{remark}
\begin{remark} \label{rem:subSHSformalXX}
Fix $\text{\bf \small DATA}$ and let $[\,\Psi_{a,\mathfrak A}\,]$ be the formal power series solution in Proposition \ref{prop:fps}. The indices have been added to make the dependence on $a,{\mathfrak A}\neq 0$ explicit.
One can show, by induction, that $\Psi_{\mathfrak A,\mathfrak A}(k)(\xi,\u{u})$, $k\geq 0$, are polynomials in $\mathfrak A$.
\end{remark}
\begin{proposition}[Matching Stereographic Charts] \label{prop:flipchart}
Choose $a,\mathfrak A \neq 0$. Pick $\text{\bf \small DATA}^{\sigma}$ as in
 Proposition \ref{prop:fps}, for $\sigma = -,+$, and let $[\,\Psi^{\sigma}\,]$ be the associated solution in Proposition \ref{prop:fps}. The following statements are equivalent:
\begin{list}{\labelitemi}{\leftmargin=1em}
\item $\frac{|\xi|^2}{\xi^2}\text{\bf \small DATA}^\sigma\big(\tfrac{a}{\mathfrak A} \xi,\,\u{u}\big) \; = \; \tfrac{\xi^2}{|\xi|^2}\, \text{\bf \small DATA}^{-\sigma}\big(\tfrac{a}{\mathfrak A}\, \tfrac{1}{\xi},\,\u{u}\big)$ when $\xi \neq 0$.
\item $\FLIP_{\frac{a}{\mathfrak A}}\cdot [\,\Phi^{\sigma}\,] = [\,\Phi^{-\sigma}\,]$ when $\xi \neq 0$. Here, $[\,\Phi^{\sigma}\,] = [\,\Minkowski_{a,\mathfrak A}\,] + u^{-M}[\,\Psi^{\sigma}\,]$.
\item $\FLIP_{\frac{a}{\mathfrak A}}\cdot [\,\Psi^{\sigma}\,] = [\,\Psi^{-\sigma}\,]$ when $\xi \neq 0$.
\end{list}
Here, $-\sigma = +$ when $\sigma = -$, and conversely, $-\sigma = -$ when $\sigma = +$.
\end{proposition}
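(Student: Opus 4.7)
My plan is to establish (ii) $\iff$ (iii) by a direct linearity argument, and then (i) $\iff$ (ii) using the uniqueness statement in Proposition \ref{prop:fps}.

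For (ii) $\iff$ (iii), the key point is that $\text{\bf Flip}_{\alpha} = \mathfrak Z \circ \mathfrak C$, viewed as a map on $\range$-valued fields, is $\C$-linear: the apparently inhomogeneous contribution of $\mathfrak Z$ in Definition \ref{def:u1t} is in fact linear in $\PhiOne$, since the operators $D$ and $N$ appearing there are defined through $\PhiOne$, so $\text{\bf Flip}_{\alpha}$ has no constant term. Moreover, $\text{\bf Flip}_{\alpha}$ commutes with pointwise multiplication by the diagonal matrix $u^{-M}$: the coordinate change $\mathfrak C$ mixes frame components only within blocks on which $M$ is constant; the diagonal multiplier $\zeta^A$ trivially commutes with $u^{-M}$; and the inhomogeneous $\mathfrak Z$-terms, being linear in $\PhiOne$, pick up exactly the correct $u$-powers when applied to $u^{-M}\Psi$ rather than $\Psi$. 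Combining these two facts with $\text{\bf Flip}_{a/\mathfrak A}\cdot\Minkowski_{a,\mathfrak A} = \Minkowski_{a,\mathfrak A}$ (Proposition \ref{minkprop}(b)) yields
\begin{equation*}
\text{\bf Flip}_{a/\mathfrak A}\cdot [\,\Phi^\sigma\,] \;-\; [\,\Minkowski_{a,\mathfrak A}\,] \;=\; u^{-M}\;\text{\bf Flip}_{a/\mathfrak A}\cdot [\,\Psi^\sigma\,],
\end{equation*}
and (ii) $\iff$ (iii) follows immediately.

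For (ii) $\iff$ (i), observe that $\text{\bf Flip}_{a/\mathfrak A}$ fixes both $u$ and $\u{u}$, so it acts termwise on any formal expansion in powers of $1/u$ and preserves the support condition \eqref{eq:ic2}. As a field symmetry (Proposition \ref{fcsss}), it sends formal \SHSflat solutions to formal \SHSflat solutions and preserves the vanishing of the formal constraint field. Hence $\text{\bf Flip}_{a/\mathfrak A}\cdot [\,\Phi^\sigma\,]$ falls under the scope of Proposition \ref{prop:fps} and is uniquely determined by its own $\PsiTwo_1(0)$ coefficient, which means (ii) reduces to a single scalar identity at that one slot.

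It therefore remains to compute this slot explicitly. Since $\PhiTwo_1$ is a scalar under $\mathfrak C$ (Definition \ref{def:act}) and carries $\mathfrak Z$-weight $A = 2$ (Definition \ref{def:u1t}), and since $\Minkowski_{\PhiTwo_1} = 0$, one has, with $\alpha = a/\mathfrak A$ and $\zeta(\xi) = -\xi/\overline{\xi}$ so $\zeta^2 = \xi^2/\overline{\xi}^2$,
\begin{equation*}
\bigl(\text{\bf Flip}_{\alpha}\cdot \PsiTwo_1^\sigma\bigr)(0)(\xi,\u{u}) \;=\; \tfrac{\xi^2}{\overline{\xi}^2}\;\text{\bf \small DATA}^\sigma(\alpha^2/\xi,\u{u}).
\end{equation*}
Equating this with $\text{\bf \small DATA}^{-\sigma}(\xi,\u{u})$ and then substituting $\eta = (a/\mathfrak A)\xi$ (using $a,\mathfrak A \in \R$, so $\overline{\eta}/\eta = \overline{\xi}/\xi$) reproduces condition (i) verbatim. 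The main thing to get right is the pointwise verification that $\mathfrak Z$ really is $\C$-linear on $\Phi$ and that it intertwines with $u^{-M}$; everything else is a direct unpacking of Definitions \ref{def:act} and \ref{def:u1t} together with one invocation of Proposition \ref{prop:fps}.
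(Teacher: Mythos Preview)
Your proof is correct and follows the same route as the paper: (ii) $\iff$ (iii) via Proposition \ref{minkprop}(b) together with commutation of $\text{\bf Flip}$ with $u^{-M}$, then (i) $\iff$ (ii) by computing the $\PsiTwo_1$ slot and invoking the uniqueness in Proposition \ref{prop:fps}. Two minor remarks: the map is only $\R$-linear (the $\overline{D(\zeta)}$ contribution in Definition \ref{def:u1t} prevents $\C$-linearity, though $\R$-linearity is all you use), and the paper explicitly invokes the \emph{formal finite speed of propagation} clause of Proposition \ref{prop:fps} to localize uniqueness to the punctured domain $\xi \neq 0$, a point your argument leaves implicit.
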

\begin{proof} The equivalence of the last two bullets follows from Proposition \ref{minkprop}, (b), and the fact that $\FLIP_{\frac{a}{\mathfrak A}}$ commutes with multiplication by $u^{-M}$. Each of the last two bullets implies the first. Just look at how $\FLIP_{\frac{a}{\mathfrak A}}$ acts on the component $\PsiTwo_1$. The first bullet implies the last two, because $\FLIP_{\frac{a}{\mathfrak A}}$ is a field symmetry, and by uniqueness in Proposition \ref{prop:fps} (more precisely, by formal finite speed of propagation).  \qed
\end{proof}
\begin{definition} \label{def:flipdata}
For all $(\xi,\u{u}) \in \R^2 \times (0,\infty)$ with $\xi \neq 0$, set
$$\big(\FLIP_{\frac{a}{\mathfrak A}}\cdot \text{\bf \small DATA}\big)(\xi,\u{u}) \; = \; \tfrac{\xi^2}{\overline{\xi}^2}\, \text{\bf \small DATA}\big(\tfrac{a^2}{{\mathfrak A}^2}\, \tfrac{1}{\xi},\,\u{u}\big)$$
\end{definition}
\begin{remark} \label{endof6}
Proposition \ref{prop:fps}, the main result of this section, is the formal analog of Theorem \ref{mainenergy}, the main result of this paper.
They give solutions to an asymptotic characteristic initial value problem that is motivated by \cite{Chr}. Informally:
\begin{subequations}
\begin{align}
\label{eq:ic1XX} \textstyle\lim_{u \to -\infty} \Psi(\xi,\u{u},u) \; & = \; \Psi(0)(\xi,\u{u})\\
\label{eq:ic2XX} \Psi(\xi,\u{u},u) \; & = \; 0 \qquad \text{when $\u{u} < \u{u}_0$}
\end{align}
\end{subequations}
with the understanding that $\Psi(0)$ is given in terms of $\text{\bf \small DATA}(\xi,\u{u})$ by equations \eqref{eq:thezeroorderequations}. Equation \eqref{eq:ic2XX} stipulates that $\Phi$ coincides with the Minkowski vacuum field $\mathcal{M}_{a,\mathfrak A}$ when $\u{u} < \u{u}_0$. On the other hand, \eqref{eq:ic1XX} is an asymptotic initial condition at $u \to -\infty$, past null infinity.  \emph{All the notation, definitions and concepts required for Theorem \ref{mainenergy} have now been introduced. It can be read on its own.}
\end{remark}

\section{Quasilinear Symmetric Hyperbolic Systems} \label{sec:energyestimates}
We discuss an abstract local existence theorem and an abstract energy estimate.
\begin{convention} \label{conv:notation}
In this section, we use the coordinates $q$ instead of $x$:
\begin{align*}
x  & = (x^1,x^2,x^3,x^4) = (\xi^1,\xi^2,\u{u},u)  \\
\iv &  = (\iv^0,\iv^1,\iv^2,\iv^3)= (t,\xi^1,\xi^2,\u{u})
& t &= u + \u{u}
& \b{\iv} & = (\iv^1,\iv^2,\iv^3)
\end{align*}
$B_r(p)\subset \R^n$ is the open ball of radius $r > 0$ around $p$, and $D_r(p) = B_r(p)$
if $n=2$.\\
${\rm Mat}_n$ (${\rm Sym}_n$) is the vector space of all $n\times n$ real (symmetric) matrices.\\
$X \lesssim_a Y$ means $X \leq C Y$ for a constant $C > 0$ that depends only on $a \in \R^k$.\\
$X \lesssim Y$ means $X \leq CY$ for a universal constant $C > 0$.
\end{convention}
\begin{proposition}[Existence\,/\,Breakdown Theorem] \label{exexex} Suppose a tuple
$$
\big(T,\,P,\,\b{M}^{\mu},\,h,\,
\asympt{\b{M}}^{\mu},\,
\asympt{H},\,\mathcal{K},\,\mathcal{Q}\big)
$$
($\mu=0,1,2,3$)
and the derived tuple $(\mathcal{U},\mathcal{A})$ satisfy:
\begin{itemize}
\item[{\bf \small (EB0)}] $T \in \R$ and $\mathcal{U} = (-\infty,T) \times \R^3$ and $P \in \N$ and $\mathcal{A} = \mathcal{U}\times B_2(0) \subset \R^4 \times \R^P$.
\item[{\bf \small (EB1)}] $\mathbf{M}^{\mu} \in C^\infty(\,\overline{\mathcal{A}},{\rm Sym}_P)$ and
 $\b{M}^0 \geq \tfrac{1}{2}$ and $h \in C^\infty(\,\overline{\mathcal{A}},\R^P)$. 
\item[{\bf \small (EB2)}] $\asympt{\mathbf{M}}^{\mu} \in {\rm Sym}_P$ and
 $\asympt{\b{M}}^0 \geq \tfrac{1}{2}$ and $\asympt{H} \in C^{\infty}(\, (-\infty,T], {\rm Mat}_P)$.
\item[{\bf \small (EB3)}] $\mathcal{K} \subset \mathcal{Q} \subset \R^3$ and $\mathcal{K}$ compact and $\mathcal{Q}$ open. Moreover $\mathbf{M}^{\mu}(q,\Theta) = \asympt{\mathbf{M}}^{\mu}$
and $h(\iv,\Theta) = \asympt{H}(t)\,\Theta$ for all 
$(\iv,\Theta) \in ((-\infty,T) \times (\R^3 \setminus \mathcal{K})) \times B_2(0)$. Here $t = \iv^0$.
\end{itemize}
Extend $\b{M}^{\mu}$ and $h$, by $\asympt{\b{M}}^\mu$ and $\asympt{H}(t)\Theta$, to $(-\infty,T)\times (\R^3 \setminus \mathcal{K})\times \R^P$. Then:
\vskip 1mm
\noindent {\rm Part 1.}
For each $t_0 < T$, there is a $t_1 \in (t_0,T]$ and a $C^{\infty}$-solution
\begin{equation}\label{eq:SHS}
\Theta: [t_0,t_1)\times \R^3 \to \R^P
\qquad \text{of} \qquad
\mathbf{M}^{\mu}(\iv,\Theta)\, \tfrac{\p}{\p \iv^{\mu}}\Theta = h(\iv,\Theta)
\end{equation}
with $\Theta(t_0,\,\cdot\,) \equiv 0$, such that
$\supp \Theta \subset [t_0,t_1)\times B_r(0)$ for some finite $r > 0$, and
\begin{equation}\label{eq:bound}
\Theta\big([t_0,t_1)\times \mathcal{Q}\big) \subset B_1(0) \subset \R^P
\end{equation}
 and such that $t_1 \neq T$ implies either one or both of:
\begin{description}
\item[$(\text{\bf \small Break})_1$:] \quad $\overline{\Theta([t_0,t_1)\times \mathcal{Q})}\not\subset B_1(0) \subset \R^P$.
\item[$(\text{\bf \small Break})_2$:] \quad The vector field $\p_q\, \Theta$ is unbounded on $[t_0,t_1)\times \mathcal{Q}$.
\end{description}
{\rm Part 2.} If $\b{M}^3 \geq 0$ on $\mathcal{A}$ and if $h(\iv,0) = 0$ when $\iv^3 < \tfrac{1}{2}$, then $\Theta|_{q^3 < 1/2} \equiv 0$.
\end{proposition}
\begin{remark} Proposition \ref{exexex} is stated without proof. The main ingredient is a standard existence theorem for quasilinear symmetric hyperbolic systems as in \cite{Tay}, pages 360-370. The specific geometry (of space, time \emph{and} target space) in Proposition \ref{exexex} can be reduced to the standard geometry in \cite{Tay} using partitions of unity and finite speed of propagation. {\bf \small (EB3)} implies that \eqref{eq:SHS} reduces to 
the linear homogeneous $\asympt{\mathbf{M}}^{\mu} \tfrac{\p}{\p \iv^{\mu}} \Theta = \asympt{H}(t) \Theta$
for all $\mathbf{q}$ not in the compact $\mathcal{K}$, with
$\asympt{\mathbf{M}}^{\mu}$ constant.
\end{remark}
\begin{lemma}
Recall the matrix differential operators $\b{A}(\Phi)$ and $\widehat{\b{A}}(\Phi)$ associated to $\Phi = (\PhiOne,\PhiTwo,\PhiThree)$, see Section \ref{sec:geomety}. Suppose $\theta$ is a one-form and suppose
\begin{align}\label{eq:condicondi}
\theta(L) & \geq 0 &
\theta(N) & \geq 0 &
\begin{pmatrix} \theta(N) & \theta(D)\\ \theta(\overline{D}) & \theta(L) \end{pmatrix} & \geq 0
\end{align}
($ \geq 0$ as Hermitian matrices). Then 
\begin{align}\label{eq:condi}
\theta\big(\b{A}(\Phi)\big) & \geq 0 & 
\theta\big(\widehat{\b{A}}(\Phi)\big) & \geq 0
\end{align}
\end{lemma}
\begin{definition} \label{def:jhauwdhdfkhfkdd}
For all $\xi_0 \in \R^2$ and $b \in [1,2]$ set
\begin{align*}
r_b'(t,\u{u}) & = \tfrac{1}{4} + \tfrac{1}{0.018}|b+ 0.001 - \u{u}|\cdot\big|0.002 - \tfrac{1}{\u{u} + |t|}\big|\\
\theta_{b,\xi_0} & = \text{the differential of the function $q \mapsto |\xi - \xi_0| - r_b'(t,\u{u})$}
\end{align*}
\end{definition}
\begin{proposition} \label{prop:uzewuztuztwqe873}
Let $\xi_0 \in \R^2$ and $b \in [1,2]$. Suppose $q = (t,\xi^1,\xi^2,\u{u})$ and the parameters $a,\anisotropic \in \R$ and the field $\Phi = \Minkowski_{a,\anisotropic} + u^{-M}\,\Psi$ (see Section \ref{sec:ansatz}) 
 satisfy
\begin{itemize}
\item $\u{u}\in (0,b)$ and $t \in (-\infty,-1000)$.
\item $|\xi - \xi_0| = r_b'(t,\u{u})$ and $|\xi| < 4|\tfrac{a}{\anisotropic}|$.
\item $0 < |\anisotropic| \leq |a| \leq 10^{-3}$ and $|\Psi(q)| \leq 5$.
\end{itemize}
Then \eqref{eq:condi} holds at $q$, for  the one-form $\theta_{b,\xi_0}$ in Definition \ref{def:jhauwdhdfkhfkdd}.
That is, the hypersurface $|\xi - \xi_0| = r_b'(t,\u{u})$ is non-timelike at $q$ with respect to
$\b{A}(\Phi)$ and $\widehat{\b{A}}(\Phi)$.
\end{proposition}
\begin{remark}
To prove Proposition \ref{prop:uzewuztuztwqe873}, calculate $\theta_{b,\xi_0} = \dd(|\xi - \xi_0| - r_b'(t,\u{u}))$
and use \eqref{eq:DNL} to check \eqref{eq:condicondi}, keeping in mind the coordinate transformation between $x$ and $q$ in Convention \ref{conv:notation}. Assumption $|\Psi(q)| \leq 5$ implies $|f_1(q)|,\ldots,|f_5(q)| \leq 5$, and only these five inequalities are used. Here $\Psi = (\PsiOne,\PsiTwo,\PsiThree)$.
\end{remark}
\begin{definition} [Energy and Supremum] \label{def:aasaa}
For every integer $k \geq 0$, every open $\mathcal{X}\subset \R^3$, every $t \in \R$ and every scalar\,/\,vector\,/\,matrix valued $C^k$-function $f = f(t,\mathbf{q})$, set\footnote{$\p^{\alpha} = \prod_{\mu=0}^3 (\tfrac{\p}{\p \iv^{\mu}})^{\alpha_{\mu}}$. The pointwise norm $|\,\cdot\,|$ is the Euclidean norm. For a matrix, $|A|^2 = \tr (A^T A)$.}
\begin{equation*} 
E^k_{\mathcal{X}}\{f\}(t)  = \sum_{\substack{|\alpha|\leq k \\ \alpha\in \N_0^4}} \int_{\mathcal{X}} \dd^3\mathbf{\iv}\, |\p^{\alpha} f(t,\mathbf{\iv})|^2 \qquad\;\;\;
\SUP^{(k)}_{\mathcal{X}}\{f\}(t)  = \sup_{\substack{|\alpha|\leq k \\ \alpha \in \N_0^4}}\; \sup_{\mathbf{\iv} \in \mathcal{X}} |\p^{\alpha}f(t,\mathbf{\iv})|
\end{equation*}
\end{definition}
\begin{proposition}[Energy estimate] \label{prop:fdsdsfkjhdskhds} Suppose a tuple
\begin{equation*}\Big(
(t_0,  t^{\ast}), \xi_0,  b, 
 P_m,  R, \mathbf{M}^{\mu}, H,  \source,  \Theta,  
 \asympt{\mathbf{M}}^{\mu},
 \asympt{H}, \mathbf{c}_1, \mathbf{c}_2,  J\Big)
 \end{equation*}
($\mu = 0,1,2,3$ and $m=1,2,3$)
and the derived tuple $(\mathcal{I},\mathcal{U},P,M^{\mu}_m,\asympt{H}_m)$ 
satisfy
{\bf \small (RE0)} through {\bf \small (RE12)} and one of {\bf \small (RE13a)}, {\bf \small (RE13b)}:
\begin{itemize}
\item[{\bf\small (RE0)}] $-\infty < t_0 < t^{\ast} < -1000$ and
$\mathcal{I} = (t_0,t^{\ast})$ and $\xi_0 \in \R^2$ and $b \in [1,2]$ and
\begin{align*}
\mathcal{U} & = \textstyle\bigcup_{t \in \mathcal{I}} \big(\{t\} \times \mathcal{O}(\xi_0,b,t)\big) &\;\;\;
\mathcal{O}(\xi_0,b,t) & = \textstyle\bigcup_{\u{u} \in (0,b)} \big(D_{r_b'(t,\u{u})}(\xi_0)\times \{\u{u}\}\big)
\end{align*}
\item[{\bf\small (RE1)}] $P_1,P_2,P_3\in \N$ and $P = P_1 + P_2 + P_3 \leq 10^9$
and $R \in \N_0$ and $\smallRho,\smallSHS,J\in \R$.
\item[{\bf\small (RE2)}]
$\b{M}^{\mu} \in C^{R+1}(\overline{\mathcal{U}},{\rm Sym}_P)$
and $H \in  C^{R}(\overline{\mathcal{U}},{\rm Mat}_P)$
and $\source \in C^{R}(\overline{\mathcal{U}},\R^P)$.
\item[{\bf\small (RE3)}] $\tfrac{1}{2} \leq \mathbf{M}^0 \leq 2$ and $\b{M}^3 \geq 0$.
\item[{\bf\small (RE4)}]
$\theta_{\mu}\,\mathbf{M}^{\mu} \geq 0$ on $(\p \mathcal{U}) \cap (\mathcal{I}\times \R^2 \times (0,b))$
with $\theta = \theta_{b,\xi_0}$ as in Definition \ref{def:jhauwdhdfkhfkdd}.
\item[{\bf\small (RE5)}] $\Theta \in C^{R+1}(\overline{\mathcal{U}},\R^P)$
 is a solution to the linear symmetric hyperbolic system
\begin{equation}\label{eq:SHSSHS}
\mathbf{M}^{\mu}(\iv) \tfrac{\p}{\p \iv^\mu} \Theta = H(\iv)\Theta + \source(\iv)
\qquad \text{on $\mathcal{U}$}
\end{equation}
\item[{\bf\small (RE6)}] 
$\Theta$ and $\source$ vanish identically when $\iv^3 < \tfrac{1}{2}$.
\item[{\bf\small (RE7)}]  $\asympt{\b{M}}^{\mu} \in {\rm Sym}_P$
and $\s{H} \in C^{\infty}(\overline{\mathcal{I}},{\rm Mat}_P)$.
\item[{\bf\small (RE8)}] $\tfrac{1}{2}\leq \asympt{\b{M}}^0\leq 2$ and  $\asympt{\b{M}}^1 = \asympt{\b{M}}^2 = 0$ and $\asympt{\b{M}}^3 \geq 0$.
\end{itemize}
Make the $\R^P = \R^{P_1} \oplus \R^{P_2} \oplus \R^{P_3}$ block decompositions
\begin{align*}
\Theta & = (\Theta_m) &
\source & = (\source_m) & 
\mathbf{M}^\mu & = (M_{mn}^{\mu}) &
H & = (H_{mn}) &
\asympt{H} & = (\asympt{H}_{mn})
\end{align*}
with $m,n = 1,2,3$. 
\begin{itemize}
\item[{\bf \small (RE9)}] $M^{\mu}_{mn} = 0$ if $m \neq n$ and $M^{\mu}_{mm} = M^{\mu}_m$ for $m=1,2,3$.
\item[{\bf \small (RE10)}] $M_{22}^{\mu}\tfrac{\p}{\p \iv^{\mu}} = \nu(\iv)\, \mathbbm{1}_{P_2}\,(\frac{\p}{\p \iv^0} + \frac{\p}{\p \iv^3})$ for some function $\nu$ on $\mathcal{U}$.
\item[{\bf \small (RE11)}] 
\begin{equation*}
\big(\asympt{H}_{mn}\big)  = \begin{pmatrix} 0 & 0 & 0 \\ \asympt{H}_1 & 0 & 0 \\ 0 & |t|^{-1}\asympt{H}_2 & |t|^{-1}\asympt{H}_3 \end{pmatrix}
\end{equation*}
for real, constant matrices $\asympt{H}_1$, $\asympt{H}_2$, $\asympt{H}_3$, with $\asympt{H}_3 \in {\rm Sym}_{P_3}$ and $\asympt{H}_3 \leq 0$.
\item[{\bf \small (RE12)}] $\smallRho \geq 0$ and $J > 0$ and for all $t \in \mathcal{I}$:
\begin{equation*}
\left.\begin{aligned}
|t|^{2J+2}\, E^R_{\mathcal{O}(\xi_0,b,t)}\{\source_1\}(t)\\
|t|^{2J}\, E^R_{\mathcal{O}(\xi_0,b,t)}\{\source_2\}(t)\\
|t|^{2J+2}\, E^R_{\mathcal{O}(\xi_0,b,t)}\{\source_3\}(t)\\
\end{aligned} \right \} \leq (\smallRho)^2
\end{equation*}
\item[{\bf \small (RE13a)}] $R \geq 4$ and $\smallSHS > 0$ and for all $t \in \mathcal{I}$:
\begin{equation*}
\left.\begin{aligned}
|t|^2\, E^R_{\mathcal{O}(\xi_0,b,t)}\{\mathbf{M}^{\mu} -\, \asympt{\mathbf{M}}^{\mu}\}(t)\\
|t|^2\, E^R_{\mathcal{O}(\xi_0,b,t)}\{H_{1n}-\asympt{H}_{1n}\}(t)\\
E^R_{\mathcal{O}(\xi_0,b,t)}\{H_{2n}-\asympt{H}_{2n}\}(t)\\
|t|^2\, E^R_{\mathcal{O}(\xi_0,b,t)}\{H_{3n}-\asympt{H}_{3n}\}(t)\\
\end{aligned}\right \} \leq (\smallSHS)^2
\end{equation*}
\item[{\bf \small (RE13b)}] $R \geq 0$ and $\smallSHS > 0$ and for all $t \in \mathcal{I}$:
\begin{equation*}
\left.\begin{aligned}
|t|\, \SUP^{(\max\{1,R\})}_{\mathcal{O}(\xi_0,b,t)}\{\mathbf{M}^{\mu} -\, \asympt{\mathbf{M}}^{\mu}\}(t)\\
|t|\, \SUP^{(R)}_{\mathcal{O}(\xi_0,b,t)}\{H_{1n}-\asympt{H}_{1n}\}(t)\\
\SUP^{(R)}_{\mathcal{O}(\xi_0,b,t)}\{H_{2n}-\asympt{H}_{2n}\}(t)\\
|t|\, \SUP^{(R)}_{\mathcal{O}(\xi_0,b,t)}\{H_{3n}-\asympt{H}_{3n}\}(t)\\
\end{aligned}\right \} \leq \smallSHS
\end{equation*}
\end{itemize}
Then, for all $J_0 > 0$, there are constants $\smallEE(X) \in (0,1)$, $\bigEE(X) > 0$ depending only on $X = \big(R,J_0,|\asympt{H}_m|\big)$, such that
$J \geq J_0$ and $\smallSHS \leq \smallEE(X)$ and $|t^{\ast}|^{-1}\leq \smallEE(X)$ imply
\begin{equation}\label{eq:refprop:p2XX}
\sup_{\tau \in \mathcal{I}} |\tau|^J \sqrt{E^R_{\mathcal{O}(\xi_0,b,\tau)}\{\Theta\}(\tau)} \;\leq \bigEE(X)\;\Big(|t_0|^J\,\sqrt{E^R_{\mathcal{O}(\xi_0,b,t_0)}\{\Theta\}(t_0)} \,+\, \smallRho\Big)
\end{equation}
\end{proposition}
\begin{center}
\input{ree.pstex_t}
\end{center}
\begin{proof} In this proof $E^R = E^R_{\mathcal{O}(\xi_0,b,t)}$ and $\SUP^{(R)} = \SUP^{(R)}_{\mathcal{O}(\xi_0,b,t)}$, and $\alpha,\beta \in \N_0^4$.

\emph{Step 1:} For any function $f$ with values in $\R^{Pi}$, $i=1,2,3$, set 
\begin{equation*}
\b{E}_i^0\{f\}(t) = \textstyle\int_{\mathcal{O}(\xi_0,b,t)} \dd^3 \b{\iv} \big(f^TM_i^0f\big)(t,\b{\iv})\qquad 
\b{E}_i^R\{f\}(t) = \textstyle\sum_{|\alpha|\leq R} \b{E}_i^0\{\p^{\alpha}f\}(t)
\end{equation*}
the energy naturally associated to the symmetric hyperbolic system \eqref{eq:SHSSHS}.
By {\bf \small (RE3)}:
\begin{equation}\label{eq:comparable}
E^R\{f\}(t) \leq 2\, \b{E}_i^R\{f\}(t) \qquad \b{E}_i^R\{f\}(t) \leq 2\, E^R\{f\}(t)
\end{equation}
If $R \geq 2$ and $f$ is a vector or matrix valued $C^R$-function, then (Sobolev inequality):
\begin{equation}
\label{eq:sobolev}
\SUP^{(R-2)}\{f\}(t) \lesssim_R \textstyle\sqrt{E^R\{f\}(t)}.
\end{equation}
Here, it is legitimate to use $\lesssim_R$ for $(\xi_0,b,t) \in \R^2\times [1,2]\times (-\infty,-1000)$
. In fact, $D_1(0) \times (0,1)$ is diffeomorphic to $\mathcal{O}(\xi_0,b,t)$ by
$(\xi,\u{u}) \mapsto (\xi_0 + r_b'(t,b\u{u})\xi, b\u{u})$. 
Since all derivatives of order up to $R-1$ of the Jacobians of both 
the diffeomorphism and its inverse have finite sup-norms \emph{not depending on} $\xi_0,b,t$, inequality \eqref{eq:sobolev} follows from the analogous Sobolev inequality on $D_1(0)\times (0,1)$, by a change of variables.

By the Sobolev inequality and the product rule (suppress the argument $t \in \mathcal{I}$):
\begin{equation} \label{eq:est12}
\begin{aligned}
E^R\{f_1f_2\} & \lesssim_R E^R\{f_1\}\,E^R\{f_2\}
&& \text{if $R \geq 4$}\\
 E^0\big\{\,[\p^{\alpha},f_1]f_2\,\big\} & \lesssim_R \textstyle\sum_{|\beta| = 1} E^{R-1}\{\p^{\beta} f_1\}\, E^{R-1}\{f_2\}
&& \text{if $R \geq 4$, $|\alpha| \leq R$}\\
E^R\{f_1f_2\} & \lesssim_R \big(\SUP^{(R)}\{f_1\}\big)^2\,E^R\{f_2\}
&& \text{if $R \geq 0$}\\
E^0\big\{\,[\p^{\alpha},f_1]f_2\,\big\} & \lesssim_R \textstyle\sum_{|\beta| = 1} \big(\SUP^{(R-1)}\{\p^{\beta}f_1\}\big)^2\, E^{R-1}\{f_2\}
&& \text{if $R \geq 0$, $|\alpha| \leq R$}\\
\end{aligned}
\end{equation}
In the fourth inequality, the left hand side vanishes when $R = 0$, because then $\alpha = 0$.

\emph{Step 2:} 
Here $t\in \mathcal{I}$ and $|\alpha|\leq R$. Apply $\p^{\alpha}$ to \eqref{eq:SHSSHS}:
\begin{align}\label{eq:diffeqs}
\mathbf{M}^{\mu}\tfrac{\p}{\p \iv^{\mu}} (\p^{\alpha}\Theta) & = \asympt{H}\, \p^{\alpha}\Theta +  (S_1^{\alpha},\,S_2^{\alpha},\,S_3^{\alpha}) + \p^{\alpha}\source\\
\notag (S^{\alpha}_1,\,S_2^{\alpha},\,S_3^{\alpha}) & \stackrel{\text{def}}{=}  \p^{\alpha}\big((H-\asympt{H})\Theta\big) + [\p^{\alpha},\asympt{H}]\Theta  + [\b{M}^{\mu} - \asympt{\b{M}}^{\mu},\p^{\alpha}]\p_{\mu} \Theta
\end{align}
Assumption {\bf \small (RE11)} and inequalities \eqref{eq:est12} imply
\begin{align*}
& \begin{aligned}
E^0\{S_i^{\alpha}\} \lesssim_R\big\{\textstyle\sum_{j=1}^3 E^R\{H_{ij}-\asympt{H}_{ij}\} + |t|^{-4} |\asympt{H}_2|^2+ |t|^{-4}  |\asympt{H}_3|^2 & \\
+ \textstyle\sum_{\mu = 0}^3 E^R\{\b{M}^\mu - \asympt{\b{M}}^\mu\} \big\} E^R\{\Theta\} &
\end{aligned} & \text{if $R \geq 4$}\\
\rule{0pt}{24pt} & \begin{aligned}
E^0\{S_i^{\alpha}\} \lesssim_R\big\{\textstyle\sum_{j=1}^3 \big(\SUP^{(R)}\{H_{ij}-\asympt{H}_{ij}\}\big)^2 + |t|^{-4}|\asympt{H}_2|^2 + |t|^{-4}|\asympt{H}_3|^2 &\\
+ \textstyle\sum_{\mu = 0}^3 \big(\SUP^{(R)}\{\b{M}^{\mu}-\asympt{\b{M}}^{\mu}\}\big)^2\big\} E^R\{\Theta\} & 
\end{aligned} & \text{if $R \geq 0$}
\end{align*}
Set $\smallEEProof = \max\big\{\smallSHS,|t^{\ast}|^{-1}(|\asympt{H}_2|^2 + |\asympt{H}_3|^2)^{1/2}\big\}$.
If {\bf \small (RE13a)} or {\bf \small (RE13b)}, then
\begin{equation} \label{S:E}
\max \big\{\;
|t|^2\, E^0\{S_1^{\alpha}\},\;
E^0\{S_2^{\alpha}\},\;
|t|^2\, E^0\{S_3^{\alpha}\}
\;\big\}
\;\lesssim_R \;\smallEEProof^2\, E^R\{\Theta\}
\end{equation}


\emph{Step 3:}
We now derive the inequalities \eqref{ineq1}. Recall
{\bf \small (RE2)}, {\bf \small (RE9)}.
The ``energy currents'' associated to $\Theta = (\Theta_i)$, $i=1,2,3$, and their Euclidean divergences are
\begin{align}\label{eq:jcurrentXX}
j^{\mu}_i[\Theta_i] & \stackrel{\text{def}}{=} \Theta_i^TM_i^{\mu}\Theta_i &
\p_{\mu}j^{\mu}_i[\Theta_i] & = \Theta_i^T (\p_{\mu}M_i^{\mu})\Theta_i
+ 2\Theta_i^T \big(M_i^{\mu}\p_{\mu} \Theta_i)
\end{align}
by $(M_i^{\mu})^T=M_i^{\mu}$. \emph{We do not sum over repeated lower indices.}
For $\tau \in \overline{\mathcal{I}}$ set
\begin{align*}
\mathcal{D}_1(\tau) & = \mathcal{D}_3(\tau) = \big\{(t,\b{q}) \in \mathcal{U}\;\big|\; t \in (\tau_-,\tau)\big\}& \;\;\; \tau_- & = \tau_-(\tau) = \max \{t_0,\,\tau -2\} \\
\mathcal{D}_2(\tau) & = \mathcal{D}_1(\tau) \cap \{\,\iv\;|\;\iv^3 -\iv^0 < b - \tau\}
\end{align*}
\noindent \begin{minipage}{8.6cm}
For $\mathcal{D}_2$, see the figure, where $t_0 < \tau_2 < t_0 + b < \tau_1 < t^{\ast}$. Energy estimates are obtained by integrating the divergence identity in \eqref{eq:jcurrentXX} over $\mathcal{D}_i(\tau) \subset \mathcal{U}$ and applying the Euclidean divergence theorem. The divergence theorem generates integrals over the components of $\p \mathcal{D}_i(\tau)$. Their contributions are:
\vskip 2mm
\begin{tabular}{c || c | c | c}
\emph{Boundary} & \emph{Contribution} & $i=$ & \emph{Remark}\\
\hline
$q^0 = \tau$ & $\b{E}_i^0\{\Theta_i\}(\tau)$ & all & \\
$q^3 = 0$ & $0$ & all & {\bf \small (RE6)}\\
$q^0 = \tau_-$ & $ -\b{E}_i^0\{\Theta_i\}(\tau_-)$ & $1,3$\\
$q^3 = b$ & $\geq 0$ & $1,3$ & {\bf \small (RE3)}\\
$q^3-q^0 = b-\tau$ & $0$ & $2$ & {\bf \small (RE10)}\\
$q^0 = t_0$ & $\geq - \b{E}_i^0\{\Theta_i\}(t_0)$ & $2$ & $\tau < t_0 + b$\\
$\xi \in \p D_{r_b'(t,\u{u})}(\xi_0)$ & $\geq 0$ & all & {\bf \small (RE4)}
\end{tabular}
\end{minipage}
\begin{minipage}{3.7cm}
\begin{center}
\input{D2.pstex_t}
\end{center}
\end{minipage}
\vskip 2mm
\noindent The discussion literally transposes to $\p^{\alpha}\Theta_i$ and $j_i^{\mu}[\p^{\alpha}\Theta_i]$, for $|\alpha|\leq R$. Hence 
\begin{align}\label{eq:estimatebasic}
\b{E}_i^0\{\p^{\alpha}\Theta_i\}(\tau) - k_i(\tau) \b{E}_i^0\{\p^{\alpha}\Theta_i\}(\tau_-) & \leq \textstyle \int_{\mathcal{D}_i(\tau)} \dd^4\iv\;\p_{\mu}j^{\mu}_i[\p^{\alpha}\Theta_i]\\
\label{eq:estimate}
\b{E}_i^R\{\Theta_i\}(\tau) - k_i(\tau) \b{E}_i^R\{\Theta_i\}(\tau_-) & \leq \textstyle\int_{\mathcal{D}_i(\tau)} \dd^4\iv\; \textstyle\sum_{|\alpha|\leq R} \p_{\mu}j^{\mu}_i[\p^{\alpha}\Theta_i]
\end{align}
The first inequality implies the second, by summing over $|\alpha|\leq R$. By definition, $k_1(\tau) = k_3(\tau) \equiv 1$, whereas $k_2(\tau)=0$ if $\tau_- > t_0$ and $k_2(\tau)=1$ if $\tau_- = t_0$. 

The divergence identity in \eqref{eq:jcurrentXX}, with $\p^{\alpha} \Theta_i$ in the role of $\Theta_i$, and \eqref{eq:diffeqs} imply
\begin{equation*}
\p_{\mu}j^{\mu}_i[\p^{\alpha}\Theta_i] 
= 2(\p^{\alpha}\Theta_i)^T \big\{\textstyle\sum_{j=1}^3 \asympt{H}_{ij}\,\p^{\alpha}\Theta_j \; +\; \tfrac{1}{2}\,(\p_{\mu}M_i^{\mu})(\p^{\alpha}\Theta_i)  +  \p^{\alpha}\source_i + S_i^{\alpha}\big\}
\end{equation*}
For $i=1,2$, we directly estimate the right hand side of \eqref{eq:estimate}, by using Schwarz's inequality for the spatial part of the integral, and {\bf \small (RE12)}, \eqref{S:E} and \eqref{eq:comparable}. For $i=3$, we first exploit $\s{H}_3 \leq 0$ from {\bf \small (RE11)} to drop the term
$2(\p^{\alpha}\Theta_3)^T \asympt{H}_{33}(\p^{\alpha}\Theta_3)$, and then go on as before. We also use $|\p_{\mu}\b{M}^{\mu}| = |\p_{\mu}(\b{M}^{\mu}-\asympt{\b{M}}^{\mu})| \lesssim_R \smallSHS\,|t|^{-1} \leq \smallEEProof\,|t|^{-1}$, which holds when either {\bf \small (RE13a)} or {\bf \small (RE13b)} is assumed; in the first case use \eqref{eq:sobolev}. Abbreviating $\b{E}_i = \b{E}_i^R\{\Theta_i\}$ and $\b{E} = \b{E}_1 + \b{E}_2 + \b{E}_3$, we have for all $\tau \in \overline{\mathcal{I}}$:
\begin{equation}\label{ineq1}
\begin{aligned}
\b{E}_1(\tau) - \b{E}_1(\tau_-) & \lesssim_X \textstyle\int_{\tau_-}^{\tau} \dd t \, \sqrt{\b{E}_1(t)}\big(\smallEEProof\,\sqrt{\b{E}(t)} + \smallRho |t|^{-J}\big)|t|^{-1} \\
\b{E}_2(\tau) - \b{E}_2(t_0) & \lesssim_X \textstyle\int_{\tau_-}^{\tau}\dd t\, \sqrt{\b{E}_2(t)}\big( \sqrt{\b{E}_1(t)} + \smallEEProof\,\sqrt{\b{E}(t)} + \smallRho |t|^{-J}\big)\\
\b{E}_3(\tau) - \b{E}_3(\tau_-) & \textstyle\lesssim_X \int_{\tau_-}^{\tau} \dd t\, \sqrt{\b{E}_3(t)}\big( \sqrt{\b{E}_2(t)} + \smallEEProof\,\sqrt{\b{E}(t)} + \smallRho |t|^{-J}\big)|t|^{-1}
\end{aligned}
\end{equation}
where $X$ is defined as in the proposition.

\emph{Step 4:}  For each $A = (A_1,A_2,A_3) \in (0,\infty)^3$, define
$$\mathcal{J}(A) = \big\{t \in \overline{\mathcal{I}}\;\;\big|\;\;\textstyle\sup_{\tau \in [t_0,t]}\,|\tau|^{2J} \b{E}_i(\tau) \leq A_i^2,\quad i=1,2,3
\big\}$$
Recall $J \geq J_0 > 0$. Assume $A$ satisfies
\begin{equation}\label{ineq11}
\begin{aligned}
A_1 & > |t_0|^{J} \sqrt{\b{E}_1(t_0)} &\quad  A_1 & > CJ_0^{-1}\,\smallRho
& \quad A_1 & > CJ_0^{-1}\smallEEProof |A|\\
A_2 & > 2|t_0|^{J} \sqrt{\b{E}_2(t_0)} &\quad  A_2 & > 8C \smallRho
& A_2 & > 8C(A_1 + \smallEEProof |A|)\\
A_3 & > |t_0|^{J} \sqrt{\b{E}_3(t_0)}  &\quad  A_3 & > CJ_0^{-1}\smallRho & A_3 & > CJ_0^{-1}(A_2 + \smallEEProof |A|)
\end{aligned}
\end{equation}
where $|A|^2 = A_1^2 + A_2^2 + A_3^2$ and where $C = C(X) > 0$ is the maximum of the three constants of proportionality in the inequalities \eqref{ineq1}. By \eqref{ineq1}, \eqref{ineq11} and the continuity of $\overline{\mathcal{I}}\ni \tau \mapsto \b{E}_i(\tau)$, the set $\mathcal{J}(A)$ is an open and closed sub-interval of $\overline{\mathcal{I}}$ and contains $t_0$. Therefore, $\mathcal{J}(A) = \mathcal{\overline{I}}$. To see that $\mathcal{J}(A)$ is open in $\overline{\mathcal{I}}$, first observe that for every $\tau \in \mathcal{J}(A)$, the inequalities \eqref{ineq1}, \eqref{ineq11} imply the strict inequalities $\b{E}_i(\tau) < (A_i|\tau|^{-J})^2$, and then use continuity.

For each $\lambda \geq 0$, set
$A(\lambda) = \lambda \big(1,1+8C, 1 + CJ_0^{-1}(1+8C))$.
The three rightmost inequalities in \eqref{ineq11} are homogeneous in $A$, and hold for $A(\lambda)$, $\lambda > 0$, iff they hold for $A(1)$, which they do if $\smallEEProof > 0$ is sufficiently small depending on $X$, because it is true for $\smallEEProof = 0$. The definition of $\smallEEProof$ before \eqref{S:E} implies $\smallEEProof \leq (1 + |\asympt{H}_2|^2 + |\asympt{H}_3|^2)^{1/2} \smallEE(X)$. Consequently, the condition on $\smallEEProof$ holds if $\smallEE(X)$ is suitably small.
Set $$\lambda_0 =  2|t_0|^J \sqrt{\b{E}(t_0)} + \max\{8, J_0^{-1}\} C \smallRho \geq  0$$
If $\lambda >  \lambda_0$, then the remaining six inequalities in \eqref{ineq11} hold for $A(\lambda)$, and $\mathcal{J}(A(\lambda)) = \overline{\mathcal{I}}$. By the definition of $\mathcal{J}(A)$, we have $\mathcal{J}(A(\lambda_0)) = \overline{\mathcal{I}}$. By \eqref{eq:comparable}, inequality \eqref{eq:refprop:p2XX} follows if $\bigEE(X)$ is sufficiently big. \qed
\end{proof}

\section{Classical Vacuum Fields} \label{sec:classicalvacuumfields}
We show in Proposition \ref{prop:MAINNEW} that Section \ref{sec:energyestimates} can be applied to  \eqref{shspsi1}, \eqref{shspsi2}. Then we prove the main Theorem \ref{mainenergy}. See Remark \ref{endof6}.
\begin{convention} \label{conv:ppawq}
Theorem \ref{mainenergy} is formulated in terms of the coordinates $x$. Otherwise $q$ is used, see Convention \ref{conv:notation}. Recall $u = \iv^0-\iv^3$.
For any function $f = f(x)$, the notation $f(q)$ stands for $f(x(q))$. For any vector field $v = v^{\mu}(x) (\p/\p x^{\mu})$, the notation $v^{\mu}(q)$ stands for $v^{\nu}(x(q)) ( \p q^{\mu} / \p x^{\nu})(x(q))$. 
\end{convention}
\begin{convention} \label{conv:cconjop}
In this section, $\C^m$ is a real vector space over $\R$ with dimension $2m$. A linear map $\C^m \to \C^n$ is, by convention, $\R$-linear. It is either a real $2n \times 2m$ matrix, or a complex $n\times m$ matrix which may have complex conjugation $C$ as matrix elements. Adopt similar conventions for the real subspaces
in Definitions \ref{def:paramparam} and \ref{def:kdhkueuuueze}:
$$\range \subset \C^5 \oplus \C^8 \oplus \C^5 \qquad \ranges \subset \C^5 \oplus \C^9 \oplus \C^3$$
\end{convention}
\vskip 2mm
To put \eqref{shspsi1}
and \eqref{shspsi2}  in the form required by Section \ref{sec:energyestimates}, we use:
\newcommand{\spspsp}{\rule{0pt}{13pt}}
\begin{list}{\labelitemi}{\leftmargin=1em}
\item[{ \bf \small (S1)}] $a,\anisotropic \in \R$ satisfy $0 < |\anisotropic| \leq |a| \leq 10^{-3}$.
\item[{ \bf \small (S2)}] \spspsp Let $[\,\Psi\,]$ be as in Proposition \ref{prop:fps} with $\text{\bf \footnotesize DATA}|_{q^3 < 1/2} \equiv 0$. Fix an integer $K \geq 0$. Set
$$\Psi_K = \textstyle\sum_{k=0}^{K+1} (\frac{1}{u})^k \Psi(k)$$
\item[{ \bf \small (S3)}]\spspsp Recall $\Psi = (\PsiOne,\PsiTwo,\PsiThree)$. Change fields to
$(\SpecPsiOne,\SpecPsiTwo,\SpecPsiThree) = \Psi - \Psi_K$ and then rearrange to
\begin{align*}
\Xi = (\Xi_1,\Xi_2,\Xi_3) &= \SpecPsiThree \oplus (\SpecPsiOne_1,\SpecPsiOne_2,\SpecPsiOne_4,\SpecPsiOne_5,\SpecPsiTwo_1,\SpecPsiTwo_2,\SpecPsiTwo_3,\SpecPsiTwo_4,\SpecPsiTwo_8) \oplus (\SpecPsiOne_3,\SpecPsiTwo_5,\SpecPsiTwo_6,\SpecPsiTwo_7)
\end{align*}
Let $\pi$ be the permutation matrix with
$(\SpecPsiOne,\SpecPsiTwo,\SpecPsiThree) = \pi  (\Xi_1,\Xi_2,\Xi_3)$.\\
The field $\Xi = (\Xi_1,\Xi_2,\Xi_3)$ takes values in $\pi^{-1}\range \,\subset\, \C^5 \oplus \C^9 \oplus \C^4$. 
\item[{ \bf \small (S4)}]\spspsp System \eqref{shspsi1}, $\b{A}(\iv,\Psi)\Psi = \b{f}(\iv,\Psi)$, is equivalent to
\begin{equation}\label{shsxi}
\b{B}(\iv,\Xi)\Xi = Q(\iv,\Xi)\Xi + \source
\end{equation}
where
\begin{align*}
\b{B}(\iv,\Xi) & = \pi^{-1} \b{A}(\iv,\Psi_{K} + \pi\Xi) \,\pi\\
Q(\iv,\Xi) \Pi & =  \pi^{-1} \textstyle\frac{\dd}{\dd s}\big|_{s=0}\int_0^1\dd s'\big(-\b{A}(\iv,s\pi\Pi)\Psi_K + \b{f}(\iv,\Psi_K + s'\pi \Xi + s\pi\Pi)\big)\\
\source(\iv) & = \pi^{-1}\,\big(\b{f}(\iv,\Psi_{K}) - \b{A}(\iv,\Psi_{K})\Psi_{K}\big)
\end{align*}
Here, $\pi^{-1}\range \to \pi^{-1}\range$, $\Pi \mapsto Q(\iv,\Xi)\Pi$ is $\R$-linear. The operator $\frac{\dd}{\dd s}|_{s=0}\int_0^1\dd s'$ acts on a quadratic polynomial in $s$, $s'$. 
\item[{\bf \small (S5)}]\spspsp
$\Pi \mapsto \b{B}^{\mu}(q,\Pi)$
and $\Pi \mapsto Q(q,\Pi)$ are affine $\R$-linear. Set
$$\overset{\hskip -6pt \bullet}{\b{B}^\mu}(\iv)\Pi = \tfrac{\dd}{\dd s}\big|_{s=0}\, \b{B}^\mu(\iv,s\Pi)\qquad
\overset{ \bullet}{Q}(\iv)\Pi = \tfrac{\dd}{\dd s}\big|_{s=0}\, Q(\iv,s\Pi)$$
\item[{\bf \small (S6)}]\spspsp The $\C^5\oplus \C^9\oplus \C^4$ block decomposition of $\b{B}$ is $\b{B} = \mathrm{diag}(B_1,B_2,B_3)$,\,  $B_2 = \mathbbm{1}_{9}L$,\, $B_3 = \mathbbm{1}_{4}N$, and $B_1$ is the $5\times 5$ Hermitian matrix operator on the left hand side of
\eqref{ri1:3}. The block decomposition of $Q$ is denoted $Q = (Q_{mn})_{m,n=1,2,3}$.
\item[{\bf \small (S7)}] \spspsp Let $\asympt{Q}(t)$ be the $\C^5\oplus \C^9\oplus \C^4$ block matrix
$$\asympt{Q}(t) = (\asympt{Q}_{mn})_{m,n=1,2,3} = \begin{pmatrix} 0 & 0 & 0 \\
\asympt{Q}_1 & 0 & 0 \\
0 & |t|^{-1}\asympt{Q}_2 & |t|^{-1}\asympt{Q}_3
\end{pmatrix}$$
$\asympt{Q}_1$, $\asympt{Q}_2$, $\asympt{Q}_3$ are the matrices whose only nonzero entries are ($C$ is complex conjugation):
$(\asympt{Q}_1)_{51} = (\asympt{Q}_1)_{72} = 
(\asympt{Q}_2)_{28} = 
(\asympt{Q}_3)_{22} = -1$,
$(\asympt{Q}_1)_{93} = 1$,
$(\asympt{Q}_2)_{19} = -1-C$,
$(\asympt{Q}_2)_{27} = C$,
$(\asympt{Q}_3)_{11} = -2$.
Observe that 
$\asympt{Q}_3$ is symmetric and $\asympt{Q}_3 \leq 0$.
\item[{\bf \small (S8)}] \spspsp $\asympt{\b{B}} = \asympt{\b{B}}^{\mu}\tfrac{\p}{\p q^{\mu}} = \mathrm{diag}(U,U,U,U,V) \oplus \mathbbm{1}_{9} V \oplus \mathbbm{1}_{4}U$
with $U = 
 \frac{\p}{\p q^0}$, $V 
= \frac{\p}{\p q^0} + \frac{\p }{\p q^3}$.
\item[{\bf \small (S9)}] \spspsp Let $\psi = \psi(\b{q}):\;\R^3 \to [0,1]$ be the smooth cutoff function
$$\psi(\b{q}) = \frac{s\big(3 - |\tfrac{\anisotropic}{a}\xi|_{\R^2}\big)}{s\big(3 -  |\tfrac{\anisotropic}{a}\xi|_{\R^2}\big) + s\big( |\tfrac{\anisotropic}{a}\xi|_{\R^2} - \tfrac{5}{2}\big)}
\;
\frac{s\big(\tfrac{3}{4} - |\iv^3-1|\big)}{s\big(\tfrac{3}{4} - |\iv^3-1|\big) + s\big(|\iv^3-1| - \tfrac{2}{3}\big)}$$
where $\xi = (\iv^1,\iv^2)$, and $s(x) = 0$ if $x \leq 0$ and $s(x) = e^{-1/x}$ when $x > 0$. Set
\begin{align*}
\mathcal{K} & = \overline{D_{3|\frac{a}{\anisotropic}|}(0) \times (\tfrac{1}{4},\tfrac{7}{4})} \qquad \subset \qquad
\mathcal{Q} = D_{4|\frac{a}{\anisotropic}|}(0)\times (0,2)
\end{align*}
By construction, $\supp_{\R^3} \psi \subset \mathcal{K}$, and $\psi$ is equal to $1$ on $D_{\frac{5}{2}|\frac{a}{\mathfrak A}|}(0)
\times (\tfrac{1}{3},\tfrac{5}{3})$.\\
By {\bf \small (S1)} we have
$\|\psi\|_{C^R(\R^3)}\lesssim_R 1$ for each integer $R\geq 0$.
\item[{\bf \small (S10)}] \spspsp
Set $\b{M}^{\mu}(\iv,\Xi) = \psi \b{B}^{\mu}(\iv,\Xi) + (1-\psi) \asympt{\b{B}}^{\mu}$
and $H(\iv,\Xi) = \psi Q(\iv,\Xi) + (1-\psi) \asympt{Q}(t)$
and $h(\iv,\Xi) = H(\iv,\Xi)\Xi + \psi\source(\iv)$.
\item[{\bf \small (S11)}]\spspsp Recall {\bf \small (S4)}. If $\Xi^{(1)}$, $\Xi^{(2)}$ are both smooth solutions to $\b{B}\Xi = Q \Xi + \source$, then $\Upsilon = \Xi^{(2)} - \Xi^{(1)}$ is a solution to
$\b{B}(\iv,\Xi^{(1)})\Upsilon = G \Upsilon$. By definition,
$$  G \Pi =
\tfrac{\dd}{\dd s}\big|_{s=0} \big( Q(\iv,\Xi^{(1)}) (s\Pi) - \b{B}(\iv,s\Pi) \Xi^{(2)} + Q(\iv,s\Pi)\Xi^{(2)}\big)
$$
The map $
\pi^{-1}\range
\to \pi^{-1}\range,\; \Pi \mapsto 
G\Pi = G\big(\iv,\Xi^{(1)},\Xi^{(2)},\p_{\iv}\Xi^{(2)}\big)\Pi$ is $\R$-linear.
\item[\secs{1}]\spspsp
Recall the constraint field $\Psi^{\sharp} = (\PsiSharpOne,\PsiSharpTwo,\PsiSharpThree)$. Rearrange to
$$
\Xi^{\sharp} = ({\Xi^{\sharp}}_1, {\Xi^{\sharp}}_2, {\Xi^{\sharp}}_3)  = \PsiSharpThree\oplus (\PsiSharpOne_1, \PsiSharpOne_2, \PsiSharpOne_4, \PsiSharpOne_5, \PsiSharpTwo_1, \PsiSharpTwo_2, \PsiSharpTwo_3, \PsiSharpTwo_4, \PsiSharpTwo_7, \PsiSharpTwo_8)\oplus (\PsiSharpOne_3, \PsiSharpTwo_5, \PsiSharpTwo_6, \PsiSharpTwo_9)
$$
with values in $\widehat{\pi}^{-1} \ranges  \subset  \C^3\oplus \C^{10}\oplus \C^4$. The permutation $\widehat{\pi}$ is such that  $\Psi^{\sharp}  =  \widehat{\pi} {\Xi^{\sharp}}$.
\item[\secs{2}]\spspsp
System \eqref{shspsi2}, $\widehat{\b{A}}(\iv,\Psi)\Psi^{\sharp} = \widehat{\b{f}}(\iv,\Psi,\p_\iv \Psi)\Psi^{\sharp}$, is equivalent to the linear, homogeneous symmetric hyperbolic system
$\widehat{\b{B}}(\iv,\Psi)\Xi^{\sharp} = \widehat{Q}(\iv,\Psi,\p_\iv \Psi)\Xi^{\sharp}$, with
\begin{align*}
\widehat{\b{B}}^{\mu}(\iv,\Psi) & = \widehat{\pi}^{-1}\widehat{\b{A}}^{\mu}(\iv,\Psi)\widehat{\pi} &
\widehat{Q}(\iv,\Psi,\p_\iv \Psi) & = \widehat{\pi}^{-1}\widehat{\b{f}}(\iv,\Psi,\p_\iv \Psi)\widehat{\pi}
\end{align*}
The map $\widehat{\pi}^{-1}\ranges \to \widehat{\pi}^{-1}\ranges,\;\Xi \mapsto \widehat{Q}(\iv,\Psi,\p_\iv \Psi)\Xi$ is $\R$-linear. Moreover,
$\widehat{\b{B}}^{\mu}$ depends affine $\R$-linearly on $\Psi$, and $\widehat{Q}$ depends affine $\R$-linearly on $\Psi \oplus \p_{\iv}\Psi$.
\item[\secs{3}]\spspsp The $\C^3\oplus \C^{10}\oplus \C^4$ block decomposition of $\widehat{\b{B}}$ is $\b{\widehat{B}} = \mathrm{diag}\big(\widehat{B}_1,\widehat{B}_2,\widehat{B}_3\big)$, $\widehat{B}_2 = \mathbbm{1}_{10}L$, $\widehat{B}_3 = \mathbbm{1}_{4}N$, and $\widehat{B}_1$ is the $3\times 3$ Hermitian matrix operator on the left hand side of \eqref{ri3:3}.
The block decomposition of $\widehat{Q}$ is denoted $\widehat{Q} = (\widehat{Q}_{mn})_{m,n=1,2,3}$.
\item[\secs{4}]\spspsp Let $\asympt{\widehat{Q}}(t) $ be the $\C^3 \oplus \C^{10}\oplus \C^4$ block matrix
$$\asympt{\widehat{Q}}(t)= (\asympt{\widehat{Q}}_{mn})_{m,n=1,2,3} = \begin{pmatrix} 0 & 0 & 0 \\
\asympt{\widehat{Q}}_1 & 0 & 0 \\
0 & |t|^{-1}\asympt{\widehat{Q}}_2 & |t|^{-1}\asympt{\widehat{Q}}_3\end{pmatrix}
$$
$\widehat{\asympt{Q}}_1$, $\widehat{\asympt{Q}}_2$, $\widehat{\asympt{Q}}_3$ are the  matrices whose only nonzero entries are
$(\asympt{\widehat{Q}}_1)_{5,1} = 1$,
$(\asympt{\widehat{Q}}_2)_{1,9} = (\asympt{\widehat{Q}}_2)_{4,10} = (\asympt{\widehat{Q}}_3)_{1,1} = -1$, 
$(\asympt{\widehat{Q}}_2)_{1,10} = (\asympt{\widehat{Q}}_2)_{2,8}  = (\asympt{\widehat{Q}}_2)_{3,7} = (\asympt{\widehat{Q}}_2)_{4,9}  = -C$. 
Observe that $\asympt{\widehat{Q}}_3$ is symmetric and $\asympt{\widehat{Q}}_3 \leq 0$.
\item[\secs{5}]\spspsp $\asympt{\b{\widehat{B}}} = \asympt{\b{\widehat{B}}}^{\mu}\tfrac{\p}{\p \iv^\mu} = \mathbbm{1}_{3}U\oplus \mathbbm{1}_{10}V\oplus \mathbbm{1}_{4}U$
with $U = 
 \frac{\p}{\p q^0}$ and $V 
= \frac{\p}{\p q^0} + \frac{\p }{\p q^3}$.
\end{list}
\begin{definition}\label{gs1} Each entry to the left of the vertical bar is a generic symbol for a polynomial (with complex coefficients) in the (components of the) quantities to the right and their complex conjugates.
$$
\begin{array}{c|l}
\mathcal{J} & \;u^{-1} \vspace{1mm}\\
\boldsymbol{\mathcal{J}} & \;1\quad \text{(That is, a generic symbol for a complex number)} \vspace{1mm}\\
\mathcal{H} & \; \text{linear over $\R$ in } \;\Psi(0) \vspace{1mm}\\
\boldsymbol{\mathcal{H}} & \;\text{linear over $\R$ in } \;\anisotropic,\; \b{e},\; \boldsymbol{\lambda} \vspace{1mm}\\
\mathcal{G}_K & \;u^{-1},\; \anisotropic,\; S,\; \b{e},\; \boldsymbol{\lambda},\; \Psi(k)_{k=0\ldots K+1},\; \text{and their first derivatives} \vspace{1mm}\\
\boldsymbol{\mathcal{G}}_K &\;u^{-1},\; \anisotropic,\; \u{u},\; S_K,\; \b{e},\; \boldsymbol{\lambda},\; \Psi(k)_{k=0\ldots K+1},\; \text{and their first derivatives} \\
& \;\text{It has no constant term as a polynomial in $\Psi(k)$ and its derivatives.}\vspace{1mm} \\
\mathcal{G}^{\sharp} & \; u^{-1},\; \anisotropic,\; S,\; \b{e},\;  \boldsymbol{\lambda},\; \Psi(0),\; \Psi - \Psi(0),\; \text{and first derivatives} \vspace{1mm}\\
\mathcal{G}^{\sharp}_0 & \; u^{-1},\; \anisotropic,\; S,\; \b{e},\; \boldsymbol{\lambda},\; \Psi(0),\; \Psi - \Psi(0) \vspace{1mm}\\
\mathcal{G}^{\sharp}_1 & \; \text{like $\mathcal{G}^{\sharp}$, but it has no constant term as a polynomial in $\Psi-\Psi(0)$, $\p_\iv\big(\Psi-\Psi(0)\big)$} \vspace{1mm}\\
\boldsymbol{\mathcal{G}}^{\sharp} & \; u^{-1},\; \anisotropic,\; \u{u}, \;S_0,\; \b{e},\; \boldsymbol{\lambda},\; \Psi(0),\; \text{and first derivatives}
\end{array}
$$
The symbols $\mathcal{G}_K$, $\boldsymbol{\mathcal{G}}_K$ represent polynomials
whose coefficients and degrees \emph{may depend} on $K$.
The remaining eight symbols represent polynomials whose coefficients and degrees are \emph{independent} of $K$.  The functions $S_K$ and $S_0$ are defined through $S = -\sum_{k=0}^{\ell}(\frac{1}{u})^k\,\anisotropic^{2(k+1)}\u{u}^{k+1} + \frac{1}{u^{\ell+1}}S_{\ell}$ for all $\ell \geq 0$, see \eqref{eq:SdefSdefSdef} and \eqref{eq:sexpan}. 
\end{definition}
\begin{proposition}\label{generik1}
\emph{Part 1.}
Let $\source = (\source_m)_{m=1,2,3}$ be the $\C^5\oplus \C^9\oplus \C^4$ decomposition.
\begin{align*} 
&\begin{alignedat}{4}
\b{B}^\mu(\iv,0) \;& = \;\asympt{\b{B}}^\mu &\quad &+ \quad \phantom{u^{-1} \mathcal{H} +}\;\,\, u^{-1} \boldsymbol{\mathcal{H}} &\quad &+ \quad u^{-2}\mathcal{G}_K\\
Q_{1n}(\iv,0)  \;& =\;  \asympt{Q}_{1n}(\iv)&&+ \quad u^{-1} \mathcal{H} + \,u^{-1} \boldsymbol{\mathcal{H}} & & + \quad u^{-2} \mathcal{G}_K\\
Q_{2n}(\iv,0)  \;& =\; \asympt{Q}_{2n}(\iv) &&+ \quad \phantom{u^{-1}} \mathcal{H} + \,\phantom{u^{-1}}\boldsymbol{\mathcal{H}} &&+ \quad u^{-1} \mathcal{G}_K\\
Q_{3n}(\iv,0) \;& =\; \asympt{Q}_{3n}(\iv)  &&+ \quad (|t|^{-1}+u^{-1})\boldsymbol{\mathcal{J}} && +\quad  u^{-2}\mathcal{G}_K
\end{alignedat}\\
\rule{0pt}{16pt}
&\begin{aligned}
\source_1(\iv) & = u^{-(K+2)}\boldsymbol{\mathcal{G}}_K &
\source_2(\iv) & =  u^{-(K+2)}\boldsymbol{\mathcal{G}}_K &
\source_3(\iv) & = u^{-(K+3)}\boldsymbol{\mathcal{G}}_K
\end{aligned}\\
\rule{0pt}{14pt}
 &\begin{aligned} 
\overset{\hskip -6pt \bullet}{\b{B}^{\mu}}(\iv) & = u^{-2}\mathcal{J} &
\overset{\hskip -10pt \bullet}{Q_{1n}}(\iv) & = u^{-1}\mathcal{J} &
\overset{\hskip -10pt \bullet}{Q_{2n}}(\iv) & = \mathcal{J} &
\overset{\hskip -10pt \bullet}{Q_{3n}}(\iv) & = u^{-2}\mathcal{J}
\end{aligned}
\intertext{\emph{Part 2a.}}
& \begin{alignedat}{4}
\b{\widehat{B}}^{\mu}(\iv,\Psi)  & \; =\;\; \asympt{\b{\widehat{B}}}^\mu && && + u^{-2}\mathcal{G}^{\sharp}_{0}\\
\widehat{Q}_{1n}(\iv,\Psi,\p_{\iv}\Psi) &\;=\;\;\asympt{\widehat{Q}}_{1n}(\iv)  && && + u^{-2}\mathcal{G}^{\sharp}\\
\widehat{Q}_{2n}(\iv,\Psi,\p_{\iv}\Psi) & \;= \;\;\asympt{\widehat{Q}}_{2n}(\iv) && + \hskip 20mm\boldsymbol{\mathcal{H}} && + u^{-1}\mathcal{G}^{\sharp}\\
\widehat{Q}_{3n}(\iv,\Psi,\p_{\iv}\Psi)  & \;=\;\; \asympt{\widehat{Q}}_{3n}(\iv) &\quad& + (|t|^{-1} + u^{-1})\boldsymbol{\mathcal{J}} &\quad & + u^{-2}\mathcal{G}^{\sharp}
\end{alignedat}
\intertext{\emph{Part 2b.}}
& 
\begin{aligned}
{\Xi^{\sharp}}_1,\; {\Xi^{\sharp}}_3,\; \PsiSharpOne_1,\; \PsiSharpOne_2,\; \PsiSharpTwo_1,\; \PsiSharpTwo_2,\; \PsiSharpTwo_3 & = u^{-1}\boldsymbol{\mathcal{G}}^{\sharp} + \phantom{u}\, \mathcal{G}_1^{\sharp}\\
\PsiSharpOne_4,\; \PsiSharpOne_5,\; \PsiSharpTwo_4,\; \PsiSharpTwo_7,\; \PsiSharpTwo_8 & = u^{-1}\boldsymbol{\mathcal{G}}^{\sharp} + u \,\mathcal{G}_1^{\sharp}
\end{aligned}
\intertext{\emph{Part 2c.} It is a consequence of \eqref{shspsi2} that}
& L \begin{pmatrix} \PsiSharpOne_4 \\ \PsiSharpOne_5 \\ \PsiSharpTwo_4 \\ \PsiSharpTwo_7 \\ \PsiSharpTwo_8
\end{pmatrix}
= \begin{pmatrix}
\phantom{i}\,\b{e}(\overline{\PsiSharpTwo}_4 - \PsiSharpTwo_5) + \phantom{i}\,\b{e}(\overline{\PsiSharpTwo}_6-\PsiSharpTwo_3)\\
i\,\b{e}(\overline{\PsiSharpTwo}_4 - \PsiSharpTwo_5) -i\,\b{e}(\overline{\PsiSharpTwo}_6-\PsiSharpTwo_3)\\
0\\
\phantom{-}\boldsymbol{\lambda}(\PsiSharpTwo_6 - \overline{\PsiSharpTwo}_3) - \overline{\boldsymbol{\lambda}}(\PsiSharpTwo_4 - \overline{\PsiSharpTwo}_5)\\
-\overline{\boldsymbol{\lambda}}(\overline{\PsiSharpTwo}_6 - \PsiSharpTwo_3) + \boldsymbol{\lambda}(\overline{\PsiSharpTwo}_4 - \PsiSharpTwo_5)
\end{pmatrix} + u^{-1}\,\mathcal{G}^{\sharp}\,\Xi^{\sharp}
\end{align*}
\end{proposition}
\begin{proof} \emph{Part 1.} By direct verification, using $\text{\bf\small (S4)}$, $\text{\bf\small (S6)}$, $\text{\bf\small (S7)}$, Proposition \ref{prop:ri1}, Remark \ref{rem:DNL}, Definition \ref{def:gspandps}. 
 Three representative cases are ($C$ is complex conjugation):
$$
\begin{array}{l l}
\text{entry $(4,5)$ of $\b{B}^1(\iv,0) - \asympt{\b{B}}^1$} & =-\tfrac{1}{u}\b{e} + \tfrac{1}{u^2}\mathcal{G}_K\\
\text{entry $(6,5)$ of $Q_{22}(\iv,0) -\asympt{Q}_{22}(\iv)$} & =-\PsiTwo_1(0)\, C - \overline{\PsiTwo_1(0)} + \tfrac{1}{u}\mathcal{G}_K\\
\text{entry $(1,1)$ of $Q_{33}(\iv,0) - \asympt{Q}_{33}(\iv)$}  & = (\tfrac{2}{u}+\tfrac{1}{u^2}\mathcal{G}_K) - (-\tfrac{2}{|t|})
\end{array}
$$
For $\source_1$, use $\mathbf{f}(q,0) = 0$ and the construction of $[\,\Psi\,]$.
The truncation at $K+1$ in {\bf \small (S2)} is used, in particular, to show that
 that the \emph{last} component of $\source_1$ is 
$u^{-(K+2)}\boldsymbol{\mathcal{G}}_K$.\\
\emph{Parts 2a, 2c.} By direct verification, using Proposition \ref{prop:ri3}, Remark \ref{rem:DNL}, Definition \ref{def:gspandps}.\\
\emph{Part 2b.} Recall Proposition \ref{prop:ri2}. Write $\Psi = \Psi(0) + (\Psi - \Psi(0))$. Consider
each field in Part 2b as a polynomial in $\Psi-\Psi(0)$ and $\p_{\iv}(\Psi-\Psi(0))$, with coefficients possibly depending on $\Psi(0)$, $\p_{\iv}\Psi(0)$. The constant term of this polynomial is $u^{-1}\boldsymbol{\mathcal{G}}^{\sharp}$, by $[\,\Psi^{\sharp}\,] = 0$ and Remark \ref{rem:grgrhz}.
 Everything else is of the form $\mathcal{G}_1^{\sharp}$ or $u\,\mathcal{G}_1^{\sharp}$, respectively. \qed
\end{proof}
\begin{proposition}[Main Technical Proposition] \label{prop:MAINNEW} 
Fix $K \geq 0$
and
$\text{\bf \small DATA}$ as in {\bf \small (S2)}. Recall {\bf \small (S1)} through {\bf \small (S11)} and \secs{1} through \secs{5}. Let $R \geq 4$ be an integer. Set
\begin{equation}\label{prop:MAINNEW:a0}
Y = \big(R,K,\|\text{\bf \small DATA}\|_{C^{R+2K+6}(\mathcal{Q})}\big)
\end{equation}
\vskip 2mm
\noindent Fix $\smallSHSprime \in (0,1)$ and $T < -1000$. There are constants $\smallRxx(R), \smallYxx(Y) \in (0,1)$, non-increasing in all their arguments, such that Parts 1, 2, 3 below hold whenever
\begin{equation} \label{prop:MAINNEW:a1}
|a|  \leq \smallRxx(R)\smallSHSprime \qquad
\|\text{\bf \small DATA}\|_{C^{R+4}(\mathcal{Q})} \leq \smallRxx(R)\smallSHSprime \qquad
|T|^{-1} \leq \smallYxx(Y) \smallSHSprime
\end{equation}
\vskip 2mm
\noindent
{\bf Part 1.}
All the assumptions of Proposition \ref{exexex} (also those of Part 2) hold for the tuple
$(T,31,\b{M}^{\mu},h,\asympt{\b{B}}^{\mu},\asympt{Q},\mathcal{K},\mathcal{Q})$
whose last six entries are defined in $\text{\bf \small (S7), (S8), (S9), (S10)}$.\\
\noindent {\bf Part 2.} If $t_0 < T$ and $\Xi: [t_0,t_0 + \epsilon)\times \R^3 \to \pi^{-1}\range$ ($\epsilon > 0$) is a $C^{\infty}$-solution to $\b{M}(\iv,\Xi)\,\Xi = h(\iv,\Xi)$ which vanishes identically at $t_0$, then 
\begin{equation}\label{brummi}
|t_0|^{K+1}\; \textstyle\sup_{\xi_0 \in \R^2}\; \textstyle\sqrt{E^R_{\mathcal{O}(\xi_0,2,t_0)}\{\Xi\}(t_0)} \;\;\leq\;\; (\smallYxx(Y))^{-1}
\end{equation}
\noindent {\bf Part 3}\emph{, Assumptions}. We distinguish three systems {\bf \small (Sys1)}, {\bf \small (Sys2)}, {\bf \small (Sys3)}. Part 3 of the proposition applies to each of these systems individually. First,
\begin{equation*}
\begin{array}{r r l l}
\text{\bf \small (Sys1):} & \qquad b & = 2  \qquad &\xi_0 \in \R^2
\vspace{2mm}
\\
\text{{\bf \small (Sys2), (Sys3):}}  & b & = 1  &\xi_0 \in D_{2|\frac{a}{\anisotropic}|}(0)
\vspace{2mm}
\end{array}
\end{equation*}
Second, fix $t_0 < T$ and let $\mathcal{V} = \bigcup_{t \in (t_0,T)}\{t\}\times \mathcal{O}(\xi_0,b,t) \subset \R^4$.
For {\bf \small (Sys1)}, {\bf \small (Sys3)} there is a field $\Xi$, for {\bf \small (Sys2)} there are two fields $\Xi^{(1)}$ and $\Xi^{(2)}$, with:
\begin{itemize}
\item[(i)] $\Xi,\Xi^{(1)}, \Xi^{(2)} \in C^p(\,\overline{\mathcal{V}},\pi^{-1}\range)$ with $p = \infty$ for {\bf \small (Sys1)}, {\bf \small (Sys3)} and $p = 1$ for {\bf \small (Sys2)}.
\item[(ii)] 
\begin{equation*}
\begin{cases}\;
\b{M}(\iv,\Xi)\Xi = h(\iv,\Xi)  &\qquad \text{for {\bf \small (Sys1)}}\\
\;\b{B}(\iv,\Xi^{(i)})\Xi^{(i)} = Q(\iv,\Xi^{(i)}) + \source(\iv) &\qquad  \text{for {\bf \small (Sys2)}}\\
\;\b{B}(\iv,\Xi)\Xi = Q(\iv,\Xi) + \source(\iv) &\qquad  \text{for {\bf \small (Sys3)}}
\end{cases}
\end{equation*}
\item[(iii)] $\Xi,\Xi^{(1)}, \Xi^{(2)}$ vanish when $\iv^3 < \tfrac{1}{2}$.
\item[(iv)] For all $t \in (t_0,T)$:
$$
\left\{
\begin{aligned}
E^R_{\mathcal{O}(\xi_0,2,t)}\{\Xi\}(t) &\, \leq \,(\smallRxx(R) \smallSHSprime)^2 & \qquad &\text{for {\bf \small (Sys1)}}\\
\SUP^{(1)}_{\mathcal{O}(\xi_0,1,t)}\{\Xi^{(i)}\}(t) & \,\leq\, \smallRxx(R) \smallSHSprime & &  \text{for {\bf \small (Sys2)}}\\
\SUP^{(1)}_{\mathcal{O}(\xi_0,1,t)}\{\Xi\}(t) & \,\leq\, \smallRxx(R) \smallSHSprime & &  \text{for {\bf \small (Sys3)}}
\end{aligned}\right.
$$
\end{itemize}
\noindent \emph{Part 3, Conclusion 1:} $\Xi(\mathcal{V}),\, \Xi^{(1)}(\mathcal{V}),\, \Xi^{(2)}(\mathcal{V}) \,\subset\,  \overline{B_{1/2}(0)}\, \subset\, \pi^{-1} \range\,\cong\, \R^{31}$.
\vskip 1mm
\noindent \emph{Part 3, Conclusion 2:}
The assumptions {\bf \small (RE0)} through {\bf \small (RE12)} and 
{\bf \small (RE13a)} or {\bf \small (RE13b)} in  Proposition \ref{prop:fdsdsfkjhdskhds} hold if the tuple in Proposition \ref{prop:fdsdsfkjhdskhds} is given by
$$\begin{array}{ 
c | c | c || c}
\text{{\bf \small (Sys1)}, uses {\bf \small (RE13a)}} & \text{{\bf \small (Sys2)}, uses {\bf \small (RE13b)}} & \text{{\bf \small (Sys3)}, uses {\bf \small (RE13b)}} & \text{see} \\
\hline
\hline
(t_0,T) & (t_0,T) & (t_0,T)\\
\xi_0 & \xi_0 & \xi_0 \\
2 & 1 & 1\\
(10,15,6) & (10,15,6) & (6,18,8) \\
R & 0 & 0 \\
\b{M}^{\mu}(\iv,\Xi) & \b{B}^{\mu}(\iv,\Xi^{(1)}) & \widehat{\b{B}}^{\mu}(\iv,\Psi) 
& \text{\bf \small (S10), {\bf \small (S4)}, \secs{2}}
\\
H(\iv,\Xi) & G\big(\iv,\Xi^{(1)},\Xi^{(2)},\p_\iv \Xi^{(2)}\big) & \widehat{Q}(\iv,\Psi, \p_{\iv}\Psi)
& \text{\bf \small (S10), {\bf \small (S11)}, \secs{2}}
\\
\psi\, \source & 0 & 0
& \text{\bf \small (S9), {\bf \small (S4)}}
\\
\Xi & \Upsilon
= \Xi^{(2)} - \Xi^{(1)}
& \Xi^{\sharp} & \text{\secs{1}}\\
\asympt{\b{B}}^{\mu} & \asympt{\b{B}}^{\mu} & \widehat{\asympt{\b{B}}}^{\mu}
& \text{\bf \small (S8), \secs{5}}
\\
\asympt{Q} & \asympt{Q} & \widehat{\asympt{Q}}
& \text{\bf \small (S7), \secs{4}}
\\
(\smallYxx(Y))^{-1}& 0 & 0 \\
\smallSHSprime  & \smallSHSprime & \smallSHSprime\\
K+1 & >0 & > 0
& \text{\bf \small (S2)}
\end{array}
$$
\emph{Part 3, Conclusion 3:} For {\bf \small (Sys3)}, if $t_0 + 1 < T$, then
\begin{equation*}
\textstyle\sup_{t \in (t_0+1,T)} |t| \SUP^{(0)}_{\mathcal{O}(\xi_0,1,t)}\{\Xi^{\sharp}\}(t) \lesssim_Y \big(1 + \sup_{t \in (t_0,T)} |t| \SUP^{(1)}_{\mathcal{O}(\xi_0,1,t)}\{\Xi\}(t)\big)
\end{equation*}
\end{proposition}
\begin{proof} We give a detailed sketch of the proof. The proof makes a finite number of smallness assumptions on $\smallRxx(R)$ and $\smallYxx(Y)$.\\
\emph{Overall Preliminaries.} For all $n \geq 0$ and $0 \leq k \leq K+1$ and $\beta \in \N_0^4$ with $|\beta|\leq 1+R$, the following estimates hold on $(-\infty,T)\times \mathcal{Q}$:
\begin{align*}
|\p^{\beta}u^{-n}| & \lesssim_{(R,n)} |t|^{-n} & |\p^{\beta}\anisotropic| & = |\anisotropic|\, \delta_{\beta 0} \leq |a| & |\p^{\beta}\u{u}| & \leq 2\\
|\p^{\beta}\Psi(0)| & \lesssim_R \smallRxx(R)\, \smallSHSprime  & |\p^{\beta}\b{e}| & \leq \tfrac{17}{2}\,|a| & |\p^{\beta}\boldsymbol{\lambda}| & \leq \tfrac{17}{2}\,|a| \\
|\p^{\beta}\Psi(k)| & \lesssim_Y 1  & |\p^{\beta}S| & \lesssim_R 1 & |\p^{\beta}S_K| & \lesssim_Y 1
\end{align*}
To estimate $\Psi(0)$, $\Psi(k)$ use \eqref{prop:MAINNEW:a0}, \eqref{prop:MAINNEW:a1}, Proposition \ref{prop:coeffest}. See Definition \ref{gs1} for $S_K$. Hence, by the product rule and \eqref{prop:MAINNEW:a1}, for all $\alpha \in \N_0^4$ with $|\alpha|\leq R$:
\begin{align*}
 n & = 0,1 & |t|^n\, |\p^{\alpha}(u^{-(n+1)}\mathcal{G}_K)| & \; \lesssim_Y\;  |T|^{-1}
\lesssim_Y \smallYxx(Y)\, \smallSHSprime\\
n & = 1,2 & |t|^{K+n}|\p^{\alpha}(u^{-(K+n)}\boldsymbol{\mathcal{G}}_K)| & \; \lesssim_Y\; 1\\
 n & = 0,1 & |t|^n\, |\p^{\alpha}(u^{-n}\mathcal{H})| & \; \lesssim_R\; \smallRxx(R)\, \smallSHSprime\\
 n & = 0,1 & |t|^n\, |\p^{\alpha}(u^{-n}\boldsymbol{\mathcal{H}})| & \; \lesssim_R \; |a| \lesssim_R \smallRxx(R)\, \smallSHSprime\\
 n & = 0,1,2 & |t|^n\, |\p^{\alpha}(u^{-n}\mathcal{J})| & \; \lesssim_R 1\\
 && |t|\, |\p^{\alpha}(|t|^{-1}+u^{-1})\boldsymbol{\mathcal{J}}| & \; \lesssim_R \; |T|^{-1} \lesssim_R \smallYxx(Y)\, \smallSHSprime
\end{align*}
at every point of $(-\infty,T)\times \mathcal{Q}$.
In this instance, the constants also depend on the particular polynomial represented by the generic symbols.\\
\noindent \emph{Preliminaries for Part 3.} Let $(\mathcal{V}_1,\mathcal{V}_2)$ be the open cover of $\mathcal{V}$ given by $\mathcal{V}_1 = \mathcal{V}\cap ((t_0,T) \times \mathcal{Q})$ and
$\mathcal{V}_2 = \mathcal{V}\cap ((t_0,T)\times (\R^3 \setminus \mathcal{K}))$.
The sets $\mathcal{Q}$, $\mathcal{K}$ are defined in {\bf \small (S9)}.  
\begin{list}{\labelitemi}{\leftmargin=1em}
\item {\bf \small (Sys1)}: The Overall Preliminaries apply to $\mathcal{V}_1$. On $\mathcal{V}_2$
the equations simplify because $\psi = 0$. The estimate $\|\psi\|_{C^R(\R^3)}\lesssim_R 1$ in {\bf \small (S9)} is used on the transition region.
\item {\bf \small (Sys2)}: $\mathcal{V} = \mathcal{V}_1$. The Overall Preliminaries  suffice.
\item {\bf \small (Sys3)}: $\mathcal{V} = \mathcal{V}_1$. Supplement Overall Preliminaries by:
\begin{align*}
 |\Psi - \Psi(0)|\;\;,\;\;\big|\p_{\iv}\big(\Psi - \Psi(0)\big)\big|& \;  \lesssim_Y \;1\\
 |t|\, |u^{-2}\mathcal{G}^{\sharp}_{0}|\;\;,\;\; |t|\, |\p_{\iv}(u^{-2}\mathcal{G}^{\sharp}_{0})| & \; \lesssim_Y \;|T|^{-1} \lesssim_Y \smallYxx(Y)\, \smallSHSprime\\
 n  = 0,1:\qquad \qquad |t|^n\, |u^{-(n+1)}\mathcal{G}^{\sharp}| & \;\lesssim_Y \;|T|^{-1} \lesssim_Y \smallYxx(Y)\, \smallSHSprime\\
  |\boldsymbol{\mathcal{H}}| & \;\lesssim \;\;|a|\lesssim \smallRxx(R)\, \smallSHSprime\\
 |t|\, |(|t|^{-1}+u^{-1})\boldsymbol{\mathcal{J}}|& \;\lesssim_Y \;|T|^{-1} \lesssim_Y \smallYxx(Y)\, \smallSHSprime
\end{align*}
on $\mathcal{V}$. For the first, use
$\Psi - \Psi(0) = \sum_{k=1}^{K+1} (\tfrac{1}{u})^k\Psi(k) + \pi\Xi$
and (iv) in the proposition. The rest are consequences of the Overall Preliminaries.
\end{list}
\noindent \emph{Proof of Part 1.} If $\smallRxx(R)$, $\smallYxx(Y)$ are small enough, then $\tfrac{1}{2}\leq \b{M}^0(\iv,\Xi) \leq 2$ and $\b{M}^3(\iv,\Xi) \geq 0$ for all $(\iv,\Xi) \in ((-\infty,T)\times \R^3) \times B_2(0)$, with $B_2(0) \subset \R^{31}$. In fact, $\mathbf{M}^{\mu}$ is a convex combination of $\mathbf{B}^{\mu}$ and $\asympt{\mathbf{B}}^{\mu}$, and to estimate $\mathbf{B}^0$ and $\mathbf{B}^3$, use $|\Xi| \leq 2$ and the estimates for $\Psi(0)$ and $\Psi(k)$ in the Overall Preliminaries. To check the assumptions of
 Proposition \ref{exexex} Part 2, use $\Psi_K|_{q^3 < 1/2} \equiv 0$, see {\bf \small (S2)}.\\
\emph{Proof of Part 2.} One must estimate $\p^{\alpha}\Xi(t_0,\,\cdot\,)$ with $\alpha\in \N_0^4$, $|\alpha| \leq R$.  They are determined by $\b{M}(\iv,\Xi)\Xi = h(\iv,\Xi)$ and $\Xi(t_0,\,\cdot\,) \equiv 0$. 
They vanish on $\R^3\setminus \mathcal{K}$, by the support of $\psi$. To obtain decay, use the results about $\source$ in Proposition \ref{generik1} Part 1.\\
\noindent \emph{Proof of Part 3, Conclusion 1.} By (iv) in the proposition and by making $\smallRxx(R)$ small enough. For {\bf \small (Sys1)}, we also use $R \geq 2$ and the Sobolev inequality \eqref{eq:sobolev}.\\
\noindent \emph{Proof of Part 3, Conclusion 2.}
Most of {\bf \small (RE0)} to {\bf \small (RE12)} and {\bf \small (RE13a)} or {\bf \small (RE13b)}
in Proposition \ref{prop:fdsdsfkjhdskhds} are verified directly.
To check {\bf \small (RE4)}, use Proposition \ref{prop:uzewuztuztwqe873}. 
To check {\bf \small (RE5)} for {\bf \small (Sys3)}, recall that $\Psi^{\sharp}$ solves \eqref{shspsi2} because $\Psi$ solves \eqref{shspsi1}.
We now discuss two estimates that are quite representative for the proof: the first inequality in {\bf \small (RE12)} for {\bf \small (Sys1)}, and the second inequality in {\bf \small (RE13a)} for {\bf \small (Sys1)}. Recall Proposition \ref{generik1} Part 1 and observe that $$E^R_{\mathcal{O}(\xi_0,b,t)}\{f\}(t) \lesssim_R \big(\SUP^{(R)}_{\mathcal{O}(\xi_0,b,t)}\{f\}(t)\big)^2$$
Let $t \in (t_0,T)$. The  {\bf \small (RE12)} estimate:
$$
|t|^{2K+4}E^R_{\mathcal{O}(\xi_0,2,t)}\{\psi \,\source_1\}(t) = |t|^{2K+4}E^R_{\mathcal{O}(\xi_0,2,t)}\{\psi\, u^{-(K+2)}\boldsymbol{\mathcal{G}}_K\}(t)
\lesssim_Y \, 1
$$
The left hand side is $\leq (\smallYxx(Y))^{-2}$ if $\smallYxx(Y)> 0$ is small enough. 
The {\bf \small (RE13a)} estimate:
\begin{align*}
& |t|^2\, E^R_{\mathcal{O}(\xi_0,2,t)}\big\{H_{1n}(\iv,\Xi)-\asympt{Q}_{1n}\big\}(t)\\
& = |t|^2\, E^R_{\mathcal{O}(\xi_0,2,t)}\big\{\psi \big(Q_{1n}(\iv,0) -\asympt{Q}_{1n}\big) + \psi\, \overset{\bullet}{Q}_{1n}(\iv)\Xi \big\}(t)\\
& \lesssim_R |t|^2\, E^R_{\mathcal{O}(\xi_0,2,t)}\big\{\psi\big(\tfrac{1}{u}\mathcal{H} + \tfrac{1}{u}\boldsymbol{\mathcal{H}} + \tfrac{1}{u}\mathcal{J}\, \Xi\big) \big\}(t)
+ |t|^2\, E^R_{\mathcal{O}(\xi_0,2,t)}\big\{\psi\,\tfrac{1}{u^2}\mathcal{G}_K\big\}(t)
\end{align*}
The first term is $\lesssim_R (\smallRxx(R)\smallSHSprime)^2$, the second is $\lesssim_Y (\smallYxx(Y)\smallSHSprime)^2$. Here, we use condition (iv) in the proposition. The result is $\leq (\smallSHSprime)^2$ if $\smallRxx(R)$ and $\smallYxx(Y)$ are small enough.\\
\noindent \emph{Proof of Part 3, Conclusion 3.}
Let $\kappa$ be the right hand side of the inequality in Conclusion 3. 
Using Proposition \ref{generik1} Part 2b, one shows that on $\mathcal{V} = \mathcal{V}_1$ the estimates $|{\Xi^{\sharp}}_1|, |{\Xi^{\sharp}}_3|, |\PsiSharpOne_1|, |\PsiSharpOne_2|, |\PsiSharpTwo_1|, |\PsiSharpTwo_2|, |\PsiSharpTwo_3| \hskip-1pt\lesssim_Y\hskip-1pt \kappa |t|^{-1}$ and
$|\PsiSharpOne_4|, |\PsiSharpOne_5|, |\PsiSharpTwo_4|, |\PsiSharpTwo_7|, |\PsiSharpTwo_8| \hskip-1pt\lesssim_Y \hskip-1pt \kappa$ 
hold. The last inequality can be improved to $\lesssim_Y \kappa|t|^{-1}$ for $t \in (t_0+1,T)$ by integrating the equations in Proposition \ref{generik1} Part 2c along the vector field $L$, using $\Xi^{\sharp}|_{q^3 < 1/2} \equiv 0$. \qed
\end{proof}
\begin{theorem} \label{mainenergy} Let $(\xi,\u{u},u)$ be Cartesian coordinates on the truncated unit strip
$$\STRIP(1,\lambda)
\,=\,   \R^2 \times (0,1) \times (-\infty\,,\,-\lambda^{-1}\,)
\qquad 
\lambda > 0$$
Suppose $0 < |\mathfrak A| \leq |a|$. Assume the functions
$\text{\bf \small DATA}^{\sigma}(\xi,\u{u}):\, \R^2 \times (0,\infty) \to \C$
$\sigma \in \{-,+\}$, are $C^{\infty}$, vanish when $\u{u} < \tfrac{1}{2}$, and are $\FLIP$-compatible
\begin{equation}\label{mainenergyflip}
\text{\bf \small DATA}^{\sigma} = \FLIP_{\frac{a}{\mathfrak A}}\cdot \text{\bf \small DATA}^{-\sigma}\qquad \xi \neq 0
\end{equation}
see Definitions \ref{def:poleflip}, \ref{def:flipdata}.
Let $[\,\Psi^{\sigma}\,]$ be the formal power series solution corresponding to $\text{\bf \small DATA}^{\sigma}$ as in Proposition \ref{prop:fps}.
Fix an integer $R \geq 4$ and $\epsilon \in (0,\tfrac{1}{2})$.
Let the constant $\smallFinal = \smallFinal(B) \in (0,1)$ be sufficiently small depending only on $B= (R,\epsilon)$. Suppose
\begin{equation}\label{mainenergysmallness}
0 < |\anisotropic|\leq |a| \leq \smallFinal \qquad
\textstyle\max_{\sigma \in \{-,+\}}\, \|\text{\bf \small DATA}^{\sigma}\|_{C^{R+4}(\mathcal{C}(a,\mathfrak A,2))} \,\leq\, \smallFinal
\end{equation}
Here,  $\mathcal{C}(a,\mathfrak A,b) = D_{4|\frac{a}{\anisotropic}|}(0)\times (0,b)$ for each $b > 0$.
Fix an integer $K \geq 0$ and set 
$$C = \big(R,\; \epsilon,\; K,\; \textstyle\max_{\sigma \in \{-,+\}}\, \|\text{\bf \small DATA}^{\sigma}\|_{C^{R+2K+6}(\mathcal{C}(a,\,\mathfrak A,\,2))}\big)$$
Let the constant $\smallFinalPrime = \smallFinalPrime(C) \in (0,1)$ be sufficiently small depending only on $C$. Then:
{\bf Existence:}\\
\emph{Part 1:} There exists a pair of fields
$\Psi^-,\Psi^+ \in C^1(\,\overline{\STRIP(1,\smallFinalPrime)},\mathcal{R})$
with
$\Psi^{\sigma} = \FLIP_{\frac{a}{\mathfrak A}}\cdot \Psi^{-\sigma}$ ($\xi \neq 0$)
which are both solutions to \eqref{shspsi1}, vanish when $\u{u} < \tfrac{1}{2}$, and satisfy
\begin{equation} \label{zztrt1}
\textstyle\lim_{u\to -\infty}\, |u|^{\epsilon}\; \textstyle\sup_{\alpha \in \N_0^4:\; 
|\alpha|\leq 1}\;
\big\|\,\p^{\alpha}\big(\Psi^{\sigma} - \Psi^{\sigma}(0)\big)(\,\cdot\,,u)\,\big\|_{C^0(\mathcal{C}(a,\mathfrak A,1))} = 0
\end{equation}
\emph{Part 2:} The constraint fields $(\Psi^-)^{\sharp}$, $(\Psi^+)^{\sharp}$ associated to the fields in Part 1 vanish, and
$$(\Phi^-,\Phi^+) = \big(\Minkowski_{a,{\mathfrak A}} + u^{-M}\Psi^-,\;\Minkowski_{a,{\mathfrak A}} + u^{-M}\Psi^+\big)$$
are a pair of vacuum fields with $\Phi^{\sigma} = \FLIP_{\frac{a}{\mathfrak A}}\cdot \Phi^{-\sigma}$ ($\xi \neq 0$)
, see Definitions \ref{def:vacvacvac}, \ref{def:poleflip}.\\
\emph{Part 3:} The fields in Part 1 are actually in $C^{R-3}(\,\overline{\STRIP(1,\smallFinalPrime)},\mathcal{R})$. Moreover
\begin{equation} \label{zztrt2}
\sup_{u < -\smallFinalPrime^{-1}} |u|^{K+1}\,\sup_{\substack{\alpha \in \N_0^4  \\ |\alpha| \leq R-3}}\Big\|\p^{\alpha}\Big(\Psi^{\sigma}(\,\cdot\,,u) - \sum_{k=0}^{K}\frac{\Psi^{\sigma}(k)(\,\cdot\,)}{u^k}\Big)\Big\|_{C^0{\textstyle{(}}\mathcal{C}(a,\mathfrak A,1){\textstyle{)}}} \;\leq \;\frac{1}{\smallFinalPrime}
\end{equation}
{\bf Uniqueness:} Assume $(\Psi^-,\Psi^+)$ and $(\widetilde{\Psi}^-,\widetilde{\Psi}^+)$ have all the properties listed in Part 1. Then they coincide on $\STRIP(1,\frac{\smallFinalPrime}{1+\smallFinalPrime}) \subset \STRIP(1,\smallFinalPrime)$.
\end{theorem}
\begin{proof} Theorem  \ref{mainenergy} is formulated in terms of $x = (\xi,\u{u},u)$. The proof uses $\iv = (t,\xi,\u{u})$. Recall $t = u + \u{u}$. We use {\bf \small (S1)} to {\bf \small (S11)} and \secs{1} to \secs{5}, where {\bf \small DATA} in {\bf \small (S2)} is identified with either one of $\text{\bf \small DATA}^{\sigma}$, $\sigma = -,\, +$. The smallness condition $\smallFinal < 10^{-3}$ ensures that $a$, ${\mathfrak A}$ satisfy {\bf \small (S1)}.\\
In this entire proof, $\smallEE(\,\cdot\,)$ and $\bigEE(\,\cdot\,)$
are as in Proposition \ref{prop:fdsdsfkjhdskhds}, and $\smallRxx(\,\cdot\,)$ and $\smallYxx(\,\cdot\,)$ are as in Proposition \ref{prop:MAINNEW}. Furthermore, $\bigSobolev(R) > 1$ is the constant of proportionality in the Sobolev inequality \eqref{eq:sobolev}, for all $(\xi_0,b,t) \in \R^2 \times [1,2] \times (-\infty,-1000)$.
\vskip 1mm

\noindent \emph{The smallness condition on $\smallFinal$.} Set $J_0 = \epsilon$, with $\epsilon$ as in Theorem \ref{mainenergy}, and
\begin{align*}
X & = (R,J_0,|\asympt{Q}_m|) &
X^{\ast} & = (0,J_0,|\asympt{Q}_m|) &
\widehat{X} & = (0,J_0,|\widehat{\asympt{Q}}_m|)
\end{align*}
with $m=1,2,3$. See {\bf \small (S7)}, \secs{4}. Set $\smallSHSprime(B) = \tfrac{1}{2}\min\{\smallEE(X),\smallEE(X^{\ast}), \smallEE(\widehat{X})\} \in (0,1)$.
The single smallness condition for $\smallFinal$ in this proof is
$\smallFinal < \min \{10^{-3}, \smallRxx(R) \smallSHSprime(B)\}$.
Every application of Proposition \ref{prop:MAINNEW} in this proof uses $\smallSHSprime = \smallSHSprime(B)$.
\vskip 1mm

\noindent \emph{The first smallness condition on $\smallFinalPrime$.} Set
\begin{align}
\notag Y & = \big(R,K,\textstyle\max_{\sigma \in \{-,+\}}\; \|\text{\bf \small DATA}^\sigma\|_{C^{R+2K+6}(\mathcal{Q})}\big)\\
\label{Tchoice} \b{T}(C) & = -1 -\; \max \bigg\{\,1000,\;\; \frac{1}{\smallYxx(Y)\smallSHSprime(B)},\;\;  \bigg(\frac{4\, \bigSobolev(R)\,(\bigEE(X)+1) }{\smallSHSprime(B) \smallRxx(R) \smallYxx(Y)}\bigg)^{\frac{1}{K+1}}\bigg\}
\end{align}
Observe that $\mathcal{C}(a,\mathfrak A,2) = \mathcal{Q}$ where $\mathcal{Q}$ is defined in {\bf \small (S9)}. Thus, $Y$ depends only on $C$, and so does the right hand side of \eqref{Tchoice}. We impose
$\smallFinalPrime < 1/(|\b{T}(C)| + 2)$.
There will be one more smallness condition on $\smallFinalPrime$, later in the proof.
\vskip 1mm

\noindent \emph{Notation:} System \eqref{shsxi} corresponding to $\text{\bf \small DATA}^{\sigma}$ will be denoted $\text{\eqref{shsxi}}^{\sigma}$. We sometimes suppress the superscript and write {\bf \small DATA}. We abbreviate $\FLIP_{\frac{a}{\mathfrak A}}$ by $\FLIP$. We have  $$\FLIP\cdot (\Minkowski_{a,\mathfrak A} + u^{-M}\Psi) = \Minkowski_{a,\mathfrak A} + u^{-M}\FLIP\cdot \Psi
\quad \text{and}\quad \FLIP\cdot \Psi_K^{\sigma} = \Psi_K^{-\sigma}$$
Therefore, if $\Xi$ solves $\text{\eqref{shsxi}}^{\sigma}$, then $\pi^{-1} \FLIP\cdot(\pi \Xi)$ solves $\text{\eqref{shsxi}}^{-\sigma}$.
The permutation $\pi$ is defined in $\text{\bf \small (S3)}$. We abbreviate $\pi^{-1} \FLIP\cdot(\pi \Xi)$ by $\FLIP\cdot \Xi$.
\vskip 1mm

Step 1 through Step 10 below build on each other:
\vskip 1mm

\noindent \emph{Step 1.} The assumptions of Proposition \ref{prop:MAINNEW} up to and including \eqref{prop:MAINNEW:a1} are satisfied for all $T \leq \b{T}(C)$,  if $\text{\bf \small DATA} = \text{\bf \small DATA}^{\sigma}$ for $\sigma = -,\, +$.
\vskip 1mm

\noindent \emph{Step 2.} For all $t_0 < \b{T}(C)$ there is
 $\Xi_{t_0} \in C^{\infty}([t_0,t_1(t_0))\times \R^3,\pi^{-1}\range)$
 with $t_1(t_0) \in (t_0,\b{T}(C)]$ and $\b{M}(\iv,\Xi_{t_0})\Xi_{t_0} = h(\iv,\Xi_{t_0})$ and $\Xi_{t_0}(t_0,\,\cdot\,) \equiv 0$
and $\Xi_{t_0}|_{\iv^3 < 1/2} \equiv 0$,
so that $t_1(t_0) \neq \b{T}(C)$ implies $\text{({\bf\small Break})}_1$ or $\text{({\bf\small Break})}_2$ in Proposition \ref{exexex}.
\vskip 1mm

\noindent \emph{Step 3.} $t_1(t_0) = \b{T}(C)$, for all $t_0 < \b{T}(C)$, and
for all $(\tau,\xi_0) \in [t_0,\b{T}(C)) \times \R^2$:\\
\begin{equation}\label{eq:fkhjkhhgkhkfhgf}
\sqrt{E^R_{\mathcal{O}(\xi_0,2,\tau)}\{\Xi_{t_0}\}(\tau)} \leq  \frac{2\bigEE(X)}{\smallYxx(Y) |\tau|^{K+1}}
 \leq \frac{2 \bigEE(X)}{\smallYxx(Y) |\b{T}(C)|^{K+1}}
 \leq   \frac{\smallRxx(R) \smallSHSprime(B) }{2\bigSobolev(R)}
\end{equation}
For steps 4 and 5, introduce a new field, the restriction of $\Xi_{t_0}$ to $[t_0,\b{T}(C))\times \mathcal{W}$, with $$\mathcal{W} = D_{\frac{5}{2}|\frac{a}{\anisotropic}|}(0) \times (0,1)\;\subset\; \R^3$$
Henceforth, $\Xi_{t_0}$ denotes this new field. We have $\Xi_{t_0} \in C^{\infty}([t_0,\b{T}(C))\times \overline{\mathcal{W}},\pi^{-1}\range)$.
\vskip 1mm

\noindent \emph{Step 4.}  $\Xi_{t_0}$ is a solution to $\text{\eqref{shsxi}}$, for all $t_0 < \b{T}(C)$.
\vskip 1mm

\noindent \emph{Step 5.} For each $t_0 < \b{T}(C)$ and $\sigma \in \{-,+\}$, let $\Xi^{\sigma}_{t_0}$ be the solution to $\text{\eqref{shsxi}}^{\sigma}$. Let
\begin{align*}
\mathcal{Y}(t_0) & = [t_0,\b{T}(C)) \times \big\{\tfrac{2}{5}|\tfrac{a}{\anisotropic}| < |\xi| < \tfrac{5}{2}|\tfrac{a}{\anisotropic}|\big\}\times (0,1)\\
\mathcal{Z}(t_0) & = \textstyle\bigcup_{\tau \in [t_0,\b{T}(C))} \;\;\;\bigcup_{|\frac{a}{\anisotropic}|\leq |\xi_0| < 2|\frac{a}{\anisotropic}|} \;\;\{\tau\}\times \mathcal{O}(\xi_0,1,\tau)\;\;\subset \;\;\mathcal{Y}(t_0)
\end{align*}
Then $\Xi_{t_0}^\sigma$ and $\FLIP\cdot \Xi_{t_0}^{-\sigma}$ are both defined on $\mathcal{Y}(t_0)$, and
coincide on $\mathcal{Z}(t_0) \subset \mathcal{Y}(t_0)$.
\vskip 1mm

\noindent For the remaining steps, introduce a new field in $C^{\infty}([t_0,\b{T}(C)) \times \R^2 \times [0,1],\pi^{-1}\range)$: \begin{align} \label{newfield} \iv & \mapsto \begin{cases}
\Xi_{t_0}^{\sigma}(\iv) & \text {if $|\xi| < 2|\frac{a}{\anisotropic}|$}\\
\FLIP\cdot \Xi_{t_0}^{-\sigma}(\iv) & \text{if $|\xi| > \frac{1}{2}|\frac{a}{\anisotropic}|$}\\
\end{cases} \end{align}
It is well defined by Step 5. Henceforth, $\Xi_{t_0}^{\sigma}$ denotes this new field.
\vskip 1mm

\noindent \emph{Step 6.} For each $t_0 < \b{T}(C)$ and $\sigma \in \{-,+\}$, we have
$\Xi_{t_0}^{\sigma} = \FLIP\cdot \Xi_{t_0}^{-\sigma}$ (when $\xi \neq 0$) and $\Xi_{t_0}^{\sigma}|_{\iv^3 < 1/2} \equiv 0$.
Moreover, $\Xi_{t_0}^{\sigma}$ solves $\text{\eqref{shsxi}}^{\sigma}$ on its entire domain of definition, and
\begin{subequations}
\begin{align*}
\textstyle\sup_{|\xi_0| < 2|\frac{a}{\anisotropic}|}\;\sup_{\tau \in (t_0,\b{T}(C))} \SUP^{(R-2)}_{\mathcal{O}(\xi_0,1,\tau)}\{\Xi_{t_0}^{\sigma}\}(\tau) & \leq \tfrac{1}{2}\, \smallRxx(R)\smallSHSprime(B) \\
\textstyle \sup_{|\xi_0| < 2|\frac{a}{\anisotropic}|}\;\sup_{\tau \in (t_0,\b{T}(C))} |\tau|^{K+1}\,\SUP^{(R-2)}_{\mathcal{O}(\xi_0,1,\tau)}\{\Xi_{t_0}^{\sigma}\}(\tau) & \leq \frac{2\bigEE(X) \bigSobolev(R)}{\smallYxx(Y)}
\end{align*}
\end{subequations}
\noindent \emph{Step 7.} For each $t_0 < \b{T}(C)-1$ and $\sigma \in \{-,+\}$,
\begin{equation*}
\textstyle\sup_{|\xi_0| < 2|\frac{a}{\anisotropic}|}\;\sup_{\tau \in (t_0+1,\b{T}(C))}\; E^0_{\mathcal{O}(\xi_0,1,\tau)}\big\{\big(\Xi_{t_0}^{\sigma}\big)^{\sharp}\big\}(\tau) \; \lesssim_{(Y,J_0)}\; |t_0|^{-1}
\end{equation*}
\noindent \emph{Step 8.} For all $t_1 \leq t_2 < \b{T}(C)$ and $\sigma \in \{-,+\}$,
$$\textstyle\sup_{|\xi_0| < 2|\frac{a}{\anisotropic}|}\;\sup_{\tau \in (t_2,\b{T}(C))} 
 E^0_{\mathcal{O}(\xi_0,1,\tau)}\{\Xi_{t_2}^{\sigma} - \Xi_{t_1}^{\sigma}\}(\tau) \; \lesssim_{(Y,J_0)}\; |t_2|^{-1}$$
\noindent \emph{Step 9.} There is a pair of solutions $\Xi^{\sigma} \in C^{R-3}((-\infty,\b{T}(C))\times \R^2 \times [0,1], \pi^{-1}\range)$ to $\text{\eqref{shsxi}}^{\sigma}$ and $(\Xi^{\sigma})^{\sharp} \equiv 0$, with $\Xi^{\sigma} = \FLIP\cdot \Xi^{-\sigma}$ (when $\xi \neq 0$) and $\Xi^{\sigma}|_{\iv^3 < 1/2} \equiv 0$ and
\begin{subequations}
\begin{align*}
\textstyle\sup_{|\xi_0| < 2|\frac{a}{\anisotropic}|}\;\sup_{\tau \in (-\infty,\b{T}(C))} \SUP^{(R-3)}_{\mathcal{O}(\xi_0,1,\tau)}\{\Xi^{\sigma}\}(\tau) & \,\leq\, \tfrac{1}{2}\, \smallRxx(R)\smallSHSprime(B) \\
\textstyle\sup_{|\xi_0| < 2|\frac{a}{\anisotropic}|}\;\sup_{\tau \in (-\infty,\b{T}(C))} |\tau|^{K+1}\,\SUP^{(R-3)}_{\mathcal{O}(\xi_0,1,\tau)}\{\Xi^{\sigma}\}(\tau) & \,\lesssim_{(Y,J_0)} \,1\\
\textstyle\sup_{\tau \in (-\infty,\b{T}(C))} |\tau|^{K+1}\,\SUP^{(R-3)}_{D_{4|\frac{a}{\anisotropic}|}(0)\times (0,1)} \{\Xi^{\sigma}\}(\tau) & \,\lesssim_{(Y,J_0)} \,1\\
\textstyle\text{$\tfrac{1}{2} \leq \b{B}^0(\iv,\Xi^{\sigma}(\iv)) \leq 2$ for all } \iv \in (-\infty,\b{T}(C)) &\times \R^2 \times (0,1)
\end{align*}
\end{subequations}
\noindent \emph{Step 10.} Suppose $\widetilde{\Xi}^{\sigma} \in C^1((-\infty,t_1] \times \R^2 \times [0,1], \pi^{-1}\range)$ with $t_1 < \b{T}(C)$ 
is a pair of solutions to $\text{\eqref{shsxi}}^{\sigma}$,
with  $\widetilde{\Xi}^{\sigma} = \FLIP\cdot \widetilde{\Xi}^{-\sigma}$ (when $\xi \neq 0$) and $\widetilde{\Xi}^{\sigma}|_{\iv^3 < 1/2} \equiv 0$ and 
\begin{equation}\label{rrft}
\textstyle\lim_{\tau \to -\infty}\, \sup_{|\xi_0| < 2|\frac{a}{\anisotropic}|}\, |\tau|^{J_0}\, \SUP^{(1)}_{\mathcal{O}(\xi_0,1,\tau)}\{\widetilde{\Xi}^{\sigma}\}(\tau) \, = \, 0
\end{equation}
Then $\widetilde{\Xi}^{\sigma}$ coincides (on its domain) with $\Xi^{\sigma}$ in Step 9.
\vskip 2mm

\noindent \emph{Proof of Step 1 through Step 10.}
To prove \emph{Step 2} use Proposition \ref{prop:MAINNEW} Part 1 with $T = \b{T}(C)$.
To prove \emph{Step 3}, introduce for each $\xi_0 \in \R^2$ and $t_0 < \b{T}(C)$ the set
$$\mathcal{J}(\xi_0,t_0) = \big\{t \in [t_0,t_1(t_0))\;\big|\; \textstyle\sup_{\tau \in [t_0,t]}\, E^R_{\mathcal{O}(\xi_0,2,\tau)}\{\Xi_{t_0}\}(\tau) \leq \big(\smallRxx(R)\smallSHSprime(B)\big)^2\big\}$$
It is an interval and closed as a subset of $[t_0,t_1(t_0))$.
Proposition \ref{prop:MAINNEW} Part 2 and \eqref{Tchoice} imply that $[t_0,t^{\ast}] \in \mathcal{J}(\xi_0,t_0)$ for some $t^{\ast} > t_0$. For every such $t^{\ast}$, the assumptions of Proposition \ref{prop:MAINNEW} Part 3 for  {\bf \small (Sys1)} are satisfied with $T = t^{\ast}$. By Conclusion 2 and
$$K+1 \geq J_0
\qquad
\smallSHSprime(B) \leq \smallEE(X)
\qquad
t^{\ast} < \b{T}(C) < -1/\smallSHSprime(B) \leq -1/\smallEE(X)$$
we can apply Proposition \ref{prop:fdsdsfkjhdskhds}. Together with Proposition \ref{prop:MAINNEW} Part 2, we obtain \eqref{eq:fkhjkhhgkhkfhgf} 
for all $\tau \in (t_0,t^{\ast})$. Recall $\bigSobolev(R) > 1$. 
Conclude that $\mathcal{J}(\xi_0,t_0)$ is also open as a subset of $[t_0,t_1(t_0))$, and $\mathcal{J}(\xi_0,t_0) = [t_0,t_1(t_0))$.
See Proposition \ref{prop:MAINNEW} Part 3 Conclusion 1.
Now $\text{({\bf\small Break})}_1$ and $\text{({\bf\small Break})}_2$ in Step 2 are excluded,
and $t_1(t_0) = \b{T}(C)$. 
To prove \emph{Step 5}, use a finite speed of propagation argument.
Namely, use Proposition \ref{prop:MAINNEW} Part 3 for {\bf \small (Sys2)} to show that
if $|\tfrac{a}{\anisotropic}|\leq |\xi_0| < 2|\tfrac{a}{\anisotropic}|$, then
$\Xi_{t_0}^{\sigma} = \FLIP\cdot \Xi_{t_0}^{-\sigma}$ on 
$\bigcup_{\tau \in [t_0,\b{T}(C))} \{\tau\}\times \mathcal{O}(\xi_0,1,\tau) \subset \mathcal{Z}(t_0)$.
To prove \emph{Step 7}, use Proposition  \ref{prop:MAINNEW} Part 3 for {\bf \small (Sys3)}, Conclusions 3 and 2.
To prove \emph{Step 8}, use Proposition  \ref{prop:MAINNEW} Part 3 for {\bf \small (Sys2)}, Conclusion 2.
To prove \emph{Step 9}, introduce
for every $\beta \in (0,1)$ the compact set $$\mathcal{X}_{\beta} = [\b{T}(C) - \beta^{-1},\b{T}(C)-\beta] \times \overline{D_{2|\frac{a}{\anisotropic}|}(0)} \times [0,1]$$
Set $\Xi^{\sigma}|_{\mathcal{X}_{\beta}} = \text{$L^2$-}\lim_{t \to -\infty} \Xi_{t}^{\sigma}$, see Step 8. By Step 6 and the Arzela Ascoli theorem, for each $\beta\in (0,1)$, we have
$\Xi^{\sigma}|_{\mathcal{X}_{\beta}} \in C^{R-3}(\mathcal{X}_{\beta},\pi^{-1}\range)$, and
$\Xi^{\sigma}_{t_n} \to \Xi^{\sigma}$ in $C^{R-3}(\mathcal{X}_{\beta},\pi^{-1}\range)$ for a sequence $t_n \to -\infty$. Recall $R - 3 \geq 1$. By construction,  $\Xi^{\sigma} = \FLIP\cdot \Xi^{-\sigma}$ on  $\mathcal{X}_{\beta}\cap (\FLIP\cdot \mathcal{X}_{\beta})$ for all $\beta \in (0,1)$. 
Hence, there is a unique pair of $\FLIP$-compatible extensions
$\Xi^{\sigma} \in C^{R-3}((-\infty,\b{T}(C))\times \R^2 \times [0,1], \pi^{-1}\range)$.
Step 7 implies $(\Xi^{\sigma})^{\sharp} \equiv 0$. Step 6 implies the first two estimates in Step 9. The third follows from the second, by using $\FLIP$-compatibility.
The fourth estimate in Step 9 follows from $\FLIP$-compatibility and $\text{\bf \small (RE3)}$ in Proposition \ref{prop:fdsdsfkjhdskhds}, see Conclusion 2 of Proposition  \ref{prop:MAINNEW} Part 3 for {\bf \small (Sys2)}.
For \emph{Step 10}, first use Proposition \ref{prop:MAINNEW} Part 3 for {\bf \small (Sys2)}, Conclusion 2, to show that $E^0_{\mathcal{O}(\xi_0,1,\tau)}\{\Xi^{\sigma} - \widetilde{\Xi}^{\sigma}\} (\tau) = 0$ for all $|\xi_0| < 2|\tfrac{a}{\mathfrak A}|$ and all sufficiently negative $\tau < t_1$. Show that this implies \emph{Step 10}.
\vskip 1mm

\noindent
To complete the proof of Theorem \ref{mainenergy}, observe that the $x$-set $\STRIP(1,\smallFinalPrime)$ is contained in the $\iv$-set $(-\infty,\b{T}(C)) \times \R^2 \times (0,1)$. Set $\Psi^{\sigma} = \Psi_K^{\sigma} + \pi\,\Xi^{\sigma}$ with $\Xi^{\sigma}$ as in Step 9. Equations \eqref{zztrt1}, \eqref{zztrt2} follow from the definition of $\Psi_K^{\sigma}$, and $\|\Psi^{\sigma}(k+1)\|_{C^{R+1}(\mathcal{C}(a,\mathfrak A,2))}\lesssim_Y 1$ for $0 \leq k \leq K$, if $\smallFinalPrime$ is sufficiently small depending only on $(Y,J_0)$. 
We have $\tfrac{1}{2} \leq e_3 \leq 2$ on $\STRIP(1,\smallFinalPrime)$ by the last estimate in Step 9, where $e_3$ is a component of $\Phi^{\sigma} = (e,\gamma,w)$.
Equation $L(\PhiOne_1\overline{\PhiOne}_2 - \overline{\PhiOne}_1\PhiOne_2) = -2\PhiTwo_2\,(\PhiOne_1\overline{\PhiOne}_2 - \overline{\PhiOne}_1\PhiOne_2)$
and $\Im \big(\PhiOne_1\overline{\PhiOne}_2\big)|_{\u{u} < 1/2} < 0$ imply  $\Im \big(\PhiOne_1\overline{\PhiOne}_2\big) < 0$ on $\STRIP(1,\smallFinalPrime)$. Existence in Theorem \ref{mainenergy} now follows from Proposition \ref{prop,fdldfkf}. Uniqueness follows from Step 10. \qed
\end{proof}
\section{Conclusions} \label{sec:conclusionsXX}
\begin{proposition}[Asymptotic expansion] \label{wjseXX} Let $\Phi^{\sigma} = \Minkowski_{a,\mathfrak A} + u^{-M}\Psi^{\sigma}$ be the pair 
of vacuum fields of Theorem \ref{mainenergy} for $K = 0$.
For each $L \geq 0$,
$$\textstyle\sup_{|\alpha| \leq R-3}\,\big\|\p^{\alpha}\big(\Psi^{\sigma}(\,\cdot\,,u) - \textstyle\sum_{k=0}^{L}\big(\tfrac{1}{u}\big)^k \Psi^{\sigma}(k)(\,\cdot\,)\big)\,\big\|_{C^0(\mathcal{C}(a,\mathfrak A,1))}  = \mathcal{O}\big(|u|^{-L-1}\big)$$
as $u\to -\infty$, with $\alpha \in \N_0^4$. In other words, the formal power series $[\, \Psi^{\sigma}\,]$ is an asymptotic expansion for $\Psi^{\sigma}$.
\end{proposition}
\begin{proof} The conditions on the data $a$, $\mathfrak A$, $\text{\bf \small DATA}^{\sigma}$, $R$, $\epsilon$ in Theorem \ref{mainenergy} are independent of $K$. Therefore,
Theorem \ref{mainenergy} can be applied with the same data for all $K \geq 0$. \qed
\end{proof}
\begin{remark} \label{fjdhgjdsjgdjsjsdgj}
The smallness assumptions of Theorem \ref{mainenergy}, with $a = \anisotropic$ and $\epsilon = \tfrac{1}{4}$ and $R = 4$ and $K = 0$, are 
$\textstyle\max_{\sigma \in \{-,+\}}\, \|\text{\bf \small DATA}^{\sigma}\|_{C^{8}(D_4(0)\times (0,2))} \leq \smallFinal$
and $0 < |\anisotropic|\leq  \smallFinal$
for a universal constant $\smallFinal \in (0,1)$.
The domain $\STRIP(1,\smallFinalPrime)$ of the vacuum field in Theorem \ref{mainenergy} depends only on $\textstyle\max_{\sigma \in \{-,+\}}\, \|\text{\bf \small DATA}^{\sigma}\|_{C^{10}(D_4(0)\times (0,2))}$.
\end{remark}
\begin{proposition}[Trapped sphere]
Let $\Phi^{\sigma}$ be the pair 
of vacuum fields of Theorem \ref{mainenergy}, in the context of Remark \ref{fjdhgjdsjgdjsjsdgj}. Suppose there is a $\u{u}_1 \in (0,1)$ such that $\Lambda > 0$, where
$$\Lambda = \textstyle\min_{\sigma \in \{-,+\}}\; \inf_{\xi \in D_4(0)} \; \int_0^{\u{u}_1}\dd \u{u}\; |\text{\bf \small DATA}^{\sigma}(\xi,\u{u})|^2$$
If $0 < |\anisotropic| <  \tfrac{1}{4}\sqrt{\mathbf{c}} \min\{1,\Lambda\}$, then
$u_1=-\tfrac{1}{2}\Lambda\,\anisotropic^{-2} \in (-\infty,-\mathbf{c}^{-1})$, and the intersection of $\u{u}=\u{u}_1$ and $u=u_1$  is a trapped sphere.
\end{proposition}
\begin{proof}
Suppress $\sigma\in \{-,+\}$.
By Remark \ref{trappedpenrose}, one must check $\gamma_2$, $\gamma_6 < 0$ on $(\u{u},u)=(\u{u}_1,u_1)$.
We only discuss the more subtle $\gamma_2 < 0$.
Recall $\gamma_2 = \anisotropic^2 (\anisotropic^2 \u{u} - u)^{-1} + u^{-2}\omega_2$.
By
\eqref{zztrt2}, $|\omega_2 - \omega_2(0)| \leq \mathbf{c}^{-1}|u|^{-1}$ when
$(\xi,\u{u},u) \in D_4(0) \times (0,1) \times (-\infty,-\mathbf{c}^{-1})$.
By Proposition \ref{prop:fps}, $\omega_2(0)(\xi,\u{u}) = - \int_0^{\u{u}} \dd \u{u}'\, |\text{\bf \small DATA}(\xi,\u{u}')|^2$. If
$(\xi,\u{u},u) \in D_4(0) \times \{\u{u}_1\} \times (-\infty,-\mathbf{c}^{-1})$
then $\omega_2(0) \leq - \Lambda$ and
$\omega_2 \leq -\Lambda + \mathbf{c}^{-1}|u|^{-1}$
and $\gamma_2 \leq 
|u|^{-1} \anisotropic^2 - |u|^{-2}\Lambda + |u|^{-3} \mathbf{c}^{-1}
$.
If $|u| = \tfrac{1}{2}\Lambda \anisotropic^{-2}$
and 
$0 < |\anisotropic| <  \tfrac{1}{4}\sqrt{\mathbf{c}} \Lambda$ then
$\gamma_2 < 0$.
\qed
\end{proof}
\section{Three Points of View}\label{subsec:tpovXXXXX}
Vacuum fields in Theorem \ref{mainenergy} are, by definition, in the
Regularized Picture R.
We use symmetry transformations (see Section \ref{sec:symmetries})
to introduce two more pictures:
\begin{center}
\hskip-25mm\text{\input{scalings.pstex_t}}
\end{center}
The transformations depend on $a$, $\anisotropic$ from
Theorem \ref{mainenergy}.
They are global conformal transformations
of the Lorentzian metric.  The Minkowski boundary $\mathcal{M}_X$, the $\xi$-disk corresponding to one hemisphere of $S^2$ in $\mathcal{M}_X$, the initial data $\text{\bf \small DATA}^{\sigma}_{X}$, and the domain of the vacuum field
$\Phi_X^{\sigma} = \mathcal{M}_X + u^{-M} \Psi_X^{\sigma}$ are, for $X = \text{R},\text{H},\text{F}$:
\begin{center}
\begin{tabular}{c||c|c|c}
 \rule{0pt}{10pt}$X$ & R & H & F \\
\hline
$\mathcal{M}_X$ \rule{0pt}{10pt}& $\Minkowski_{a,\mathfrak A}$ & $\Minkowski_{1,1}$ & $\Minkowski_{1,1}$\\
 Hemisphere (in $\mathcal{M}_X$) \rule{0pt}{10pt} & $|\xi| < |\frac{a}{\mathfrak A}|$ & $|\xi| < 1$ & $|\xi| < 1$\\
$\text{\bf \small DATA}^{\sigma}_{X}(\xi,\u{u})$
\rule{0pt}{10pt} & 
$\datafunc^{\sigma}\big(\xi,\,\u{u}\big)$ & ${\mathfrak A}^{-2}\, \datafunc^{\sigma}\big(\frac{a}{\mathfrak A}\xi,\, \u{u}\big)$ & ${\mathfrak A}^{-2}\, \datafunc^{\sigma}\big(\frac{a}{\mathfrak A}\xi,\, {\mathfrak A}^{-4} \u{u}\big)$\\
 Domain \rule{0pt}{10pt}& $\STRIP\big(1,\smallFinalPrime\big)$ & $\STRIP\big(1,\smallFinalPrime\,{\mathfrak A}^2\big)$ & $\STRIP\big({\mathfrak A}^4,\smallFinalPrime\,{\mathfrak A}^{-2}\big)$
\end{tabular}
\end{center}
Column $\text{R}$ corresponds to Theorem \ref{mainenergy}. The entry
$\text{\bf \small DATA}^{\sigma}_{\text{R}} = \eta^{\sigma}$ is by definition of $\eta^{\sigma}$. The columns $\text{H}, \text{F}$ are obtained from $\text{R}$, as indicated above, by scaling. Here,
$$\STRIP(\mu,\lambda)
\,=\,   \R^2 \times (0,\mu) \times (-\infty\,,\,-\lambda^{-1}\,)
\qquad 
\mu,\lambda > 0$$
\begin{remark} \label{rem:oooddsssXXXXXXXXXXXX}
From this perspective, Christodoulou \cite{Chr} investigates the case $a = {\mathfrak A}$
 in the picture F. His small $\delta > 0$ is to be identified  with our ${\mathfrak A}^4$.
His ``short pulse ansatz'' corresponds to $\text{\bf \small DATA}^{\sigma}_\text{F}(\xi,\u{u}) =
{\mathfrak A}^{-2} \datafunc^{\sigma}\big(\xi, {\mathfrak A}^{-4} \u{u}\big)$.
\end{remark}
\begin{remark}
In the picture H, the Minkowski boundary is always $\mathcal{M}_{1,1}$, and
the domain of the vacuum field always has $\u{u}\in (0,1)$. Therefore, different triples $(a,\anisotropic,\eta^{\sigma})$ that correspond to the same $\text{\bf \small DATA}^{\sigma}_\text{H}(\xi,\u{u})$ also correspond to the same vacuum field $\Phi_\text{H}^\sigma$ (by uniqueness in  Theorem \ref{mainenergy}). This must be taken into account when interpreting the smallness assumptions in Theorem \ref{mainenergy}.
\end{remark}

\appendix 

\section{Program for Proposition \ref{prop:ri1}} \label{app:appprog}
The Mathematica\footnote{http://www.wolfram.com/mathematica/} code below generates the equations in Proposition \ref{prop:ri1}. Notation:
{\small \verb+F[1,2]+} is ${F_1}^2$,
{\small \verb+\[CapitalGamma][1,1,2]+} is $\Gamma_{112}$,
{\small \verb+W[3,2,3,4]+} is $W_{3234}$,
{\small \verb+T[3,4,1]+} is ${T_{34}}^1$,
{\small \verb+Fr[2,W[1,2,3,4]]+} is $F_2(W_{1234})$,
{\small \verb+ctr[W,{2,3},{4,3,4},1]+} is ${\Gamma_{23}}^mW_{m434}$,
{\small \verb+ctr[W,{2,3},{4,3,4},3]+} is ${\Gamma_{23}}^mW_{43m4}$,
{\small \verb+C+} is conjugation,
{\small \verb+\[GothicCapitalA]+} is $\anisotropic$.
\vskip 6mm

{\small
\begin{verbatim}
Module[{C,\[CapitalPhi],F,\[CapitalGamma],W,T,U,V,Fr,M,E,u,
\[CapitalPsi],ebold,\[Lambda]bold,\[Rho],\[GothicCapitalA],S,\[Lambda]},

I1={1,2,3,4}; I1T[n_]:=Tuples[I1,n];
I2={{1,2},{3,1},{3,2},{4,1},{4,2},{3,4}};

Module[{P=\[CapitalPhi]},FMatrix={{P[1],P[2],0,0},{C[P[1]],C[P[2]],0,0},
{P[4],P[5],0,1},{0,0,P[3],0}};\[CapitalGamma]Matrix={{P[8]+C[P[9]],
C[P[12]],P[11],P[6],P[7],P[8]-C[P[9]]},{-P[9]-C[P[8]],P[11],P[12],P[7],
C[P[6]],-P[9]+C[P[8]]},{P[13]-C[P[13]],0,0,-P[8]+C[P[9]],P[9]-C[P[8]],
P[13]+C[P[13]]},{0,C[P[10]],P[10],0,0,0}};WMatrix={{P[16]+C[P[16]],
C[P[17]],-P[17],P[15],-C[P[15]],P[16]-C[P[16]]},{C[P[17]],C[P[18]],0,0,
-C[P[16]],-C[P[17]]},{-P[17],0,P[18],-P[16],0,-P[17]},{P[15],0,-P[16],
P[14],0,P[15]},{-C[P[15]],-C[P[16]],0,0,C[P[14]],C[P[15]]},
{P[16]-C[P[16]],-C[P[17]],-P[17],P[15],C[P[15]],P[16]+C[P[16]]}};];

Map[Set@@#&,Solve[Flatten[{Table[F[i,j],{i,I1},{j,I1}]==FMatrix,
Table[\[CapitalGamma]@@Join[{i},j],{i,I1},{j,I2}]==
\[CapitalGamma]Matrix, Table[W@@Join[i,j],{i,I2},{j,I2}]==WMatrix,
Table[\[CapitalGamma]@@a==-\[CapitalGamma]@@a[[{1,3,2}]],{a,I1T[3]}],
Table[W@@a==-W@@a[[{2,1,3,4}]]==-W@@a[[{1,2,4,3}]],{a,I1T[4]}]}],
Flatten[{Table[F@@a,{a,I1T[2]}],Table[\[CapitalGamma]@@a,{a,I1T[3]}],
Table[W@@a,{a,I1T[4]}]}]][[1]]];

ctr[b_,m_,n_,i_]:=Sum[s[[1]]*\[CapitalGamma]@@Insert[m,s[[2]],3]
*b@@Insert[n,s[[3]],i],{s,{{1,1,2},{1,2,1},{-1,3,4},{-1,4,3}}}];

T[a_,b_,m_]:=
Fr[b,F[a,m]]-Fr[a,F[b,m]]+ctr[F,{a,b},{m},1]-ctr[F,{b,a},{m},1];
U[k_,l_,a_,b_]:=
-W[k,l,a,b]+Fr[a,\[CapitalGamma][b,l,k]]-Fr[b,\[CapitalGamma][a,l,k]]
+ctr[\[CapitalGamma],{b,l},{a,k},2]-ctr[\[CapitalGamma],{a,l},{b,k},2]
-ctr[\[CapitalGamma],{a,b},{l,k},1]+ctr[\[CapitalGamma],{b,a},{l,k},1];
V[b_,j_,k_]:=Sum[s[[1]]*Module[{a=s[[2]],i=s[[3]]},Fr[i,W[a,b,j,k]]
-ctr[W,{i,a},{b,j,k},1]-ctr[W,{i,b},{a,j,k},2]-ctr[W,{i,j},{a,b,k},3]
-ctr[W,{i,k},{a,b,j},4]],{s,{{1,1,2},{1,2,1},{-1,3,4},{-1,4,3}}}];

M={2,2,2,3,3,1,2,2,2,2,2,2,3,1,2,3,4,4};
MapThread[(\[CapitalPhi][#1]=#2+Power[u,-#3]*\[CapitalPsi][#1])&,
{Range[1,18],{ebold,I*ebold,\[Rho],0,0,0,Power[\[GothicCapitalA],2],
\[Lambda]bold,C[\[Lambda]bold],0,-1,0,0,0,0,0,0,0}/\[Rho],M}];

C[a_+b_]:=C[a]+C[b]; C[a_*b_]:=C[a]*C[b];
C[Power[a_,n_Integer]]:=Power[C[a],n]; C[C[x_]]:=x; 
C[x_?NumericQ]:=Conjugate[x]; C[x:\[Rho]|u|ebold]:=x;
Fr[a_,b_*c_]:=b*Fr[a,c]+c*Fr[a,b]; Fr[_,_?NumericQ|\[GothicCapitalA]]=0;
Fr[a_,Power[b_,n_Integer]]:=n*Power[b,n-1]*Fr[a,b];
Fr[a_,b_+c_]:=Fr[a,b]+Fr[a,c]; Fr[1|2|4,u]=0; Fr[3,u]=1;
Fr[n:1|2|3|4,C[b_]]:=C[Fr[{2,1,3,4}[[n]],b]]; 
Fr[4,ebold|\[Lambda]bold]=0; Fr[1|2,\[Rho]]=0; Fr[3,\[Rho]]=-1;
Fr[4,\[Rho]]=\[CapitalPhi][3]*Power[\[GothicCapitalA],2];
\[Lambda][i_]:=Power[u,2*i];

Format[Fr[n:1|2|3|4,x_]]:=({"D",OverBar["D"],"N","L"}[[n]])[x];
Format[\[GothicCapitalA]]="\[GothicCapitalA]"; Format[u]="u";
Format[ebold]=Style["e",Bold]; Format[C[x_]]:=OverBar[x];
Format[\[Lambda]bold]=Style["\[Lambda]",Bold]; Format[S]="S";
Format[\[CapitalPsi][i_]]:=Piecewise[{{Subscript["f",i],1<=i<=5},
{Subscript["\[Omega]",i-5],6<=i<=13},{Subscript["z",i-13],14<=i<=18}}];

E={4,4,4,6,6,2,4,4,4,4,4,4,6,0,0,0,0,0};
Collect[Expand[Expand[Power[u,E-M]*{T[1,4,1],T[1,4,2],T[4,3,3],T[3,4,1],
T[3,4,2],U[1,4,4,1],U[2,4,4,1],(U[2,1,4,1]+U[4,3,4,1])/2,(U[1,2,4,2]
+U[3,4,4,2])/2,U[2,3,3,4],U[1,3,3,2],U[2,3,3,2],(U[1,2,3,4]
+U[3,4,3,4])/2,\[Lambda][1]*V[1,1,4],\[Lambda][1]*V[4,1,4]+
\[Lambda][2]*V[3,4,1],\[Lambda][2]*V[2,4,1]+\[Lambda][3]*V[1,3,2],
\[Lambda][3]*V[4,3,2]+\[Lambda][4]*V[3,2,3],\[Lambda][4]*V[2,2,3]}]
/.{\[Rho]->1/(-1/u+S/Power[u,2])}
][[{1,2,4,5,6,7,8,9,13,3,10,11,12,14,15,16,17,18}]],u]//MatrixForm
]
\end{verbatim}

\end{document}